\newcommand{\OMIT}[1]{}
   \def\@citecolor{blue}%
   \def\@urlcolor{blue}%
   \def\@linkcolor{blue}%
\def\orcidID#1{\smash{\href{http://orcid.org/#1}{\protect\raisebox{-1.25pt}{\protect\includegraphics{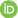}}}}}
\begin{document}
\allowdisplaybreaks

\title{Intel PMDK Transactions: Specification, Validation and Concurrency (Extended Version)}

\author{Azalea Raad\inst{1}\orcidID{0000-0002-2319-3242} \and Ori Lahav\inst{2}\orcidID{0000-0003-4305-6998} \and  John Wickerson\inst{1}\orcidID{0000-0001-6735-5533} \and \\ Piotr Balcer\inst{3} \and Brijesh Dongol\inst{4}\orcidID{0000-0003-0446-3507}}
\institute{
  Imperial College London, London, UK \\
  \and Tel Aviv University, Tel Aviv, Israel \\
  \and Intel \\
  \and University of Surrey, Guildford, UK}


\maketitle 

\begin{abstract}
  Software Transactional Memory (STM) is an extensively studied
  paradigm that provides an easy-to-use mechanism for thread safety
  and concurrency control. 
  With the recent advent of byte-addressable persistent memory, a
  natural question to ask is whether STM systems can be adapted to support
  {\em failure atomicity}. 
  In this article, we answer this question by showing how STM can be
  easily integrated with Intel's Persistent Memory Development Kit
  (PMDK) transactional library (which we refer to as \PMDKTX) to
  obtain STM systems that are both concurrent and persistent. We
  demonstrate this approach using known STM systems, \TML and \NOREC,
  which when combined with \PMDKTX result in persistent STM systems,
  referred to as \PMDKT and \PMDKN, respectively. However, it turns
  out that existing correctness criteria are insufficient for
  specifying the behaviour of \PMDKTX and our concurrent
  extensions. We therefore develop a new correctness criterion, {\em
    dynamic durable opacity}, that extends the previously defined
  notion of durable opacity with dynamic memory allocation. We provide
  a model of \PMDKTX, then show that this model satisfies dynamic
  durable opacity. Moreover, dynamic durable opacity supports
  concurrent transactions, thus we also use it to show correctness of
  both \PMDKT and \PMDKN.
\end{abstract}

\section{Introduction}
\label{sec:intro}

Persistent memory technologies (aka non-volatile memory, NVM) such
as 
Memory-Semantic SSD~\cite{samsungpressrelease} and 
XL-FLASH~\cite{techinsightsblog}, combine the durability of hard
drives with the fast and fine-grained accesses of DRAMs, with the
potential to radically change how we build fault-tolerant systems.
However, NVM also raise fundamental questions about semantics and the
applicability of standard programming models.

Among the most widely
used collections of libraries for persistent programming is Intel's
Persistent Memory Development Kit (\PMDK), which was first released in
2015~\cite{PMDK}.  One important component of \PMDK is its
transactional library, which we refer to as \PMDKTX, and which
supports generic {\em failure-atomic} programming. A programmer can
use \PMDKTX to protect against full system crashes by starting a
transaction, performing transactional reads and writes, then
committing the transaction. If a crash occurs during a transaction,
but before the commit, then upon recovery, any writes performed by the
transaction will be rolled back. If a crash occurs during the commit,
the transaction will either be rolled back or be committed
successfully, depending on how much of the commit operation has been
executed. If a crash occurs after committing, the effect of the
transaction is guaranteed to have been persisted. 

Most software transactional memory (STM) algorithms leave memory allocation
implicit, since they are generally safe under standard allocation
techniques (\eg{} {\code{malloc}). Memory that is allocated as part
of a transaction can be deallocated if the transaction is
aborted. However, in the context of persistency, memory allocation is
more subtle since transactions may be interrupted by a crash.

\begin{figure}[t!]
\centering 
\ \ \ \
\begin{minipage}[t]{0.9\columnwidth}
\begin{lstlisting}
struct loc {
  pmem::obj::p<int> value;                 (*\label{line:locval}*)
  pmem::obj::persistent_ptr<loc> next; }; (*\label{line:locnext}*)
  
struct root { pmem::obj::persistent_ptr<loc> head = nullptr; }; (*\label{line:root}*)

void post_crash(...) {(*\label{line:rec}*)
 auto pop = pmem::obj::pool<root>::open("file",...); (*\label{line:rec-pool}*)
 auto root = pop.root(); (*\label{line:rec-makeroot}*)
 pmem::obj::transaction::run(pop, [&]{              (*\label{line:rec-txbegin}*)
    auto xvalue = root->head->value;   (*\label{line:rec-getxvalue}*)
 }); }      (*\label{line:rec-txend}*)

int main(...) {
  auto pop = pmem::obj::pool<root>::open("file",...); (*\label{line:pool}*)
  auto root = pop.root(); (*\label{line:makeroot}*)
  pmem::obj::transaction::run(pop, [&]{              (*\label{line:txbegin}*)
    auto x = pmem::obj::make_persistent<loc>(); (*\label{line:x}*)
    x->value = 42;   (*\label{line:xvalue}*)
    x->next = nullptr;   (*\label{line:xnext}*)
    root->head = x;  (*\label{line:headtox}*)
  }); }                (*\label{line:txend}*)
\end{lstlisting}   
\end{minipage}
\vspace{-10pt}
\caption{C++ snippet for allocating in persistent memory  using \PMDKTX~\cite{Scargall2020}}
\label{fig:alloc}
\end{figure}

For example, consider the program in \cref{fig:alloc}.  Persistent
memory is allocated, accessed and maintained via \emph{memory pools}~\cite{Scargall2020}
(files that are memory mapped into the process address space) of a
certain type (\eg of type \code{loc} in \cref{fig:alloc}).  Due to
address space layout randomization (ASLR) in most operating systems,
the location of the pool can differ between executions and across
crashes.  As such, 
every pool has a root object used as an entry
point from which all other objects in the pool can be found. That is,
to avoid memory leaks, all objects in the pool must be reachable from
the root.  An application locates the root object using a \emph{pool
  object pointer} (POP) that is to be created with every program
invocation (\eg line \ref{line:pool}).  After locating the pool root
(line \ref{line:makeroot}), we  use a \PMDKTX transaction (lines
\ref{line:txbegin}--\ref{line:txend}) to allocate a persistent
\code{loc} object \code{x} (line \ref{line:x}) with value 42 (line
\ref{line:xvalue}) and add it to the pool (line \ref{line:headtox}).

Consider the scenario where the execution of this transaction crashes.
After recovery from the crash,  
we then execute \code{post\_crash} (line \ref{line:rec}). As before, we open the pool (line \ref{line:rec-pool}) and locate its root (line \ref{line:rec-makeroot}). 
We then use a \PMDKTX transaction to read from the \code{loc} object allocated and added at the pool head prior to the crash (line \ref{line:rec-getxvalue}). 
There are then three cases to consider: the crash may have occurred
\begin{enumerate*}[label=(\bfseries\arabic*)]
\item before the transaction started the commit process,
\item after the transaction successfully committed, or
\item while the transaction was in the process of committing.
\end{enumerate*}

In case (1), the execution of the two transactions can be depicted as
follows, where the \code{PBegin} events capture commencing the
transactions (lines \ref{line:txbegin} and \ref{line:rec-txbegin}),
\code{PAlloc(x)} denotes the persistent allocation of \code{x}
(line \ref{line:x}); \code{PWrite(x->value,42)} captures writing to
\code{x} (line \ref{line:xvalue}); and \code{PRead(root->head):x}
denotes reading from \code{x->value} and returning the value \code{x} (first part of line \ref{line:rec-getxvalue}). 
As the first transaction never reached the commit stage, its effects
(\ie allocating \code{x} and writing to it) should be invisible (\ie
rolled back), and thus the read of the second transaction effectively
reads from unallocated memory, leading to an error such as a
segmentation fault.
%
%
%
%
%
%
%
%
%

\noindent \ \ 
  \scalebox{0.85}{
  \begin{tikzpicture}
  \draw [dashed,->] (-2.5,0) -- (11.2,0);
  \draw [Bar-Bar, thick, red] (-2.4,0) -- node[above]{\tt \footnotesize PBegin} (-1.5,0);
  \draw [Bar-Bar, thick, red] (-1.3,0) -- node[above]{\tt \footnotesize PAlloc(x)} (-0.2,0);
  \draw [Bar-Bar,  thick, red] (0.2,0) -- node[below,pos=.9]{\tt \ \ \ \ \footnotesize \intab{PWrite\\[-2pt](x->value,42)}} (1,0);
  \draw [Bar-Bar,  thick, red] (1.2,0) --
  node[above]{\tt \footnotesize \intab{PWrite \\[-2pt] (x->next,...)}} (3,0);
  \draw [Bar-Bar,  thick, red] (3.4,0) --
  node[above]{\tt \footnotesize \ \intab{PWrite \\[-2pt] (root->head,x)}} (5.2,0);
  \node at (5.7,0) {\Huge\Lightning};
  \draw [Bar-Bar,  thick, blue] (6.2,0) -- node[below]{\tt \footnotesize PBegin} (6.8,0);
  \draw [Bar-Bar,  thick, blue] (7.1,0) -- node[above]{\tt \footnotesize \intab{PRead\\[-2pt](root->head):x}} (8.8,0);
  \draw [Bar-,  thick, blue] (9,0) -- node[below]{\tt \footnotesize \ \ \ \ \intab{PRead\\[-2pt](x->value)}} (10,0);
  \node at (10.5,0.2) {\small SegFault};
  \end{tikzpicture}}

In case (2), the execution of the transactions is as follows, where 
the \code{PCommit} events capture the end (successful commit) of the transactions (lines \ref{line:txend} and \ref{line:rec-txend}), the effects of the first transaction fully persist upon successful commit, and thus the read in the second transaction does not fault.

\noindent \ \ 
    \scalebox{0.85}{
  \begin{tikzpicture}
  \draw [dashed,->] (0,0) -- (13.5,0);
  \draw [Bar-Bar,  thick, red] (0.3,0) -- node[above]{\tt \footnotesize PBegin} (1.4,0);
  \draw [Bar-Bar,  thick, red] (1.6,0) -- node[above]{\tt \footnotesize PAlloc(x)} (2.8,0);
  \draw [Bar-Bar,  thick, red] (3.2,0) -- node[above]{\tt  ...} (4,0);
  \draw [Bar-Bar,  thick, red] (4.2,0) -- node[above]{\tt \footnotesize \phantom(PCommit\phantom(} (5.2,0);
  \node at (5.6,0) {\Huge\Lightning};
  \draw [Bar-Bar,  thick, blue] (6.0,0) -- node[above]{\tt \footnotesize PBegin} (7.0,0);
  \draw [Bar-Bar,  thick, blue] (7.2,0) -- node[above]{\tt \footnotesize
    \intab{PRead\\[-2pt](root->head):x}} (9.2,0);
  \draw [Bar-Bar,  thick, blue] (9.6,0) -- node[above]{\tt  \footnotesize \intab{PRead\\(x->value):42}} (11.6,0);
  \draw [Bar-Bar,  thick, blue] (12,0) -- node[above]{\tt \footnotesize PCommit} (13,0);
  \end{tikzpicture}  }

\noindent Finally, in case (3), either of the two behaviours depicted above is
possible (\ie the second transaction may either cause a segmentation
fault or read from \code{x}).

Efficient and correct memory allocation in a persistent memory setting
is challenging~(\cite[Chapter 16]{Scargall2020} and
\cite{PMDK-Alloc}). In addition to the ASLR issue mentioned above, the
allocator must guarantee failure atomicity of heap operations on
several internal data structures managed by
\PMDK. 
Therefore, \PMDK provides its own allocator that is
designed specifically to work with
\PMDKTX. 

We identify two key drawbacks of \PMDKTX as follows. In this article, we
take significant steps towards addressing both of these drawbacks.
\begin{description}[style=unboxed,leftmargin=0cm,itemsep=5pt]
\item [A)] {\bf Lack of concurrency support.} 
Unlike existing STM systems in the persistent setting \cite{kaminotx,timestone} that provide \emph{both failure atomicity} (ensuring that a transaction either commits fully or not at all in case of a crash) \emph{and isolation} (as defined by ACID properties, ensuring that the effects of incomplete transactions are invisible to concurrently executing transactions), 
\PMDKTX only provides failure atomicity and does not offer isolation in concurrent settings. In particular, naively implemented applications with racy \PMDK transactions lead to memory inconsistencies. 
This is against the spirit of STM: the primary function of STM systems is providing a concurrency control mechanism that ensures isolation. 
The current \PMDKTX implementation provides two solutions: threads either execute {\em concurrent} transactions over {\em disjoint} parts of the memory~\cite[Chapter 7]{Scargall2020}, or use user-defined {\em fine-grained locks} within a transaction to ensure memory isolation~\cite[Chapter 14]{Scargall2020}. 
However, both solutions are sub-optimal: the former enforces serial execution when transactions operate over the same part of the memory, and the latter expects too much of the user. 

\item [B)] {\bf Lack of a suitable correctness criterion.} There is no
  formal specification describing the desired behaviour of
  \PMDKTX, and hence no rigorous description or correctness proof 
  of its implementation. This undermines the
  utility of \PMDKTX in safety-critical settings and makes it impossible to
  develop formally verified applications that use \PMDKTX.
  Indeed, there is currently no correctness criterion, be it in the volatile or the persistent setting, for STM systems that provide dynamic memory allocation (a large category that includes all realistic implementations).
\end{description}

\subsection{Concurrency for \PMDKTX}


Integrating concurrency with PMDK transactions is an important end
goal for PMDK developers. The existing approach requires integration
of locks with \PMDKTX, which introduces overhead for programmers. Our
article shows that STM and PMDK can be easily combined, improving
programmability. Many other works have aimed to develop failure-atomic
and concurrent transactions (\eg{}
OneFile~\cite{DBLP:conf/dsn/RamalheteCFC19} and
Romulus~\cite{DBLP:conf/spaa/CorreiaFR18}), but none use off-the-shelf
commercially available libraries. Moreover, these other works have not
addressed correctness with the level of rigour that our article
does. In other work, popular key-value stores Memcached and Redis have
been ported to use
PMDK~\cite{DBLP:conf/asplos/0001SWWKK20,DBLP:conf/asplos/0001WZKK19};
our work paves the way for concurrent version of these applications to
be developed.  Another example is the work of Chajed et
al~\cite{DBLP:conf/osdi/ChajedTT0KZ21}, who provide a simulation-based
technique for \emph{verifying} refinement of durable filesystems,
where concurrency is handled by durable transactions.

We tackle the first drawback (A) mentioned above by developing, specifying, and validating two thread-safe versions of \PMDKTX.
Note that our primary focus here is \emph{correctness} rather than performance. 
Specifically, we aim to confirm that \PMDKTX can be combined with off-the-shelf thread-safe STM systems, and to provide a correct, thread-safe (\ie concurrent) and failure-atomic (\ie persistent) STM system.

\begin{description}[style=unboxed,leftmargin=0cm,itemsep=2pt]
\item [Contribution A: Making \PMDKTX thread-safe.]
We combine \PMDKTX with two off-the-shelf (thread-safe) STM systems, \TML~\cite{DBLP:conf/europar/DalessandroDSSS10} and
\NOREC~\cite{DBLP:conf/ppopp/DalessandroSS10}, to obtain two
new implementations, \PMDKT and \PMDKN, that support
{\em concurrent} failure-atomic transactions with dynamic memory
allocation. 
In particular, we reuse the existing concurrency control mechanisms provided by these STM systems to
ensure atomicity of write-backs, thus obtaining memory isolation even
in a multi-threaded setting. We show that it is possible to integrate
these mechanisms with \PMDKTX to additionally achieve failure
atomicity. Our approach is modular, with a clear separation of
concerns between the isolation required due to concurrency and the
atomicity required due to the possibility of system crashes. 
This shows that concurrency and failure atomicity are two
orthogonal concerns, highlighting a pathway towards a
mix-and-match approach to combining (concurrent) STM and failure-atomic transactions.
Finally, in order to provide the same interface as \PMDK, we extend both \TML
and \NOREC with an explicit operation for memory
allocation. Note that \TML and \NOREC are fundamentally different
designs. \TML performs {\em eager write-backs}: writes occur in
place, \ie when the write operation is executed. By contrast,
\NOREC performs {\em lazy write-backs}: writes are recorded in
transaction-local write-sets and write-backs occur when the
transaction commits. The application of \PMDKTX to two different STM
designs demonstrates the \emph{modularity} of our approach.
Specifically, we used both \NOREC and \TML as they are without making any adjustments to accommodate \PMDKTX. This suggests that \PMDK could similarly be combined with other existing STM systems \emph{off the shelf}. 

\end{description}

\definecolor{colorrefines}{HTML}{1b9e77}
\definecolor{coloruses}{HTML}{e7298a}
\definecolor{colorextends}{HTML}{e7298a}
\definecolor{colorfaithful}{HTML}{d95f02}
\definecolor{colorspec}{HTML}{e6f5c9}
\definecolor{colorOPspec}{HTML}{fff2ae}
\definecolor{colorcontrib}{HTML}{bbbbbb}

\tikzset{
  spec/.style={draw=black, align=center, rounded corners, fill=colorspec},
  specOP/.style={draw=black, align=center, rounded corners, fill=colorOPspec},
  impl/.style={draw=black, dashed, align=center},
  exec/.style={draw=black, align=center, rounded corners},
  refines/.style={-latex, colorrefines, every node/.style={color=colorrefines}},
  extends/.style={-latex, dashed, thick, colorextends, every node/.style={color=colorextends}},
  faithful/.style={-latex, colorfaithful, dotted, ultra thick, every node/.style={color=colorfaithful}},
  uses/.style={-latex, dashed, thick, coloruses, every node/.style={color=coloruses}},
}

\begin{figure}[t]
\scalebox{0.75}{
\begin{tikzpicture}[yscale=-1]
\node[spec] (ddOpacity) at (0,-0.5) {
dynamic 
durable opacity \\
\DDO \\
(Contribution B2) \\
\cref{sec:volatile_framework}};

\node[spec] (dOpacity) at (-5.7,0.3) {
durable opacity 
\cite{DBLP:conf/forte/BilaDDDSW20}};

\node[spec] (opacity) at (-5.7,-1.3) {
opacity 
\cite{DBLP:series/synthesis/2010Guerraoui}};

\node[specOP] (ddTMS) at (0,2) {
\DDTMS \\
(Contribution B3) \\
\cref{sec:DTMS2-full}};

\node[specOP] (dTMS2) at (-5.7,2) {
\DTMS 
\cite{DBLP:conf/forte/BilaDDDSW20}};

\node[exec] (PMDKmodel) at (0,4) {
PMDK \\
(Contribution B1) \\
\cref{sec:pmdk-implementation}};

\node[impl] (PMDKimpl) at (-5.7,4) {
implementation \\ of PMDK~\cite{PMDK}};

\node[exec] (PMDKTMLmodel) at (3.5,4) {
\PMDKT \\
(Contribution A) \\
\cref{sec:pmdkt}};

\node[exec] (PMDKNORECmodel) at (7,4) {
\PMDKN \\
(Contribution A) \\
\cref{sec:pmdkn}};


\draw[colorcontrib, very thick, rounded corners] (-3.5,-1.6) rectangle (8.7,5.1);
\node[anchor=north east, color=colorcontrib] at (8.7,-1.6) {\bf contributions of this article};

\draw[extends] (opacity) to[auto] node {extends} (dOpacity);
\draw[extends] (dOpacity) to[auto, pos=0.63] node {extends} (ddOpacity.west |- dOpacity);
\draw[refines] (dTMS2) to[auto, swap] node {refines} (dOpacity);
\draw[extends] (dTMS2) to[auto, pos=0.6] node {extends} (ddTMS);
\draw[refines] (ddTMS) to[auto, swap] node {refines (\cref{thm:soundness-ddtms-1})} (ddOpacity);
\draw[faithful] (PMDKimpl) to[auto, pos=0.6] node[align=center] {
faithful \\ model} (PMDKmodel);
\draw[refines] (PMDKmodel) to[auto, swap] node {refines (\FDR)} (ddTMS);
\draw[refines] (PMDKTMLmodel) to[bend left, auto, swap] node[inner sep=0pt] {refines (\FDR)} (ddTMS);
\draw[refines] (PMDKNORECmodel) to[bend left, auto, swap, pos=0.2] node[inner sep=0pt] {refines (\FDR)} (ddTMS);
\draw[uses] (PMDKTMLmodel) to[auto, swap] node {uses} (PMDKmodel);
\draw[uses] (PMDKNORECmodel) to[bend right=20,above, pos=0.1] node {uses} (PMDKmodel);

\node[spec, anchor=east] at (4,-0.5) {\phantom{A}};
\node[anchor=west] at (4,-0.5) {= declarative specification};
\node[specOP, anchor=east] at (4,0) {\phantom{A}};
\node[anchor=west] at (4,0) {= operational specification};
\node[exec, anchor=east] at (4,0.5) {\phantom{A}};
\node[anchor=west] at (4,0.5) {= executable abstraction};
\node[color=colorrefines,anchor=east] at (4,1) {$\rightarrow$};
\node[anchor=west] at (4,1) {= contribution B4};
\end{tikzpicture}
}
\caption{The contributions of this article and their relationships to prior work}
\label{fig:contributions}
\end{figure}
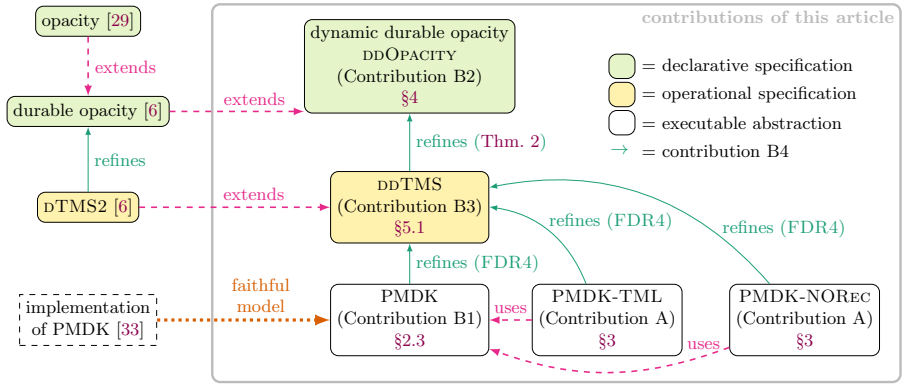
\subsection{Specifying and Validating \PMDKTX, \PMDKT and \PMDKN}

To tackle drawback (B) above, our contributions come in four
parts. Together, they provide the first formal (and rigorous)
specification of \PMDKTX and validation of its implementation.

\begin{description}[style=unboxed,leftmargin=0cm,itemsep=5pt]
\item [Contribution B1: A model of \PMDKTX]
We provide a formal specification of \PMDKTX as an abstract transition system.  
Our formal specification models almost all key components of \PMDKTX (including its redo and undo logs, as well as the interaction of these components with system crashes), with the exception of memory deallocation within \PMDKTX transactions. 
\item [Contribution B2: A correctness criterion for transactions with dynamic allocation]
  Although the literature includes several correctness criterion for transactional memory (TM), none can adequately capture \PMDKTX in that they do not account for dynamic memory allocation. 
Specifically, opacity is a standard correctness criterion for 
transactions~\cite{DBLP:series/synthesis/2010Guerraoui}, but does not account for durability guarantees (\eg failure atomicity) in case of crashes. 
Durable opacity~\cite{DBLP:conf/forte/BilaDDDSW20} extends opacity to handle crashes; 
nevertheless, durable opacity does not account for dynamic allocation
and is thus inadequate for specifying \PMDKTX. 
We develop a new correctness
condition, {\em dynamic durable opacity} (denoted \DDO), by extending durable opacity
to account for dynamic allocation. \DDO
supports not only sequential transactions such as \PMDKTX, but also concurrent
ones. To demonstrate the suitability of \DDO for concurrent and persistent (durable) transactions, later we validate our two concurrent \PMDKTX implementations (\PMDKN and \PMDKT) against \DDO.
Note that although \DDO is a straightforward extension of durable opacity, validating our implementations against \DDO is far from simple, in part due to the additional proof obligations induced by dynamic allocation. 
  
\item [Contribution B3: An operational characterisation of our
  correctness criterion]
Our aim is to show that \PMDKTX conforms to \DDO, or more precisely,
that our model of \PMDKTX refines our model of \DDO. In order to
demonstrate this refinement relation, we use a new intermediate model
called \DDTMS. While \DDO is defined declaratively (that is, a set of
axioms over execution traces that are consistent), \DDTMS is
defined operationally, which makes it conceptually closer to
our model of the \PMDKTX implementation. We prove that \DDTMS is a sound
model of \DDO (\ie every trace of \DDTMS satisfies
\DDO). \DDTMS builds upon \DTMS~\cite{DBLP:conf/forte/BilaDDDSW20},
which is an operational model for durable opacity.

\item [Contribution B4: Validation of \PMDKTX, \PMDKT and
  \PMDKN in \FDR]
  We mechanise our implementations (\PMDKTX, \PMDKT and \PMDKN) and
  specification (\DDTMS) using the CSP modelling language and
  use the \FDR model checker~\cite{fdr} to show the
  implementations are refinements of \DDTMS over both the persistent
  SC (\PSC)~\cite{DBLP:journals/pacmpl/KhyzhaL21} and persistent TSO
  (\spxes)~\cite{DBLP:journals/pacmpl/RaadWV19} memory models.  For
  \spxes, we use an equivalent formulation called \PTSOS developed by
  Khyzha and Lahav~\cite{DBLP:journals/pacmpl/KhyzhaL21}. The proof itself is fully
  automatic, requiring no user input outside of the encodings of the
  models themselves.  Additionally, we develop a sequential lower
  bound (\DDTMS-Seq), derived from \DDTMS, and show that this lower
  bound refines \PMDKTX (and hence that \PMDKTX is not vacuously strong).
  Our approach is based on an earlier technique for proving durable
  opacity~\cite{DBLP:conf/sefm/DongolL21}, but incorporates much more
  sophisticated examples and memory models, representing a significant
  challenge in our work.  Our models are reusable and can be adapted
  to other algorithms.
\end{description}

\paragraph{Outline} \cref{fig:contributions} gives an overview of the
  different components that we have developed in this article and their
  relationships to each other and to prior work.  The declaratively
  defined notion of \DDO forms the basis of our approach. Our abstract
  operational model \DDTMS is proved sound with respect to \DDO, then
  \PMDK is shown to refine \DDTMS. Our concurrent extensions \PMDKT and \PMDKN (which
  use \PMDKTX) are also shown to refine \DDTMS.
  We structure our article by presenting the components of
  \cref{fig:contributions} roughly from the bottom up. In
  \cref{sec:pmdk}, we present the abstract \PMDKTX model, and in
  \cref{sec:making-pmdk-trans} we describe its integration with STM
  to provide concurrency support via \PMDKT and \PMDKN. In
  \cref{sec:volatile_framework} we present \DDO; in
  \cref{sec:oper-prov-dynam} we present \DDTMS; and in
  \cref{sec:modell-verify-corr} we describe our \FDR encodings and
  bounded proofs of refinement.
  In \cref{sec:related} we present a summary of our caveats, discuss related and future work and conclude.

  \paragraph{Additional Material} We provide our \FDR development as
  supplementary material to this submission. The proofs of all
  theorems stated in the article are given in the accompanying
  appendix.
  

 

\# Contributions so far:
\begin{enumerate}
\item First full specification of (sequential) \PMDK on paper -
  eyeball accuracy, approved by Piotr
\item Mechanised specification of (sequential) \PMDK in \FDR
\item \begin{enumerate}
  \item First specification of \DDTMS = \DTMS + allocs on paper
  \item Mechanised specification 3a in \FDR
  \item Verification of 2 against 3b in \FDR (with certain bounds)
    \[
      \DDTMSSeq \leq \PMDK \leq \DDTMS
    \]
     (upper and lower bounds)
   \item ??? On paper verification of 2 against 3b (unbounded) :( 
  \end{enumerate}
\item  Two concurrent \PMDK reference implementations:

  \begin{enumerate}
  \item \PMDKN = \NOREC + \PMDK -- using \NOREC and \PMDK off the
    shelf 
  \item \PMDKT = \TML + \PMDK -- integrating \TML and \PMDK off the
  \end{enumerate}
  Write backs in \PMDKN are lazy, but in \PMDKT writes are eager

\item Mechanising both \PMDKN and \PMDKT in \FDR and proving
  \begin{gather*}
    \DDTMSSeq \leq \PMDKN \le \DDTMS
    \\
    \DDTMSSeq \leq \PMDKT \le \DDTMS
  \end{gather*}
\item 
6c) ??? On paper verification of 5b against 3b (unbounded) :( 

6d) ??? On paper verification of 5d against 3b (unbounded) :(







\end{enumerate}

\# Next steps

i) Adapt 5a, 5b (from PSC) to Px86

ii) Prototype implementation of i on real hardware + NVM (Sergey? + Gregory?)

iii) Universal construction for data structures: wrap sequential methods in concurrent PMDK transactions -> correct concurrent and persistent implementation

What does correct mean in this context? Durably linearisable? Buffered durably linearisable? (Ori says not buffered)

\# Modularity

Assume a transactional interface with Begin, Commit, Abort, and read/write inside the transaction

Assume a: an  atomic concurrency control mechanism, e.g. NOrec (no ownership records)
	   b: a sequential persistency control mechanism, e.g. PMDK

What are the requirements on a and b to make a+b a correct (strict durable opacity) transactional implementation? 

NB: strict = if an operation is incomplete before a crash, it is either completed right after recovery (at the beginning of the next era) or never.



\section{Intel \PMDK transactions}
\label{sec:pmdk}

We describe the abstract interface \PMDKTX provides
to clients (\cref{sec:pmdk-interface}), our assumptions about the memory model over which \PMDKTX is run (\cref{sec:memory-models}) and the operations of
\PMDKTX (\cref{sec:pmdk-implementation}). We present our \PMDK abstraction in \cref{sec:pmdk-implementation}.  

\subsection{\PMDK Interface}
\label{sec:pmdk-interface}
\begin{figure}[t]
  \centering 
  \begin{minipage}[t]{0.95\columnwidth}
    \begin{lstlisting}
struct queue_node {
    pmem::obj::p<int> value;                 (*\label{node1}*)
    pmem::obj::persistent_ptr<queue_node> next; }; (*\label{node2}*)
  
struct queue { private:
    pmem::obj::persistent_ptr<queue_node> head = nullptr;    (*\label{queue1}*)
    pmem::obj::persistent_ptr<queue_node> tail = nullptr; }; (*\label{queue2}*)
  
void push(pmem::obj::pool_base &pmem_op, int value) {
   pmem::obj::transaction::run(pmem_op, [&]{              (*\label{push1}*)
      auto node = pmem::obj::make_persistent<queue_node>(); (*\label{push2}*)
      node->value = value;   (*\label{push3}*)
      node->next = nullptr;  (*\label{push4}*)
      if (head == nullptr) { (*\label{push4}*)
         head = tail = node; (*\label{push5}*)
      } else {               (*\label{push6}*)
         tail->next = node;  (*\label{push7}*)
         tail = node; }       (*\label{push8}*)
    }); }                (*\label{push9}*)\end{lstlisting}
\end{minipage}
\vspace{-10pt}
\caption{C++ persistent \code{push} operation  using \PMDKTX
  (\cite[pg 131]{Scargall2020})}
\label{fig:persist-push}
\end{figure}
\noindent\PMDK provides an extensive suite of libraries for simplifying
persistent programming.  The \PMDK transactional library (\PMDKTX) has been
designed to support failure-atomicity by providing operations
for tracking memory locations that are to be made persistent, as well
allocating and accessing (reading and writing) persistent memory within an atomic
block.

In \cref{fig:persist-push} we present an example client code that uses
\PMDKTX. The code (due to \cite[p.~131]{Scargall2020})
implements the \code{push} operation for a persistent linked-list
queue.
  The implementation wraps a typical (non-persistent) \code{push}
  operation within a transaction using a C++ lambda \code{[\&]}
  expression (\lineno{push1}). The transaction is invoked using
  \code{transaction::run}, which operates over the memory pool
  \code{pmem\_op}.  
  The node structure (lines \ref{node1} and \ref{node2}), the queue
  structure (lines \ref{queue1} and \ref{queue2}), and any new node
  declaration (\lineno{push2}) are to be tracked by a PMDK
  transaction. 
  Additionally, the \code{push}
  operation takes as input the {\em persistent memory object pool},
  \code{pmem\_op}, which is a memory pool 
  on which the transaction is to be executed. This argument is needed
  because the application memory may map files from different file
  systems. On \lineno{queue2} we use
  \code{make\_persistent} 
  to perform a transactional allocation on persistent memory that is
  linked to the object pool \code{pmem\_op} (see \cite{Scargall2020}
  for details). 
  The remainder of the operation (lines~\ref{push3}--\ref{push8})
  corresponds to an implementation of a standard \code{push} operation
  with (transactional) reads and writes on the indicated locations. At
  \lineno{push9}, the C++ lambda and the transaction is closed,
  signalling that the transaction should be
  committed.  

If the system crashes while \code{push} is executing, but before
\lineno{push9} is executed, then upon recovery, the entire \code{push}
operation will be rolled back so that the effect of the incomplete
operation is not observed, and the queue remains a valid linked
list. After \lineno{push9}, 
the corresponding transaction executes a
commit operation. If the system crashes during commit, depending on
how much of the commit operation has been executed, the \code{push}
operation will either be rolled back, or committed successfully. Note
that roll-back in all cases ensures that the allocation
at \lineno{push2} is undone. 

While \PMDK's use of transactions provides a convenient and programmer-friendly way of
achieving failure-atomicity, there is no formal description (specification) or correctness proof  of its underlying implementation, nor
is there any concurrency support over a single object pool. In what
follows, we address these shortcomings. We proceed with an abstract
description of \PMDKTX.

\subsection{Memory Models}
\label{sec:memory-models}

We consider the execution of our implementations over two different
memory models: \PSC and
\PTSOS~\cite{DBLP:journals/pacmpl/KhyzhaL21}. Both models include a
\code{flush\;x} instruction 
to persist the contents of the given location \code{x} to
memory. \PTSOS aims for fidelity to the Intel x86 architecture. In a
race-free setting (as is the case for single-threaded \PMDKTX
transactions) it is sound to use the simpler \PSC model, though we
conduct all of our experiments in both models.

\PSC 
is a simple model that considers persistency effects and their
interaction with sequential consistency. Writes are propagated
directly to per-location persistence buffers, and are subsequently
flushed to non-volatile memory, either due to a system action, or the
execution of a \code{flush} instruction.  A read from \texttt{x} first
attempts to fetch its value from the 
persistence buffer and
if this fails, fetches its value from non-volatile memory.

Under Intel-x86, the memory models are further complicated by the
interaction between \emph{total store ordering} (TSO)
effects~\cite{DBLP:conf/tphol/OwensSS09} and persistency. Due to the
abstract nature of our models (see \cref{fig:pmdk}) it is sufficient
for us to focus on the simpler \spxes
model~\cite{DBLP:journals/pacmpl/RaadWV19} since we do not use any of
the advanced features~\cite{DBLP:journals/pacmpl/RaadMV22,DBLP:journals/pacmpl/RaadWNV20,DBLP:journals/pacmpl/RaadWV19}. We
introduce a further simplification via \PTSOS that is {\em
  observationally equivalent} to
\spxes~\cite{DBLP:journals/pacmpl/KhyzhaL21}. Unlike \spxes, which
uses a single (global) persistence buffer, \PTSOS uses per-location
buffers simplifying the resulting \FDR models
(\cref{sec:modell-verify-corr}).

In \PTSOS, writes are propagated from the store buffer in FIFO order
to a per-location FIFO persistency buffer. Writes in the persistency
buffer are later persisted to the non-volatile memory.
A read from location \texttt{x} first attempts to fetch
the latest write to \texttt{x} from the store
buffer. If this fails (\ie no writes to \texttt{x} exists in the
store buffer), it attempts to fetch the latest write from the
persistence buffer of \texttt{x}, and if this fails, it
fetches the value of \texttt{x} from non-volatile memory.

\begin{figure}[!t]
  \noindent\, \footnotesize
  \begin{minipage}[t]{0.98\linewidth}
\begin{lstlisting}
// Each location is persistent; there is no explicitly volatile memory.
mem : loc -> {
  val : int; // the contents of this location
  metadata : bool; } // false = not allocated, true = allocated
freeList : loc list // transient list of free locations

// Redo logs -- tredo is transient; pRedo is persistent.
tRedo$_\txid$, pRedo$_\txid$ : {undoValid:bool;  checksum:int;  allocs:loc set;}
undo$_\txid$ : loc -> int  // undo log recording the original val of each loc
undoValid : bool  // undoValid global flag, initially true 
\end{lstlisting}
\end{minipage}

\quad\!\!\!\!
  \begin{minipage}[t]{0.44\columnwidth}
  \begin{lstlisting}[firstnumber=11]
PBegin$_\txid$ $\eqdef$     
   tRedo$_\txid$ := (true, -1, {})  (*\label{begin1}*) 
   pRedo$_\txid$ := (true, -1, {})  (*\label{begin3}*)
   undo$_\txid$ := {}               (*\label{begin2}*) 
   undoValid$_\txid$ := true        (*\label{begin3}*)

PAlloc$_\txid$ $\eqdef$      
   x$_\txid$ := freeList.take                                   (*\label{alloc1}*)
   tRedo$_\txid$.allocs :=
     tRedo$_\txid$.allocs $\cup$ {x$_\txid$} (*\label{alloc3}*)
   return x$_\txid$                                             (*\label{alloc4}*)

PRead$_\txid$(x) $\eqdef$
   return mem[x].val                          (*\label{read1}*)
   
PWrite$_\txid$(x,v) $\eqdef$    
   if x $\notin$ dom(undo$_\txid$) then                (*\label{write2}*)
      w$_\txid$ := mem[x].val                          (*\label{write3}*)
      undo$_\txid$ := undo$_\txid$ $\cup$ {x $\mapsto$ w$_\txid$}    (*\label{write4}*)
      flush undo$_\txid$                                   (*\label{write5}*)
   mem[x].val := v                             (*\label{write6}*)
   
PCommit$_\txid$ $\eqdef$  
   persist_writes$_\txid$                        (*\label{commit1}*)
   tRedo$_\txid$.undoValid := false                 (*\label{commit2}*)
   tRedo$_\txid$.checksum :=                    (*\label{commit3}*)(*\Suppressnumber*)
         calc_checksum(tRedo$_\txid$)           (*\Reactivatenumber*)
   pRedo$_\txid$ := tRedo$_\txid$                    (*\label{commit4}*)
   flush pRedo$_\txid$                               (*\label{commit5}*) 
   apply_pRedo$_\txid$                          (*\label{commit6}*)
   pRedo$_\txid$.checksum := -1               (*\label{commit7}*)
   flush pRedo$_\txid$.checksum                               (*\label{commit8}*)\end{lstlisting}
\end{minipage}
\hfill
\begin{minipage}[t]{0.49\columnwidth}
\begin{lstlisting}[firstnumber=42]
apply_pRedo$_\txid$ $\eqdef$ 
   foreach x $\in$ pRedo$_\txid$.allocs: (*\label{AR1}*)
      mem[x].metadata := true     (*\label{AR2}*)
      flush mem[x].metadata    (*\label{AR3}*)
   if $\neg$pRedo$_\txid$.undoValid  then   (*\label{AR4}*)
      undoValid$_\txid$ := false         (*\label{AR5}*)
      flush undoValid$_\txid$         (*\label{AR6}*)

persist_writes$_\txid$ $\eqdef$
   foreach x $\in$ dom(undo$_\txid$):(*\label{PW1}*) flush x              (*\label{PW2}*)

roll_back$_\txid$ $\eqdef$
   foreach (x $\mapsto$ v) $\in$ undo$_\txid$: (*\label{abort1}*)
      mem[x].val := v                  (*\label{abort2}*)
   persist_writes$_\txid$                      (*\label{abort3}*)
   
PAbort$_\txid$ $\eqdef$       
   roll_back$_\txid$                           (*\label{abort1}*)
   undoValid$_\txid$ := false                 (*\label{abort2}*)
   flush undoValid$_\txid$                 (*\label{abort3}*)
   foreach x $\in$ tRedo$_\txid$.allocs:      (*\label{abort4}*)
      freeList.add(x)                 (*\label{abort5}*)
  
PRecovery$_\txid$ $\eqdef$
   if calc_checksum(pRedo$_\txid$)
      = pRedo$_\txid$.checksum             (*\label{rec4}*)
   then apply_pRedo$_\txid$                        (*\label{rec5}*)
   if undoValid$_\txid$ then                     (*\label{rec6}*)
      roll_back$_\txid$                           (*\label{rec7}*)
   foreach x $\in$ dom(mem):              (*\label{rec1}*)
      if $\neg$mem[x].metadata then      (*\label{rec2}*) 
         freeList.add(x)                 (*\label{rec3}*)\end{lstlisting}
  \end{minipage}

\caption{\PMDK global variables and pseudo-code} 
  \label{fig:pmdk} 
  \label{fig:pmdk-vars}
\end{figure}


\subsection{\PMDK Implementation}
\label{sec:pmdk-implementation}





We present the pseudo-code of our \PMDKTX abstraction in \cref{fig:pmdk-vars}.
We model all features of \PMDKTX (including its redo and and undo logs as well as its recovery mechanism in case of a crash) except memory deallocation within a \PMDKTX transaction.
%
%
We use \code{mem} to model the memory, mapping each location (in \code{loc}) to a value-metadata pair. We
model a value (in \code{val}) as an integers, and \code{metadata} as a boolean indicating whether the location is allocated. 
As we see below, the list of free (unallocated) locations, \code{freeList}, is calculated during recovery using \code{metadata}. 

Each \PMDK transaction maintains redo logs and an undo log. The redo
logs record the locations allocated by the transaction
so that if a crash occurs while committing, the allocated
locations can be reallocated, allowing the transaction to commit upon
recovery. Specifically, \PMDKTX uses two {\em distinct} redo logs:
\code{tRedo} and \code{pRedo}. 
Both are associated with fields \code{undoValid} (which is unset when
the log is invalidated), \code{checksum} (used
to indicate whether the log is valid), and \code{allocs}
(which contains the set of locations allocated by the
transaction). Note that \PMDKTX explicitly sets and unsets
\code{undoValid}, whereas \code{checksum} is calculated (\eg at
\lineno{commit3}) and may be invalidated by crashes corrupting a
partially completed write.  The undo log records the original (overwritten) value of
each location written to by the transaction, and is
consulted if the transaction is to be rolled back. We model it as a
map from locations to values (of type \code{int}). A separate variable
\code{undoValid} (distinct from \code{undoValid} in
\code{tRedo} and \code{pRedo}) is used to determine whether this undo
log is valid.


Each component in \cref{fig:pmdk-vars} have both a
volatile and persistent copy, although some components, \eg
\code{tRedo} and \code{freeList}, are {\em transient},
\ie their persistent versions are never
used. Likewise, the persistent redo log, \code{pRedo}, is only used in
a persistent fashion and its volatile copy is never
used. 

We now discuss the operations of \cref{fig:pmdk} in detail. We
assume the operations are executed by a transaction with id
$\txid$. This id is not useful in the sequential
setting in which \PMDKTX is used; however, in our concurrent extension (\cref{sec:making-pmdk-trans}) the transaction id is critical.  Note that the \PMDKTX implementation uses
instructions such as {\em optimised flush}, \code{sfence}, and {\em
  non-temporal stores} \cite{DBLP:journals/pacmpl/RaadMV22} to
further optimise the execution. However, we do not model these
here since our aim is to describe the main aspects of the \PMDKTX design
at a higher level of abstraction. 
%
%
\begin{description}[style=unboxed,leftmargin=0cm,itemsep=2pt]
\item 
[\bf\em PBegin.] The begin operation simply sets all local variables
to their initial values.

\item [\bf\em PAlloc.] Allocation chooses and removes a free
location, say {\tt x}, from the free
list, 
adds {\tt x} to the transient redo log (\lineno{alloc3}) and returns
{\tt x}. Removing {\tt x} from \code{freeList} ensures it is not allocated twice, while the transient redo log is used together with the persistent redo log to ensure allocated
locations are properly reallocated upon a system crash. Note that this
is an abstraction and the real allocator is much more complex (see
\cite[p.~323]{Scargall2020} and \cite{PMDK-Alloc} for details).

\quad 
When the transaction commits, the transient redo log
is copied to the persistent one (\lineno{commit4}), and the
effect of the persistent log is applied at \lineno{commit6} via
\code{apply\_pRedo}. (Note that \code{apply\_pRedo} is also
called by \code{PRecovery} on \lineno{rec5}.)  The behaviour of this
call depends on how much of the in-flight transaction was executed
before the crash leading to the recovery.  If a crash occurred after the
transaction executed (\lineno{commit4}) and the corresponding
write persisted (either due to a system flush or the
execution of \lineno{commit5}), then executing
\code{apply\_pRedo} via \code{PRecovery} has the same effect as the
executing \lineno{commit6}, \ie the effect of the redo log will be
applied. This (persistently) sets the \code{metadata} field of each
location in the redo log to indicate that it is allocated
(\linenos{AR1}{AR3}), and then invalidates the undo log
(\linenos{AR4}{AR6}) so that the transaction is not rolled
back. 

\item [\bf\em PRead.]
  A read from \texttt{x} simply returns its in-memory value (\lineno{read1}). 
  Note that location \texttt{x} may not be allocated; \PMDKTX delegates the responsibility of checking whether it is allocated to the client.

\item [\bf\em PWrite.]A write to {\tt x} first checks (\lineno{write2}) if the
current transaction has already written to {\tt x} (via a previously
executed \code{PWrite}). If not, it logs the current value by reading
the in-memory value of {\tt x} (\lineno{write3}) and records it in the undo log (\lineno{write4}). 
The updated undo log is then made persistent (\lineno{write5}). Once the
current value of {\tt x} is backed up in the undo log (either by
the current write or by the previous write to {\tt x}), the value of
{\tt x} in memory is updated to the new value {\tt v}
(\lineno{write6}). As with the read, location \texttt{x} may
not have been allocated; \PMDKTX delegates this check to the client.

\item [\bf\em PCommit.] The main idea behind the commit operation is to
ensure all writes are persisted, and that the persistent redo and undo
logs are cleared in the correct order, as follows. 
\begin{enumerate*}[label=\bfseries(\arabic*)]
	\item On \lineno{commit1} all
writes written by the transaction are persisted. 
\item Next, the transient
redo log is invalidated (\lineno{commit2}) and the checksum for the
log is calculated (\lineno{commit3}). This updated transient log is
then set to be the persistent redo log (\lineno{commit4}), which is
then made persistent (\lineno{commit5}). Note that after executing \lineno{commit5}, we can be assured that
the transaction has committed; if a crash occurs after this point, the
recovery will {\em redo} and persist the allocation and the undo log
will be cleared.
	\item The operation then calls \code{apply\_pRedo} at
\lineno{commit6}, which makes the allocation persistent and clears the
undo log. 
	\item Finally, at \lineno{commit7}, the \code{pRedo} checksum is
invalidated since \code{apply\_pRedo} has already been executed. If a
crash occurs after \lineno{commit7} has been executed, then the
recovery checks at \lineno{rec4} and \lineno{rec6} will fail, \ie
recovery will 
calculate the free list.
\end{enumerate*}

\item [\bf\em PAbort.] A \PMDK transaction is aborted by a \code{PRead}/\code{PWrite} that attempts to access (read/write) an unallocated
location. When a transaction is aborted, all of its observable effects
must be rolled back. First, the memory effects are rolled back
(\lineno{abort1}), then the undo log is invalidated (\lineno{abort2})
and made persistent (\lineno{abort3}), preventing undo from being
replayed in case a crash occurs. Finally, all of the locations
allocated by the executing transaction are freed
(\linenos{abort4}{abort5}).

Note that if a crash occurs during an abort, the effect
of the abort will be replayed. \code{PRecovery} reconstructs the free
list at \linenos{rec1}{rec3}, which effectively replays the loop at
\linenos{abort4}{abort5} of \code{PAbort}. Additionally, if a crash
occurs before the write at \lineno{abort2} has persisted, then the
effect of undoing the operation will be explicitly replayed by the
roll-back executed by \code{PRecovery} since \code{undoValid}
holds. If the crash occurs after the write at \lineno{abort2} has
persisted, then 
no roll-back is necessary.
\item [\bf\em PRecovery.] The recovery operation is executed immediately
after a crash, and before any other operation is executed. The
recovery proceeds in three phases:
\begin{enumerate*}[label=\bfseries(\arabic*)]
\item The checksum of the persistent redo log is recalculated
  (\lineno{rec4}) and if it matches the stored checksum (
  \code{pRedo.checksum}) the \code{apply\_pRedo} operation is
  executed. As discussed, \code{apply\_pRedo} sets and persists
  the metadata of each location in the redo log, and then invalidates the
  undo log.
\item The transaction is rolled back if \code{apply\_pRedo}
  in step 1 fails; otherwise, no roll-back
  is performed.
\item The free list is
reconstructed by inserting each location whose metadata is set to
false into \code{freeList} (\linenos{rec1}{rec3}).%
\end{enumerate*}
\end{description}


\paragraph{Correctness} As discussed in \cref{sec:pmdk-interface},
\PMDKTX is designed to be failure-atomic. This means that
correctness criteria such as
opacity~\cite{DBLP:series/synthesis/2010Guerraoui,DBLP:conf/wdag/AttiyaGHR14}
and TMS1/TMS2~\cite{DBLP:journals/fac/DohertyGLM13} (restricted to
sequential transactions) are inadequate since they do not accommodate
crashes and recovery. This points to conditions such as durable
opacity~\cite{DBLP:conf/forte/BilaDDDSW20}, which extends opacity with
a persistency model. 
However, durable opacity (restricted to sequential transactions)
is also insufficient since it does not define correctness of
allocations and assumes totally ordered histories. In
\cref{sec:volatile_framework} we develop a generalisation of durable
opacity, called {\em dynamic durable opacity} (\DDO) that addresses
both of these issues. As with durable opacity, \DDO defines correctness
for concurrent transactions. We develop concurrent extensions of \PMDK
transactions in \cref{sec:making-pmdk-trans}, which we show to be
correct against (\ie refinements of) \DDO.

\paragraph{Thread Safety}
As discussed, \PMDK
transactions are not thread-safe; \eg concurrent calls to
\code{PRead} and \code{PWrite} on the same location create a data
race causing \code{PRead} to return an undefined value (see the example in \cref{sec:intro}). We discuss techniques for mitigating against such races in
\cref{sec:making-pmdk-trans}.

Nevertheless, some \PMDK transactional operations are naturally
thread-safe. In particular, \code{PAlloc} is designed to be thread-safe via an built-in {\em arena} mechanism: a memory pool split into
disjoint arenas with each thread allocating from its own
arena. Moreover, each thread uses locks for each arena to publish
allocated memory to the shared pool~\cite{PMDK-Alloc}. 

\section{Making PMDK Transactions Concurrent}
\label{sec:making-pmdk-trans}
We develop two algorithms that combine two existing STM systems with
\PMDK, leveraging the concurrency control mechanisms provided by
STM and the failure atomicity guarantees of \PMDK in a
modular manner. The first algorithm (\cref{sec:pmdkt}) is based on
TML~\cite{DBLP:conf/europar/DalessandroDSSS10}, which uses {\em
  pessimistic concurrency control} via an eager write-back
scheme. Writing transactions effectively take a lock and perform
the writes in place.  The second algorithm (\cref{sec:pmdkn}) is based
on \NOREC~\cite{DBLP:conf/ppopp/DalessandroSS10}, which utilises {\em
  optimistic concurrency control} via a lazy write-back scheme. In
particular, transactional writes are {\em collected} in a local write
set and {\em written back} when the transaction commits.

It turns out that \PMDK can be incorporated within both
algorithms straightforwardly. This is a strength of
our approach and points towards a generic technique for extending existing STM systems with
failure atomicity. 
Given the
challenges of persistent allocation, we reuse \PMDK's allocation
mechanisms to provide an explicit allocation mechanism in both our
extensions~\cite{Scargall2020}.




\begin{figure}[t]
  \centering
  \begin{minipage}[t]{0.45\columnwidth}
{\tt {\bf Init}. glb = 0}
\begin{lstlisting}[escapeinside={(*}{*)}]
TxBegin$_\txid$ $\eqdef$                         
   do loc$_\txid$ := glb         (*\label{TMLB1}*)             
   until even(loc$_\txid$)       (*\label{TMLB2}*)
   (*\color{red}PBegin$_\txid$*) (*\label{TMLB3}*)

TxAlloc$_\txid$ $\eqdef$
   return (*\color{red} PAlloc$_\txid$*)         (*\label{TMLA1}*)

TxWrite$_\txid$(x, v) $\eqdef$
   if even(loc$_\txid$) then                     (*\label{TMLW1}*)
      if $\neg$cas(glb, loc$_\txid$, loc$_\txid$+1)           (*\label{TMLW2}*)
      then (*\color{red}\tt PAbort$_\txid$*); return abort  (*\label{TMLW3}*) 
      else loc$_\txid$++                         (*\label{TMLW4}*)
   (*\color{red}PWrite$_\txid$(x,v)*)       (*\label{TMLW5}*)\end{lstlisting}             
\end{minipage}
\qquad 
\begin{minipage}[t]{0.45\columnwidth}
\begin{lstlisting}[firstnumber=15]
TxRead$_\txid$(x) $\eqdef$
  v$_\txid$ := (*\color{red}\tt PRead$_\txid$(x)*)  (*\label{TMLR1}*)                
  if even(loc$_\txid$) then                (*\label{TMLR2}*)
     if glb = loc$_\txid$ then             (*\label{TMLR3}*)
        return v$_\txid$                 (*\label{TMLR4}*)
     else (*\color{red}\tt PAbort$_\txid$*); return abort                     (*\label{NORECV6}*)  (*\label{TMLR5}*) 
  else return v$_\txid$                  (*\label{TMLR7}*)

TxCommit$_\txid$  $\eqdef$             
   (*\color{red}PCommit$_\txid$*)      (*\label{TMLC1}*)
   if odd(loc$_\txid$) then            (*\label{TMLC2}*)
      glb := loc$_\txid$+1             (*\label{TMLC3}*)

Recovery $\eqdef$
   foreach $\txid \in$ $\TXIDs$:
      (*{\color{red}\tt PRecovery$_\txid$}*)  (*\label{TMLRec1}*)
   glb := 0                     (*\label{TMLRec2}*) \end{lstlisting}
\end{minipage}
\caption{Pseudo-code for \PMDKT with our additions made w.r.t. TML {\color{red}highlighted red} 
}
\label{fig:pmdk-tml}
\end{figure}

\paragraph{\PMDKT}
\label{sec:pmdkt}
We present the pseudo-code for \PMDKT (combining TML and \PMDKTX) in
\cref{fig:pmdk-tml}, where we highlight the calls to \PMDKTX
operations. These calls are the only changes we have made to the TML
algorithm. TML is based on a single global counter, \code{glb}, whose
value is read and stored within a local variable \code{loc$_\txid$}
when transaction $\txid$ begins (\code{TxBegin}). There is an
in-flight {\em writing} transaction iff \code{glb} is odd. TML is
designed for read-heavy workloads, and thus allows multiple concurrent
read-only transactions. A writing transaction causes all other
concurrent transactions 
to abort.

\PMDKT proposes a modular combination of \PMDK with the \TML algorithm
by nesting a \PMDK transaction inside a \TML transaction; \ie each
transaction additionally starts a \PMDK transaction. All reads and
writes to memory are replaced by \PMDKTX read and write
operations. Moreover, when a transaction aborts or commits, the
operation calls a \PMDKTX abort or commit, respectively. Finally, \PMDKT
includes allocation and recovery operations, which call \PMDKTX
allocation and recovery, respectively. The recovery operation
additionally sets \code{glb} to 0.

A read-only transaction $\txid$ may call \code{PRead$_\txid$} at
\lineno{TMLR1} when another transaction $\txidb$ is executing
\code{PWrite$_{\txidb}$} at \lineno{TMLW5} on the same location. Since
\PMDKTX does not guarantee thread safety for these calls, the value
returned by \code{PRead$_\txid$} should not be passed back to the
client. This is indeed what occurs. First, note that if transaction
$\txid$ is read-only, then \code{loc$_\txid$} is even. Moreover, a
read-only transaction only returns the value returned by
\code{PRead$_\txid$} (\lineno{TMLR4}) if no other transaction has
acquired the lock since $\txid$ executed \code{TxBegin$_\txid$}. In
the scenario described above, $\txidb$ must have incremented
\code{glb} by successfully executing the CAS at \lineno{TMLW2} as part
of the first write operation executed by $\txidb$, changing the
value of \code{glb}. This means that $\txid$ would abort since the test at \lineno{TMLR3} would fail.

\paragraph{\PMDKN}\label{sec:pmdkn}
We present \PMDKN (combining \NOREC and \PMDK) in
\cref{fig:pmdk-norec}, where we highlight the calls to \PMDKTX. These
calls are the only changes we have made to the \NOREC algorithm. As with
TML, \NOREC is based on a single global counter, \code{glb}, whose
value is read and stored within a transaction-local variable
\code{loc} when a transaction begins (\code{TxBegin}). There is an
in-flight writing transaction iff \code{glb} is odd. Unlike
TML, \NOREC performs lazy write-back, and hence utilises
transaction-local read and write sets. A transaction only performs the
write-back at commit time once it ``acquires'' the \code{glb} lock. 
Prior to write-back and read response, it ensures that the
read sets are consistent using a per-location validate operation. We
eschew detailed discussion of the \NOREC synchronisation mechanisms
and refer the interested reader to the original
paper~\cite{DBLP:conf/ppopp/DalessandroSS10}.

\begin{figure}[t]
  \centering
\begin{minipage}[t]{0.44\columnwidth}
{\tt {\bf Init:}  glb = 0}
\begin{lstlisting}[escapeinside={(*}{*)}]
TxBegin$_\txid$ $\eqdef$                       
   do loc$_\txid$ := glb
   until even(loc$_\txid$)
   (*\color{red}PBegin$_\txid$*)

TxAlloc$_\txid$ $\eqdef$
   return (*\color{red}\tt PAlloc$_\txid$*)
   
TxRead$_\txid$(x) $\eqdef$   
   if x $\in$ dom(wrSet$_\txid$) then                  (*\label{NORECR1}*)
      return wrSet$_\txid$(x)                          (*\label{NORECR2}*)
   v$_\txid$ := (*\color{red}\tt PRead$_\txid$(x)*)     (*\label{NORECR3}*)
   while loc$_\txid$ $\neq$ glb                        (*\label{NORECR4}*)
      loc$_\txid$ := Validate                          (*\label{NORECR5}*)
      v$_\txid$ := (*\color{red}\tt PRead$_\txid$(x)*) (*\label{NORECR6}*)
   rdSet$_\txid$ := rdSet$_\txid$ $\cup$ {x $\mapsto$ v$_\txid$} (*\label{NORECR7}*)
   return v$_\txid$                                    (*\label{NORECR8}*)

Recovery $\eqdef$
   foreach $\txid \in$ $\TXIDs$:
      (*{\color{red}\tt PRecovery$_\txid$}*)
   glb := 0 \end{lstlisting} 
\end{minipage}
\qquad 
\begin{minipage}[t]{0.48\columnwidth}
\begin{lstlisting}[firstnumber=23]
TxWrite$_\txid$(x,v) $\eqdef$  
   wrSet$_\txid$ := wrSet$_\txid$ $\cup$ {x $\mapsto$ v}    
   
Validate$_\txid$ $\eqdef$
   while true                                  (*\label{NORECV1}*)
      time$_\txid$ := glb                              (*\label{NORECV2}*)
      if odd(time$_\txid$) then goto (*\ref{NORECV2}*) (*\label{NORECV3}*)
      foreach x $\mapsto$ v $\in$ rdSet$_\txid$:(*\label{NORECV4}*)
         if (*\color{red}\tt PRead$_\txid$(x)*) $\neq$ v                   (*\label{NORECV5}*)
         then (*\color{red}\tt PAbort$_\txid$*); return abort                     (*\label{NORECV6}*)
      if time$_\txid$ = glb                            (*\label{NORECV7}*)
      then return time$_\txid$                         (*\label{NORECV8}*)
  
TxCommit$_\txid$ $\eqdef$   
    if wrSet$_\txid$.isEmpty                            (*\label{NORECC1}*)
       then (*\color{red}\tt PCommit$_\txid$*)          (*\label{NORECC2}*)
            return                                     (*\label{NORECC3}*)
    while $\neg$cas(glb, loc$_\txid$, loc$_\txid$ + 1)  (*\label{NORECC4}*)  
       loc$_\txid$ := Validate$_\txid$                  (*\label{NORECC5}*)
    foreach x $\mapsto$ v $\in$ wrSet$_\txid$:          (*\label{NORECC6}*)
       (*\color{red}\tt PWrite$_\txid$(x, v)*)          (*\label{NORECC7}*)
    (*\color{red}\tt PCommit$_\txid$*)                  (*\label{NORECC8}*)
    glb := loc$_\txid$ + 2                              (*\label{NORECC9}*)
    return \end{lstlisting} 
\end{minipage}
\caption{Pseudo-code for \PMDKN, with our additions made w.r.t. \NOREC\ {\color{red} highlighted red}}
\label{fig:pmdk-norec}
\end{figure}

The transformation from \PMDKTX to \PMDKN is similar to \PMDKT. We
ensure that a \PMDK transaction is started when a \PMDKN transaction
begins, and that this \PMDK transaction is either aborted or committed
before the \PMDKN transaction completes. We introduce \code{TxAlloc}
and \code{Recovery} operations that are identical to \PMDKT, and
replace all calls to read and write from memory by \code{PRead} and
\code{PWrite} operations, respectively.

As with \PMDKT, a \code{PRead} executed by a transaction (at
\lineno{NORECR3}, \lineno{NORECR6} or \lineno{NORECV5}) may race with
a \code{PWrite} (at \lineno{NORECC7}) executed by another
transaction. However, since \code{PWrite} operations are only executed
after a transaction takes the \code{glb} lock (at \lineno{NORECC4}),
any transaction with a racy \code{PRead} is revalidated. If validation
fails, the associated transaction is aborted.







\section{Declarative Correctness Criteria for Transactional Memory}
\label{sec:volatile_framework}

We present a declarative correctness criteria for TM
implementations. Unlike prior definitions such as (durable) opacity, TMS1/2 \etc that are defined in terms of histories
of invocations and responses, we define dynamic durable opacity (\DDO)
in terms of execution graphs, as is standard model for weak
memory setting. Our models are inspired by prior work on declarative
specifications for transactional memory, which focussed on specifying
relaxed
transactions~\cite{hw-transactions-dongol,DBLP:conf/pldi/ChongSW18}. However,
these prior works do not describe crashes or allocation.

%
\smallskip\emph{\textbf{Executions and Events.}}
The traces of memory accesses generated by a program are commonly represented as a set of \emph{executions}, where each execution $G$ is a graph comprising: 
\begin{enumerate*}
	\item a set of events (graph nodes); and 
	\item a number of relations on events (graph edges). 
\end{enumerate*}	
Each event $e$ corresponds to the execution of either a transactional event (\eg marking the beginning of a transaction) or a memory access (read/write) within a transaction.


\begin{definition}[Events]
\label{def:events}
An \emph{event} is a tuple $a=\tup{n,\tid, \txid, \lb}$, where $n \in \Nats$ is an event identifier, 
$\tid \in\TIDs$ is a \emph{thread identifier},
$\txid \in \TXIDs$ is a \emph{transaction identifier}
and $\lb \in \Labs$ is an event \emph{label}.

A label may be $\tbeginL$ to mark the beginning of a transaction; 
$\abortL$ to denote a transactional abort; 
$\allocL x$ to denote a memory allocation yielding $x$ initialised with value $0$; 
$\readL x v$ to denote reading value $v$ from location $x$; 
$\writeL x v$ to denote writing $v$ to $x$; 
$\commitL$ to mark the beginning of the transactional commit process; or
$\succL$ to denote a successful commit.
\end{definition}

%
The functions $\lTHRD$, $\lTX$ and $\lLAB$ respectively project the thread identifier, transaction identifier and the label of an event. 
The functions $\lLOC$, $\lVALR$ and $\lVALW$ respectively project the location, the read value and the written value of a label, where applicable, and are lifted to events by defining \eg $\lLOC(a) = \lLOC(\lLAB(a))$. 

\smallskip\emph{\textbf{Notation.}}
Given a relation $\makerel r$ and a set $A$, we write $\refC{\makerel{r}}$, $\transC{\makerel{r}}$ and $\reftransC{\makerel{r}}$ for the  reflexive, transitive and reflexive-transitive closures of $\makerel{r}$, respectively. 
We write $\inv{\makerel r}$ for the inverse of $\makerel r$;
$\coerce{\makerel r}{A}$ for $\makerel r \cap (A \times A)$; 
$[A]$ for the identity relation on $A$, \ie $\set{(a, a) \mid a \!\in\! A}$;
$\irr{r}$ for $\nexsts{a}\! (a, a) \!\in\! r$;
and $\acyc{\makerel r}$ for $\irr{\transC{\makerel r}}$.
We write $\makerel r_1; \makerel r_2$ for the relational composition of $r_1$ and $r_2$,
\ie $\{(a, b) \mid \exsts{c}\! (a, c) \!\in\! \makerel r_1 \land (c, b) \!\in\! \makerel r_2\}$.
%
When $A$ is a set of events, we write
$A_{\x}$ for $\set{a \!\in\! A \mid \loc{a} {=} \x}$,
and $\makerel r_{\x}$ for $\coerce{\makerel r}{A_\x}$.
Analogously, we write $A_{\txid}$ for $\set{a \!\in\! A \mid \tx{a} {=} \txid}$.
The \emph{`same-transaction' relation}, $\st \suq \Events \times \Events$, 
is the equivalence relation
$
		\st \eqdef 
		\setcomp{
			(a, b) \in \Events \times \Events
		}{
			\tx{a} {=} \tx{b}
		}	
$.

\begin{definition}[Executions]
\label{def:volatile_executions}
An \emph{execution}, $G \in \Execs$, is a tuple
$(\Events, \po, \clo, \rf, \mo)$, where:
\begin{itemize}
	\item $\Events$ denotes a set of \emph{events}.
          The set of \emph{read} events in $\Events$ is:
          $$\Reads \eqdef \setcomp{e \in \Events}{\labl{e} {=} \readL - -}.$$
	The sets of \emph{memory allocations} ($\Allocs$), 
	\emph{writes} ($\Writes$),
	\emph{aborts} ($\Aborts$),
	\emph{transactional begins} ($\Begins$),
	\emph{transactional commits} ($\Commits$) and 
	\emph{commit successes} ($\Succs$) 
	are defined analogously.

%

%
%
	\item $\po \suq \Events \times \Events$ denotes the \emph{`program-order' relation},  defined as a disjoint
	union of strict total orders, each ordering the events of one thread.

%
%
	\item $\clo \suq \Events \times \Events$ denotes the \emph{`client-order' relation},
	which is a strict partial order between transactions ($\st ; \clo ; \st \suq \clo \setminus \st$)
	that extends the program order between transactions ($\po\setminus \st \suq \clo$).
	Intuitively, this is the order the client may observe on transactions by using other means than
	the transaction system (\eg writing to a shared flag after some transaction successfully commits
	and reading from that flag before starting another transaction on a different thread).
	\item $\rf \suq (\Allocs \cup \Writes) \times \Reads$ denotes the \emph{`reads-from' relation} between events of the same location with matching  values; \ie $(a, b) \in \rf \Rightarrow \loc a {=} \loc b \land \wval a {=} \rval b$.
	Moreover, $\rf$ is total and functional on its range, 
	\ie every read is related to exactly one write.
	\item $\mo \suq \Events \times \Events$ is the \emph{`modification-order'}, defined 
	 as the disjoint union of relations $\{\mo_\x\}_{\x \in \Locs}$, such that each $\mo_\x$ is a strict total order on $\Allocs_\x \cup \Writes_\x$.
%
%
%
%
\end{itemize}
\end{definition}
Given a relation $\makerel r \subseteq \Events \times \Events$,
we write $\tlift{\makerel r}$ for lifting $\makerel r$ to transaction classes:
$
	\tlift{\makerel r} \eqdef \st ; (\makerel r \setminus \st); \st
$.
For instance, when $(w, r) \in \rf$, $w$ is a transaction $\txid_1$ event and $r$ is a transaction $\txid_2$ event, then all events in $\txid_1$ are $\rft$-related to all events in $\txid_2$. 
We write $\tin{\makerel r}$ to restrict $\makerel r$ to its \emph{intra-transactional} edges (within a transaction): $\tin{\makerel r} \eqdef \makerel r \cap \st$;
and write $\tex{\makerel r}$ to restrict $\makerel r$ to its \emph{extra-transactional} edges (outside a transaction):  $\tex{\makerel r} \eqdef \makerel r \setminus \st$. 
Analogously, we write $\internal{\makerel r}$ to restrict $\makerel r$ to its \emph{intra-thread edges}: 
$\internal{\makerel r} \eqdef \setcomp{(a, b) \in \makerel r}{\thrd a {=} \thrd b}$; 
and write $\external{\makerel r}$ to restrict $\makerel r$ to its \emph{extra-thread edges}: 
$\external{\makerel r} \!\eqdef\! \makerel r \setminus \internal{\makerel r}$.

In the context of an execution $G$ (we use the ``$G.$'' prefix to make this explicit),
the \emph{reads-before} relation is $\rb \eqdef (\inv{\rf}; \mo)$.
Intuitively, $\rb$ relates a read $r$ to all writes/allocations $w$ that are $\mo$-after the write $r$ reads from;
\ie when $(w', r) \in \rf$ and $(w', w) \in \mo$, then $(r, w) \in \rb$. 
%

Lastly, we write $\CSet$ for the events of \emph{committing} transactions, \ie those that have reached the commit stage: 
$\CSet \eqdef \dom(\st; [\Commits])$.
We define the sets of \emph{aborted} events, $\ASet$, and
\emph{(commit)-successful} events, $\SSet$, analogously.
We define the set of \emph{commit-pending} events as $\CPSet \eqdef \CSet \setminus  (\ASet \cup \SSet)$, and the set of \emph{pending} events as $\PSet \eqdef \Events \setminus (\CPSet \cup \ASet \cup  \SSet)$.

Given an execution $G {=} (\Events, \po, \clo, \rf, \mo)$, we write $\coerce G A$ for 
$$(\Events \cap A,  \coerce \po {\Events \cap A}, \coerce \clo {\Events \cap A}, 
\coerce \rf {\Events \cap A}, \coerce \mo {\Events \cap A}).$$

We further impose certain ``well-formedness'' conditions on executions,
used to delimit transactions and restrict allocations.
For example, we require that events of the same transaction are by the same thread 
and the each $\txid$ contains exactly one begin event.
The full set of well-formedness condition is presented in \cref{app:wf}.
In particular, these conditions ensure that in the context of a well-formed execution $G$
we have 
\begin{enumerate*}
	\item $G.\SSet \suq G.\CSet$; 
	\item each $\txid$ contains at most a single abort or success ($\size{G.\Events_{\txid} \cap (\Aborts \cup \Succs)} \leq 1$) and thus 
$G.(\SSet \cap \ASet) {=} \emptyset$; and 
	\item $G.\Events = G.(\PSet \uplus \ASet \uplus \CPSet \uplus \SSet)$, \ie the sets $G.\PSet$, $G.\ASet$, $G.\CPSet$ and $G.\SSet$ are pair-wise disjoint. 
\end{enumerate*}

\smallskip\emph{\textbf{Execution Consistency.}}
The definition of (well-formed) executions above puts very few constraints on the $\rf$ and $\mo$ relations. Such restrictions and thus the permitted behaviours of a transactional program are determined by defining the set of \emph{consistent} executions, defined separately for each transactional consistency model. 
The existing literature includes several definitions of well-known consistency models, including \emph{serialisability} (\SER) \cite{ser}, \emph{snapshot isolation} (SI) \cite{SI-Cerone,SI-VMCAI} and \emph{parallel snapshot isolation} (PSI) \cite{PSI-Cerone,PSI-ESOP}.



\smallskip\emph{\textbf{Serialisability (\SER).}}
The \emph{serialisability} (\SER) consistency model \cite{ser} is one of the most well-known transactional consistency models, as it provides strong guarantees that are intuitive to understand and reason about. 
Specifically, under \SER, all concurrent transactions must appear to execute atomically one after another in a \emph{total sequential order}.
The existing declarative definitions of \SER \cite{SI-Cerone,PSI-Cerone,DBLP:journals/pacmpl/RaadWV19} are somewhat restrictive in that they only account for fully committed (complete) transactions, \ie they do not support pending or aborted transactions. 
Under the assumption that all transactions are complete, an execution $(\Events, \po, \clo, \rf, \mo)$ is deemed to be serialisable (\ie \SER-consistent) iff: 
\begin{itemize}
	\item $\rfI \cup \moI \cup \rbI \subseteq \po$%
	\labelAxiom{ser-int}{ax:comp_ser_int}
	\item $\clo \cup \rft \cup \mot \cup \rbt$ is acyclic.%
	\labelAxiom{ser-ext}{ax:comp_ser_ext}
\end{itemize}
The \ref{ax:comp_ser_int} axiom enforces \emph{intra-transactional} consistency, ensuring that \eg a transaction observes its own writes by requiring $\rfI \suq \po$ (\ie intra-transactional reads respect the program order).
Analogously, the \ref{ax:comp_ser_ext} axiom guarantees \emph{extra-transactional} consistency, ensuring the existence of a total sequential order in which all concurrent transactions appear to execute atomically one after another. This total order is obtained by an arbitrary extension
of the (partial) `happens-before’ relation which captures synchronisation resulting from
transactional orderings imposed by client order ($\clo$) or \emph{conflict} between transactions ($\rft \cup \mot \cup \rbt$).
Two transactions are conflicted if they both access (read or write) the same location $x$, and at least one of these accesses is a write. 
As such, the inclusion of $\rft \cup \mot \cup \rbt$ enforces conflict-freedom of serialisable transactions. For instance, if transactions $\txid_1$ and $\txid_2$ both write to $x$ via events $w_1$ and $w_2$ such that
$(w_1,w_2) \in \mo$, then $\txid_1$ must commit before $\txid_2$, and thus the entire effect of $\txid_1$ must be visible to $\txid_2$.

\smallskip\emph{\textbf{Opacity.}}
In our general setting we do not stipulate that all transactions commit successfully and allow for both aborted and pending transactions.
As such, we opt for the stronger notion of transactional correctness known as \emph{opacity}.
In what follows we describe our notion of opacity over executions (formalised in \cref{def:opacity}), and later relate it to the existing notion of opacity over histories \cite{DBLP:series/synthesis/2010Guerraoui} and prove that our characterisation of opacity is \emph{equivalent} to that of the existing one (see \cref{thm:orig-opacity}). 

Intuitively, an execution is opaque iff
\begin{enumerate*}
	\item it is serialisable (upto several relaxations for aborted/pending transactions); and additionally
	\item each external read (be it in a committed, commit-pending, pending or aborted transaction) must read from a write in a \emph{visible} transaction, where a transaction is visible if it has either successfully committed or reached the commit stage (\ie is commit-pending) and is read from by another transaction. 
	\label{item:opacity_rf}
\end{enumerate*}
Specifically, condition \ref{item:opacity_rf} is enforced by \ref{ax:vis_rf} in \cref{def:opacity} below, ensuring that all external reads are from visible transactions (in $\VSet$). 
To see this, consider the execution in \cref{subfig:vis_rf} comprising two transactions, $\txid$ and $\txid'$, where $\txid$ writes 1 to $x$ ($w{:} \writeL x 1$, where $w$ denotes an event with label $\writeL x 1$), and either aborts (with $\abortL$) or remains pending (without any other events).
Similarly, $\txid'$ reads 1 from $x$, and either remains pending (without any other events), aborts (with $\abortL$), commits and remains commit-pending (with $\commitL$) or successfully commits (with $\commitL \!\rightarrow\! \succL$). 
The solid $\rightarrow$ arrows in each transactions denote $\po$ edges. 
As $r$ in $\txid'$ reads from $w$ in $\txid$ which is aborted/pending, this execution violates \ref{ax:vis_rf} and is not opaque.

\begin{figure}[t]
\small
\begin{tabular}{|@{\hspace{2pt}} c @{\hspace{2pt}} | c @{\hspace{2pt}} | c  @{\hspace{2pt}} | @{}}
\hline
\!\begin{subfigure}[b]{0.26\textwidth}
\centering
\scalebox{0.85}{
\begin{tikzpicture}[yscale=0.85,xscale=0.77]\small
  \draw[draw=black,rounded corners,dotted,fill=blue!10] (-0.7,1.4) rectangle (1,-0.45);
  \node[draw=black,rounded corners,dotted,fill=blue!20] at (0.15,1.55) {$\txid$};
  \node (x1)  at (0.15,1) {$w{:}\,\writeL x 1$ };
  \node (a)  at (0.15,0.25) {$(\abortL)$ };
  \node (phantom)  at (0.35,-0.5) {}; 
  \draw[po] (x1) edge (a);
  \draw[draw=black,rounded corners,dotted,fill=yellow!10] (1.6,1.4) rectangle (3.6,-0.45);
  \node[draw=black,rounded corners,dotted,fill=yellow!20] at (2.6,1.55) {$\txid'$};
  \node (rx)  at (2.6,1) {$r {:}\,\readL x 1$ };
  \node (e)  at (2.6,0.25) {($\abortL$/$\commitL$/$\commitL \!\rightarrow\! \succL$)};
  \draw[po] (rx) edge (e);
  \draw[rf] (x1) edge node[below=2pt]{$\rf$} (rx);
\end{tikzpicture}}
\caption{\intabTC{Opaque: \xmark \\(not \ref{ax:vis_rf})}}\vspace{-3pt}
\label{subfig:vis_rf}
\end{subfigure}
&
\!\!\begin{subfigure}[b]{0.42\textwidth}
\centering
\!\scalebox{0.85}{\begin{tikzpicture}[yscale=0.85,xscale=0.77]\small
  \draw[draw=black,rounded corners,dotted,fill=blue!10] (-0.3,1.4) rectangle (1,-0.9);
  \node[draw=black,rounded corners,dotted,fill=blue!20] at (0.35,1.55) {$\txid_x$};
  \node (x1)  at (0.35,1) {$\writeL x 1$ };
  \node (c1)  at (0.35,0.25) {$\commitL$ };
  \node (s1)  at (0.35,-0.5) {$\succL$ };
  \draw[po] (x1) edge (c1);
  \draw[po] (c1) edge (s1);
  \draw[draw=black,rounded corners,dotted,fill=yellow!10] (1.7,1.4) rectangle (3,-0.9);
  \node[draw=black,rounded corners,dotted,fill=yellow!20] at (2.35,1.55) {$\txid$};
  \node (ry1)  at (2.35,1) {$\readL y 1$ };
  \node (wx2)  at (2.35,0.25) {$\writeL x 2$ };
  \node (a1)  at (2.35,-0.5) {($\abortL$/$\commitL$)};
  \draw[po] (ry1) edge (wx2);
  \draw[po] (wx2) edge (a1);
%
%
  \draw[draw=black,rounded corners,dotted,fill=red!10] (3.7,1.4) rectangle (5,-0.9);
  \node[draw=black,rounded corners,dotted,fill=red!20] at (4.35,1.55) {$\txid'$};
  \node (rx1) at (4.35,1) {$\readL x 1$};
  \node (wy2) at (4.35,0.25) {$\writeL y 2$ };
  \node (a2)  at (4.35,-0.5) {($\abortL$/$\commitL$)};
  \draw[po] (rx1) edge (wy2);
  \draw[po] (wy2) edge (a2);
%
%
%
  \draw[draw=black,rounded corners,dotted,fill=green!10] (5.7,1.4) rectangle (7,-0.9);
  \node[draw=black,rounded corners,dotted,fill=green!20] at (6.35,1.55) {$\txid_y$};
  \node (y1)  at (6.35, 1) {$\writeL y 1$ };
  \node (c2)  at (6.35,0.25) {$\commitL$ };
  \node (s2)  at (6.35,-0.5) {$\succL$ };
  \draw[po] (y1) edge (c2);
  \draw[po] (c2) edge (s2);
  \draw[rf,bend left=40] (x1) edge node[above]{$\rf$} (rx1);
  \draw[rf,bend right=40] (y1) edge node[above]{$\rf$} (ry1);
  \draw[fr] (ry1) edge node[above]{$\rb$} (wy2);
  \draw[fr] (rx1) edge node[below]{$\rb$} (wx2);
  \draw[mo] (x1) edge node[above]{$\mo$} (wx2);
  \draw[mo] (y1) edge node[above]{$\mo$} (wy2);
\end{tikzpicture}}\vspace{-5pt}
\caption{Opaque: \cmark} 
\vspace{-3pt}
\label{subfig:rb-ok}
\end{subfigure}\!\!
&
\!\!\begin{subfigure}[b]{0.32\textwidth}
\centering
\!\scalebox{0.85}{\begin{tikzpicture}[yscale=0.85,xscale=0.7]\small
  \draw[draw=black,rounded corners,dotted,fill=blue!10] (-0.3,1.4) rectangle (1,-0.9);
  \node[draw=black,rounded corners,dotted,fill=blue!20] at (0.35,1.55) {$\txid$};
  \node (x1)  at (0.35,1) {$\writeL x 1$ };
  \node (y1)  at (0.35,0.25) {$\writeL y 1$ };
  \node (c)  at (0.35,-0.5) {$\commitL$ };
  \draw[po] (x1) edge (y1);
  \draw[po] (y1) edge (c);
%
%
  \draw[draw=black,rounded corners,dotted,fill=yellow!10] (1.7,1.4) rectangle (3,-0.9);
  \node[draw=black,rounded corners,dotted,fill=yellow!20] at (2.35,1.55) {$\txid_r$};
  \node (rx)  at (2.35,1) {$\readL x 0$ };
  \node (ry)  at (2.35,0.25) {$\readL y 1$ };
  \node (a)  at (2.35,-0.5) {($\abortL$)};
  \draw[po] (rx) edge (ry);
  \draw[po] (ry) edge (a);
  \draw[fr] (rx) edge node[above]{$\rb$} (x1);
  \draw[rf] (y1) edge node[below]{$\rf$} (ry);
  \draw[draw=black,rounded corners,dotted,fill=red!10] (3.7,1.4) rectangle (5,-0.9);
  \node[draw=black,rounded corners,dotted,fill=red!20] at (4.35,1.55) {$\txid'$};
  \node (x0) at (4.35,1) {$\writeL x 0$};
  \node (c') at (4.35,0.25) {$\commitL$ };
  \node (s') at (4.35,-0.5) {$\succL$};
  \draw[po] (x0) edge (c');
  \draw[po] (c') edge (s');
  \draw[rf] (x0) edge node[above]{$\rf$} (rx);  
  \draw[mo,bend right=40] (x0) edge node[above]{$\mo$} (x1);  
\end{tikzpicture}}\vspace{-5pt}
\caption{Opaque: \xmark (not \ref{ax:ext})}\vspace{-3pt}
\label{subfig:rb}
\end{subfigure}\!\!
\\
\hline
\end{tabular}
\vspace{-5pt}
\caption{Several opaque (\cmark) and non-opaque (\xmark) executions, where ($\abortL$) \eg in $\txid$ of \subref{subfig:vis_rf} denotes that the same outcome holds whether $\txid$ has an abort event (\ie $\txid$ has aborted) or not (\ie $\txid$ is pending); similarly, ($\abortL$/$\commitL$/$\commitL \!\rightarrow\! \succL$) in \subref{subfig:vis_rf} denotes that the same outcome holds whether $\txid$ is pending, aborted, commit-pending or committed.}
\label{fig:opaque_executions}
\end{figure}


The remaining axioms in \cref{def:opacity} adapt the notion of serialisability to account for pending/aborted transactions. 
First, as with serialisability, opacity enforces intra-transactional consistency via the \ref{ax:int1} axiom (\cf\ref{ax:comp_ser_int}). 
%
However, the notion of extra-transactional consistency is too strong in the context of pending/aborted transactions. 
To see this, consider the execution in \cref{subfig:rb-ok} comprising four transactions, $\txid_x$ and $\txid_y$ which have successfully committed, and $\txid$ and $\txid'$ which are aborted/pending. 
As depicted, (the event labelled) $\readL x 1$ reads from $\writeL x 1$ which in turn is $\mo$-before $\writeL x 2$, then $\readL x 1$ is $\rb$-before $\writeL x 2$, and thus $\writeL y 2$ is $\rbt$-before $\writeL x 2$. 
Analogously, $\readL y 1$ is $\rb$-before $\writeL y 2$, and thus $\writeL x 2$ is $\rbt$-before $\writeL y 2$. 
As such, this execution contains an $\rbt$ cycle: $\writeL y 2 \relarrow{\rbt} \writeL x 2 \relarrow{\rbt} \writeL y 2$.
Intuitively, \ref{ax:comp_ser_ext} considers $\txid$ and $\txid'$ to be conflicted, and stipulates that the events in one be fully ordered before the other (\ie it precludes $\rbt$ cycles).
However, this notion of conflict should not extend to invisible transactions. 
Indeed, either one of $\txid_1$/$\txid_2$ could abort precisely because they are conflicted with one another.
As such, we only require that there be no $\rbt$ cycles into visible transactions.
That is, we weaken \ref{ax:comp_ser_ext} to \ref{ax:ext}, whereby we relax the $\rbt$ component to $\rbt; [\VSet]$. 
Therefore, as neither of $\txid, \txid'$ in \cref{subfig:rb-ok} are visible, there are no $\rbt; [\VSet]$ cycles in the execution, rendering it opaque.

Note that \ref{ax:ext} is strong enough to preclude bogus executions such as that in \cref{subfig:rb}.
Specifically, although $\txid_r$ observes the write on $y$ by $\txid$ ($\writeL y 1 \relarrow{\rf} \readL y 1$), it neglects the write on $x$ by $\txid$ and instead observes that of $\txid'$ ($\writeL x 0 \relarrow{\rf} \readL x 0$) which is stale ($\writeL x 0 \relarrow{\mo} \writeL x 1$). 
In other words, this executions is bogus as $\txid_r$ observes a partial effect of $\txid$.
Our notion of opacity thus rightly renders this execution not opaque since $\txid$ is visible (it is commit-pending and read from), and this execution contains a prohibited cycle: $\writeL y 1 \relarrow{\rft} \readL x 0 \relarrow{\rbt;[\VSet]} \writeL y 1$.

\begin{definition}[Opacity]
\label{def:opacity}
An execution $G = (\Events, \po, \clo, \rf, \mo)$ is \emph{opaque} iff:
\begin{itemize}
	\item $\dom(\rft) \suq \VSet$\labelAxiom{vis-rf}{ax:vis_rf}
	\item $\rfI \cup \moI \cup \rbI \subseteq \po $%
	\labelAxiom{int}{ax:int1}
%
	\item $(\clo \cup \rft \cup \mot \cup (\rbt;[\VSet]))$ is acyclic \labelAxiom{ext}{ax:ext}
\end{itemize}
where
$\VSet \eqdef \SSet \cup \CPRFSet$ with 
$\CPRFSet \eqdef \dom([\CPSet]; \rft)$.
\end{definition}

The existing definition of opacity \cite{DBLP:series/synthesis/2010Guerraoui} does not account for memory allocation and assume that all locations accessed (read/written) by a transaction are initialised with some value (typically 0).
In our setting, we make no such assumption and extend the notion of opacity to \emph{dynamic opacity} to account for memory allocation. 
More concretely, our goal is to ensure that accesses in \emph{visible} transactions are \emph{valid}, in that they are on locations that have been previously allocated in a visible transaction. 
Note that we only require accesses in visible transactions to be valid as an invisible transaction with invalid accesses may always abort. 
Specifically, it suffices to require that \emph{write} accesses in visible transactions be valid, as the validity of write accesses then follows from the validity of write accesses and the \ref{ax:vis_rf} axiom of opacity.
In particular, given an execution with $(a, r) \in \rf$ and $a \in \Allocs \cup \Writes$, 
either 
\begin{enumerate*}
	\item $a \in \Allocs$, in which case from \ref{ax:vis_rf} we know $a \in \VSet$ and thus $r$ is a valid access; or
	\item $a \in \Writes$, in which case so long as $a$ itself is valid, then $r$ would also be valid. 
\end{enumerate*}
To this end, we define an execution to be dynamically opaque (\cref{def:dynamic_opacity}) if its visible write accesses are valid, \ie are $\mo$-preceded by a visible allocation.

\begin{definition}[Dynamic opacity]
\label{def:dynamic_opacity}
An execution $G$ is \emph{dynamically opaque} iff it is opaque (\cref{def:opacity}) and
$
	G.(\Writes \cap \VSet) \suq \rng\big([\Allocs \cap \VSet]; G.\mo\big)
$.
\end{definition}

We next use the above definitions to define (dynamic durable) opacity over execution \emph{histories}.
In the context of persistent memory where executions may crash (\eg due to a power failure) and resume thereafter upon recovery, a history is a sequence of events (\cref{def:events})  partitioned into different \emph{eras} separated by \emph{crash markers} (recording a crash occurrence), provided that the threads in each era are distinct, \ie thread identifiers from previous eras are not reused after a crash.

\begin{definition}[Histories]
\label{def:histories}
A \emph{history}, $\hist \!\in\! \Hists$, is a pair $(\Events, \teo)$, where $\Events$ comprises events and \emph{crash markers}, $\Events \suq \EventsType \cup \CrashMarkers$ with $\CrashMarkers \eqdef \setcomp{(n, \crash)}{n \!\in\! \Nats}$, and $\teo$ is a total order on $\Events$, such that:
\begin{itemize}
	\item $(\Events, \internal\teo)$ is well-formed (\cref{def:wf_execution}); and
	\item events separated by crash markers have distinct threads: $([\Events]; \teo; [\CrashMarkers]; \teo; [E]) \cap \internal\teo = \emptyset$.
\end{itemize}
A history $(\Events'\!, \pteo)$ is a \emph{prefix} of history $(\Events, \teo)$ iff $E' \suq E$, $\pteo = \coerce \teo {E'}$ and $\dom(\teo; [\Events']) \suq \Events'$.
\end{definition}

\begin{definition}
The \emph{client order} induced by a history $\hist= (\Events, \teo)$,
denoted by $\clo(\hist)$, is the partial order on $\TXIDs$
defined by $\clo(\hist) \defeq [\Succs\cup \Aborts] ; \tlift{\teo} ; [\Begins]$.
\end{definition}

We define history opacity as a \emph{prefix-closed} property (\cf
\cite{DBLP:series/synthesis/2010Guerraoui}), designating a history
$\hist$ as opaque if \emph{every prefix} $(\Events, \pteo)$ of $\hist$
induces an opaque execution. 
 
\begin{definition}[History opacity]
\label{def:history_opacity___old}
A history $\hist$ is \emph{opaque} iff for each prefix $\hist_p = (\Events, \pteo)$ of $\hist$, there exist $\rf, \mo$ such that 
$(\Events, \internal\pteo, \clo(\hist_p), \rf, \mo)$ is opaque (\cref{def:opacity}). 
\end{definition}

The notion of dynamic opacity over histories is defined analogously.

\begin{definition}[History dynamic opacity]
\label{def:history_dyn_opacity}
A history $\hist$ is \emph{dynamically opaque} iff 
for each prefix $\hist_p {=} (\Events, \pteo)$ of $\hist$, 
there exist $\rf, \mo$ such that $(\Events, \internal\pteo, \clo(\hist_p), \rf, \mo)$ is dynamically opaque (\cref{def:dynamic_opacity}). 
\end{definition}

We define \emph{durable opacity} over histories: a history $\hist$ is durably
opaque iff the history obtained from $\hist$ by removing crash
markers is opaque. 
We define \emph{dynamic, durable
  opacity} analogously.

\begin{definition}[History durable opacity]
\label{def:history_dur_opacity}
A history $(\Events, \teo)$ is \emph{durably opaque} iff the history $(\Events \setminus \CrashMarkers, \coerce \teo {\Events \setminus \CrashMarkers})$ is opaque.
A history $(\Events, \teo)$ is \emph{dynamically and durably opaque} iff the history $(\Events \setminus \CrashMarkers, \coerce \teo {\Events \setminus \CrashMarkers})$ is dynamically opaque.
\end{definition}

Finally, we show that our definitions of history (durable) opacity are
equivalent to the original definitions in the literature. (See \cref{app:org_opacity}
for the proof.)
\begin{theorem}
  \label{thm:orig-opacity}
  History opacity as defined in \cref{def:history_opacity___old} is
  equivalent to the original notion of
  opacity~\cite{DBLP:series/synthesis/2010Guerraoui}. History durable
  opacity as defined in \cref{def:history_dur_opacity} is equivalent
  to the original notion of durable
  opacity~\cite{DBLP:conf/forte/BilaDDDSW20}.
\end{theorem}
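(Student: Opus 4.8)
The plan is to prove each of the two equivalences by exhibiting translations in both directions between the declarative witnesses (the relations $\rf$ and $\mo$ satisfying \cref{def:opacity}) and the sequential serialisations used in the original literature, and then to obtain the durable statement as a corollary of the plain one. I first fix the representational correspondence between our event histories (\cref{def:histories}) and the invocation/response histories of \cite{DBLP:series/synthesis/2010Guerraoui}: since each per-transaction operation is atomic in our model, the two representations carry the same information, and I would verify that the real-time order used in the original definition corresponds exactly to the client order $\clo(\hist) = [\Succs\cup\Aborts];\tlift{\teo};[\Begins]$. Recall that original opacity deems a history opaque iff some completion (obtained by aborting pending transactions and either committing or aborting commit-pending ones) admits an equivalent sequential, \emph{legal}, real-time-preserving history $S$.

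For the forward direction (original $\Rightarrow$ declarative), given such a witnessing $S$ for a completion of $\hist$, I would define $\mo$ as the per-location total order induced by the order of writes and allocations in $S$, and define $\rf$ by letting each read take the last write or allocation to its location preceding it in $S$ (well-defined by legality). I would then check the three axioms of \cref{def:opacity}: \ref{ax:int1} holds because $S$ respects intra-transaction program order and legality; \ref{ax:vis_rf} holds because legality forbids reading from a write of an aborted transaction, so every $\rf$-source lies in a successful or read-from commit-pending transaction, i.e.\ in $\VSet$; and \ref{ax:ext} holds because $S$ is a total order extending client order containing all of $\clo$, $\rft$, $\mot$ and $\rbt;[\VSet]$ (the $[\VSet]$ restriction matching the fact that only visible transactions are forced before conflicting writes in $S$). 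Restricting $S$ to each prefix yields the per-prefix witnesses required by \cref{def:history_opacity___old}.

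For the backward direction (declarative $\Rightarrow$ original), I would apply the declarative definition to the maximal prefix to obtain $\rf, \mo$ satisfying the axioms. Since $\clo \cup \rft \cup \mot \cup (\rbt;[\VSet])$ is acyclic by \ref{ax:ext}, I extend it to a total order on transaction classes and read off a sequential history $S$, choosing the completion that commits exactly the visible transactions and aborts the rest. I would then show $S$ preserves real-time order (from its $\clo$ component) and is legal: each read sees the most recent committed write to its location because $\rf$ points into $\VSet$ (\ref{ax:vis_rf}) while $\mo$ and $\rb$ place any intervening committed write either before the $\rf$-source or after the read, using acyclicity of $\rbt;[\VSet]$ together with \ref{ax:int1} for intra-transaction reads. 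The delicate point is placing the invisible (aborted or non-read commit-pending) transactions consistently; I would slot each immediately after the transaction it reads from, which is legal precisely because its own writes are never read by others.

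For durable opacity, both \cref{def:history_dur_opacity} and the original notion of \cite{DBLP:conf/forte/BilaDDDSW20} are defined by deleting the crash markers $\CrashMarkers$ from the history and requiring the residue to be opaque; hence the durable equivalence follows immediately from the plain-opacity equivalence applied to $(\Events \setminus \CrashMarkers, \coerce{\teo}{\Events\setminus\CrashMarkers})$, once I check that crash-marker erasure and the distinct-threads-per-era condition of \cref{def:histories} coincide with the era structure assumed by Bila et al. The main obstacle I anticipate throughout is the precise handling of the $[\VSet]$ asymmetry in \ref{ax:ext}: matching the weakened $\rbt;[\VSet]$ component to the original treatment of aborted and commit-pending transactions — guaranteeing that invisible transactions can always be inserted into the serialisation without violating legality or real-time order — together with reconciling the per-prefix quantification of \cref{def:history_opacity___old} against the single global serialisation demanded by the original definition.
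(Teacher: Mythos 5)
Your core constructions coincide with the paper's proof of its key lemma (Theorem~\ref{thm:org_opacity}): in one direction you take $\mo$ to be the serialisation order restricted to same-location writes and let $\rf$ pick the last preceding successful-or-internal write (well defined by legality), then verify \ref{ax:int1}, \ref{ax:vis_rf} and \ref{ax:ext} by showing the serialisation contains $\clo \cup \rft \cup \mot \cup (\rbt;[\VSet])$; in the other direction you complete exactly the read-from commit-pending transactions (so that the completed-as-committed transactions are $\VSet$), abort the rest, totalise the acyclic relation of \ref{ax:ext}, and prove legality by contradiction using $\rb$ edges. The durable statement is likewise dispatched identically on both sides by erasing crash markers.

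There are, however, two genuine soft spots. First, the prefix quantification is mishandled in both directions. Forward: you claim that restricting the single serialisation $S$ of (a completion of) the full history to each prefix yields the per-prefix witnesses; this step fails, because $S$ may serialise a write $w$ before a read $r$ even though $w$ occurs in real time \emph{after} $r$, so $r$'s $\rf$-source need not lie in the prefix at all, and the restriction of $S$ need not even be a completion of the prefix (a transaction that is merely pending there may be committed in $S$, while completions must abort pending transactions). Backward: constructing $S$ only for the maximal prefix establishes final-state opacity of the full history, but original opacity demands final-state opacity of \emph{every} prefix. Both problems disappear once you observe that the two history-level definitions quantify over prefixes in lockstep, so the right move --- and what the paper does --- is to prove the single-history equivalence (final-state opaque $\iff$ existence of $\rf,\mo$ making the induced execution opaque) and apply it prefix-by-prefix, each prefix carrying its own witness. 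Second, your plan to ``slot each invisible transaction immediately after the transaction it reads from'' is ill-defined (a transaction may read from several distinct transactions) and unnecessary: since $\rft$ and $\rbt;[\VSet]$ are already included in the relation you totalise, \emph{any} extension places an invisible reader after all of its sources and before every visible writer that overwrites a value it read, which is precisely what the paper's legality case analysis exploits.
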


\newcommand{\Pending}{\Pi}
\newcommand{\DoPending}{\Delta}
\newcommand{\silent}{\varepsilon}

\section{Operationally Proving Dynamic Durable Opacity}
\label{sec:oper-prov-dynam}
We develop an operational specification, \DDTMS (\cref{sec:DTMS2-full}), and prove it correct against \DDO (\cref{sec:soundness-ddtms}). In particular, we show that every history (\ie observable trace) of \DDTMS satisfies \DDO. As \DDTMS is a
concurrent operational specification, it serves as basis for validating
the correctness of \PMDKTX as well as our concurrent extensions \PMDKT and
\PMDKN via refinement.


\subsection{\DDTMS: The \DTMS Automaton Extended with Allocation}
\label{sec:DTMS2-full}

\DDTMS is based on \DTMS, which is an operational specification
that guarantees durable
opacity~\cite{DBLP:conf/forte/BilaDDDSW20}. \DTMS in turn is based on
TMS2 automaton~\cite{DBLP:journals/fac/DohertyGLM13}, which is known
to satisfy opacity~\cite{LLM12}. 
Furthermore, the  \DDTMS commit operation includes the
simplification described by~Armstrong et al~\cite{DBLP:conf/forte/ArmstrongDD17},
omitting a validity check when committing read-only
transactions. In what follows we present \DDTMS as a transition system. 
In \cref{app:DDTMS_automata} we present an equivalent model based on
input/output automata for those familiar with prior work.

\begin{figure}[t]
  \centering
  $\small
\begin{array}{@{} l @{}}
\begin{array}{@{} r @{\hspace{2pt}} l @{}}
	\mems \in \Mems \eqdef & \Seq{\Locs \to \Vals_\bot} \qquad  \Vals_\bot \eqdef \Vals \cup \{\bot\}, \text{where } \bot \notin \Vals \\
	\smap \in \SMaps \eqdef & \TXIDs \rightarrow \States \\
  \s \in \States \eqdef & \Nats \times (\Locs \pfun \Vals) \times (\Locs \pfun \Vals) \times \pset{\Locs} \\
  & \text{storing the local begin index, read set, write set and allocation set}  \\
	\pcmap \in \PCMaps \eqdef & \TXIDs \rightarrow \PCVals 
  \\
	\textsc{Invs} \eqdef &
	\setcomp{\tmbegin,
                \tmread(l),
		\tmwrite(l, v), \\
		\tmalloc, 
		\tmcommit
	}{
		l \in \Locs, 
		v \in \Vals
	}\\
	\textsc{Resps} \eqdef &
	\setcomp{\tmbegin,
                \tmread(l,v),
		\tmwrite(l, v), \\
		\tmalloc(l), 
		\tmcommit, \abort
	}{
		l \in \Locs, 
		v \in \Vals
	}\\
  \PCVals \eqdef &
	\setcomp{
		\pcNotStarted, 
		\pcReady, 
		\pcAborted, 
		\pcCommitted, \pcChaos, 
                \Pending(i),
		\DoPending(\tmcommit) 
	}{
		i \in \textsc{Invs}
	}\\
	\act \in \Actions \eqdef  &
	\begin{array}[t]{@{} l @{}}
          \setcomp{\invkw(i), \reskw(r), \silent, \crash
          }{
          i \in \textsc{Invs},
          r \in \textsc{Resps}
          }
        \end{array}
  \end{array}
  \\[5pt]
  \hline
  \\[-5pt]
  \text{Initially, } 
  \begin{array}[t]{@{} l @{}}
    \pcmap_0 \eqdef 
    \lambda\txid. \pcNotStarted \qquad
    \smap_0 \eqdef
    \lambda \txid. (0, \emptyset, \emptyset, \emptyset)
    \qquad \mems_0 \eqdef [\lambda x.\ \bot] 
  \end{array}\\
\end{array}$
\caption{\DDTMS state}
\label{fig:ddtms-state}
\end{figure}


\smallskip\emph{\textbf{\DDTMS state.}}
Formally, the state of \DDTMS is given by the variables in
\cref{fig:ddtms-state}.  
\DTMS keeps track of a sequence of memory stores, $\mems$, one for
each committed writing transaction since the last crash. This allows
us to determine whether reads are consistent with previously committed
write operations. Each committing transaction that contains at least one
write adds a new memory version to the end of the memory sequence. As
we shall see, $\mems$ also tracks allocated locations since it maps
every allocated location to a value different from $\bot$. 

Each transaction $\txid$ is associated with several variables: $\getPC$, $\getBIdx$, $\getRSet$, $\getWSet$ and $\getASet$. 
The $\getPC$ denotes the  program counter, ranging over a set of {\em program counter values} ensuring each transaction is well-formed and
that each transactional operation takes effect between its invocation
and response. 
The $\getBIdx \in \Nats$ denotes the {\em begin index}, 
set to the index of the most recent
memory version when the transaction begins. This is used to
ensure the real-time ordering property between transactions. 
The $\getRSet \in \Locs \pfun \Vals$ is the {\em read set} and $\getWSet \in \Locs \pfun \Vals$ is the {\em write set}, recording the values read and written by 
the transaction during its execution,
respectively. We use $S \pfun T$ to denote a partial function from $S$
to $T$. 
Finally, $\getASet \subseteq \Locs$ denotes the {\em allocation set}, containing the set of locations allocated by the transaction $\txid$.  
We use $\s.\bidx$, $\s.\rset$, $\s.\wset$ and $\s.\aset$ to refer to the begin index,
read set, write set and allocation set of a state $\s$, respectively.

The read set is used to determine whether the values read by the transaction are consistent with its version of memory (using $\isValidIdxName$). The write set, on the other hand, is required because writes are modelled using {\em deferred update} semantics:
writes are recorded in the transaction's write set and are not
published to any shared state until the transaction commits.

\begin{figure}
  \fontsize{8.6pt}{10.32pt}\selectfont
  \def \MathparLineskip {\lineskip=0.25cm}
  $\begin{array}[t]{@{}l@{}}
	\isValidIdx n {\s, \mems} \iffdef
	\begin{array}[t]{@{} l @{}}
		 \s.\bidx \leq n < \size{\mems} 
		 \land \s.\rset \subseteq \mems(n) \\
		 {} \land \s.\aset \subseteq \setcomp{l}{\mems(n)(l)=\bot}	 
	\end{array}
\end{array}$
\begin{mathpar} 
\infer[(\textsc{S})]{
  \inarr{\pcmap, \smap, \mems \redT{\act_{\txid}} \\
    \pcmap[\txid \mapsto \pc], \smap[\txid \mapsto \s], \mems'}
}{
  \inarrC{\pcmap(\txid), \smap(\txid), \mems \redT{\act} \pc, \s, \mems'\!
	\\ 
	 \pc {\ne} \pcChaos}
}
\quad
\infer[(\textsc{F})]{
  \inarr{\pcmap, \smap, \mems \redT{\segFault} \\ \pcmap', \smap[\txid \mapsto \s], \mems'}
}{
  \inarrC{\pcmap(\txid), \smap(\txid), \mems \redT{\segFault} \pcChaos, \s, \mems' \\
  \pcmap' {=} \lambda \txid. \pcChaos}
}
\and
\infer[(\textsc{X})]{
	\pcmap, \smap, \mems \redT{\crash} \pcmap', \smap, \langle last(\mems) \rangle
}{
	\pcmap'\! \!=\!
	\lambda \txid.\ 
	\inarrT{\sif\ \pcmap(\txid) \!\nin\! 
	\{\pcNotStarted, \pcCommitted, \pcChaos
        \} \\
	\;\sthen\; \pcAborted 
	\;\selse\; \pcmap(\txid)}
}
\quad
\infer[(\textsc{C})]{
	\pcmap, \smap, \mems \redT{\act_\txid} \pcmap, \smap, \mems
}{
	\pcmap = \lambda \txid. \pcChaos
}%
\end{mathpar}
\hrule\vspace{1pt}\hrule
\begin{mathpar}
\inferrule[\textsc{(IB)}]{
 \inarr{
   \pc = \pcNotStarted}
}{
  \pc, \s, \mems \redT{\invkw(\tmbegin)}  \\\\
    \DoPending(\tmbegin), \s', \mems
}
\quad
\inferrule[\textsc{(DB)}]{
    \pc {=} \DoPending(\tmbegin) 
    \\\\
    \s' {=} \s[\bidx \mapsto \size{\mems} {-} 1]
}{
  \pc, \s, \mems \redT{\reskw(\tmbegin)} \\\\
    \pcReady, \s, \mems}
\quad
\inferrule[\textsc{(IOp)}]{
  \pc = \pcReady \quad a \in {\it InvOps}
}
{
  \pc, \s, \mems \redT{\invkw(a)} \\\\ \DoPending(a), \s, \mems
}
\\
\inferrule[\textsc{(DR-E)}]{
	\pc \!=\! \DoPending(\tmread(l))
	\\\\ l \!\nin\! \s.\aset \cup\! \dom(\s.\wset) 
	\\\\ \isValidIdx n {\s, \mems} 
	\\\\ \mems(n)(l) = v \qquad v \neq \bot 
	\\\\ \mathit{rs} = \s.\rset \oplus \{l \!\mapsto\! v\}
}
{
  \pc, \s, \mems \redT{\reskw(\tmread(l, v))}
  \\\\
    \pcReady, \s[\rset \mapsto \mathit{rs}], \mems
}
\and
\inferrule[\textsc{(FR)}]
{
  \pc \!=\! \DoPending(\tmread(l))
  \\\\ l \!\nin\! \s.\aset \cup\! \dom(\s.\wset) 
  \\\\  \isValidIdx n {\s, \mems} 
  \\\\ \mems(n)(l) {=} \bot 
}
{
  \pc, \s, \mems \redT{\segFault} \\\\\pcChaos, \s, \mems
}
\and
\inferrule[\textsc{(RA)}]{
  \pc \not\in \inset{
		\pcNotStarted, \pcReady, \\
		\pcCommitted, \\
                \pcAborted, \pcChaos
	}}
    {
  \pc, \s, \mems \redT{\abortResp{}} \\\\ \pcAborted, \s, \mems
}
\and
\infer{
  \inarr{\pc, \s, \mems \redT{\reskw(\tmread(l, v))} \\
         \hfill \pcReady, \s, \mems}
}{
  \inarr{\textsc{(DR-I)}\\
  \begin{array}[t]{@{}c@{}}
	\pc = \DoPending(\tmread(l))
	\\ l \!\in\! \dom(\s.\wset) 
	\\   \s.\wset(l) = v
  \end{array}}
}
\quad
\infer{
  \inarr{\pc, \s, \mems \redT{\reskw(\tmread(l, 0))} \\ \hfill \pcReady, \s, \mems}
}{
  \inarr{\textsc{(DR-A)} \\
  \begin{array}[t]{@{}c@{}}
	\pc = \DoPending(\tmread(l))
	\\ l \nin \dom(\s.\wset) 
	\\ l \in \s.\aset
  \end{array}}
}
\quad 
\infer{
  \inarr{\pc, \s, \mems \redT{\reskw(\tmwrite(l,v))} \\
    \pcReady, \s[\wset \mapsto \mathit{ws}], \mems}
}{
  \inarr{\textsc{(DW)} \\
  \begin{array}[t]{@{}c@{}}
	\pc = \DoPending(\tmwrite(l, v))
	\\ l \in \s.\aset \lor \last{\mems}(l) \neq\bot
	\\ \mathit{ws} = \s.\wset \oplus \{l \!\mapsto\! v\}
  \end{array}}
}
\and
\infer{
  \inarr{\pc, \s, \mems \redT{\segFault} \\ \pcChaos, \s, \mems}
}{
  \inarr{
    \textsc{(FW)}
    \\
  \begin{array}[t]{@{}c@{}}
    \pc = \DoPending(\tmwrite(l, v))
    \\
    l \notin \s.\aset
    \\  \last{\mems}(l) = \bot
  \end{array}}
}
\and
\infer{
  \inarr{\pc, \s, \mems \redT{\reskw(\tmalloc(l))}
    \\\pcReady, \s[\aset \mapsto \mathit{as}], \mems}
      }{
        \inarr{\textsc{(DA)} \\
  \begin{array}[t]{@{}c@{}}
	\pc = \DoPending(\tmalloc) 
	\\ l \!\nin\! \s.\aset
	\\ \mathit{as} = \s.\aset \uplus \{l\}
  \end{array}}
}
\and
\infer{
  \inarr{\pc, \s, \mems \redT{\silent} \\\Pending(\tmcommit), \s, \mems}
      }{
        \inarr{\textsc{(DC-RO)} \\
  \begin{array}[t]{@{}c@{}}
	\pc = \DoPending(\tmcommit)
	\\ \s.\aset = \emptyset
	\\ \dom(\s.\wset) = \emptyset
  \end{array}}
}
\and
\infer{
  \inarrC{\pc, \s, \mems \redT{\reskw(\tmcommit)}  \\\pcCommitted, \s, \mems}
      }{
        \inarr{\textsc{(RC)} \\
  \begin{array}[t]{@{}c@{}}
    \pc = \Pending(\tmcommit)
  \end{array}}
}
\quad 
\infer{
	\pc, \s, \mems \redT{\silent} \Pending(\tmcommit), \s, \mems'
      }{
        \inarr{\textsc{(DC-W)} \\
	\begin{array}{@{} c @{}}
		\pc = \DoPending(\tmcommit)
		\\ \isValidIdx {\last{\mems}} {\s, \mems}
          \\
		\mems'\! \!=\! \mems \!\cat\! ((\last{\mems} \!\oplus\! \setcomp{l \!\mapsto\! 0}{l \!\in\! \s.\aset}) \!\oplus\! \s.\wset)
	\end{array}}
}
\end{mathpar}
\hrule
\caption{The \DDTMS global transitions (above) with its per-transaction transitions (below), where ${\it InvOps} \eqdef \{\tmwrite(l, v), \tmread(l) \mid l
  \in \Locs, v \in \Vals\} \cup \{ \tmalloc, \tmcommit\}$}
\label{fig:ddtms}
\end{figure}


\paragraph{\DDTMS Global Transitions}
\DDTMS is specified by the
transition system shown in \cref{fig:ddtms}, where the \DDTMS global
transitions are given at the top and the per-transaction transitions
are given at the bottom. The global transitions may either take a
per-transaction step (rule (\textsc{S})), match a transaction fault
(rule (\textsc{F})), crash (rule (\textsc{X})), or behave chaotically
due to a fault (rule (\textsc{C})).

Note that a {\em crash} transition models both a crash and a
recovery. It sets the program counter of every live transaction to
$\pcAborted$, which prevents these transactions from performing any
further actions after the crash. Since transaction identifiers are not
reused, the program counters of completed transactions need not be
modified. After restarting, it must not be possible for any new
transaction to interact with memory states prior to the crash. We
therefore reset the memory sequence to be a singleton sequence
containing the last memory state prior to the crash.

As already discussed, following the design of \PMDKTX (and our
concurrent extensions \PMDKT and \PMDKN) we do not check for reads and
writes to unallocated memory within the library and instead delegate such
checks to the client. An execution of \PMDKTX (as well as \PMDKT and
\PMDKN) that accesses unallocated memory is assumed to be faulty. In
particular, a read or write of unallocated memory induces a {\em
  fault} (rule (\textsc{F})). Once a fault is triggered, the program
counter of each transaction is set to ``$\pcChaos$'' and recovery is
impossible.  From a faulty state the system behaves chaotically, \ie
it is possible to generate any history using rule (\textsc{C}).

\paragraph{\DDTMS Per-Transaction Transitions}
The system contains externally visible transitions for {\em invoking}
an operation (rules \textsc{IB} and \textsc{IOp}), which set the
program counters to $\DoPending(a)$, where $a$ is the operation being
performed. This allows the histories of the system to contain
operation invocations without corresponding matching responses.

For the begin, allocation, read and write operations an invocation can be
followed by a single transition (rules \textsc{DB}, \textsc{DA},
\textsc{DR-E}, \textsc{DR-I}, \textsc{DR-E} and \textsc{DW})
that performs the operation combined with the corresponding {\em
  response}. Note that this differs from {\em canonical input/output
  automata}~\cite{DBLP:books/mk/Lynch96}, and consequently \DTMS,
where each operation has a separate internal ``do'' action followed by
an externally visible response. In the case of \DDTMS, the ``do''
and ``response'' for begin, allocation, read and write can be combined
into one action as the effect of these operations are not
externally visible to other transactions until a commit
occurs. Note that $\reskw(\tmread(l,v))$ in \textsc{DR-E},
\textsc{DR-I} and \textsc{DR-E} can be simplified to
$\reskw(\tmread(v))$, and $\reskw(\tmwrite(l,v))$ in \textsc{DW}
can be simplified to $\reskw(\tmwrite(v))$. However, the location
that each read/write operation is responding to is needed in
\cref{sec:soundness-ddtms}.

Following an invocation, the commit operation is split into internal
do actions ((\textsc{DC-RO}) and (\textsc{DC-W})) and an external
response (rule \textsc{RC}). Finally, after a read/write invocation, the
system may perform a {\em fault transition} for a read (rule
\textsc{FR}) or a write (rule \textsc{FW}).

We next discuss the \DDTMS behaviour in detail. The main change from
$\DTMS$ is the inclusion of an allocation procedure to model
dynamic allocation within a transaction. This procedure affects other operations. The design of \DDTMS allows the
executing transaction, $\txid$, to tentatively allocate a location $l$
within its transaction-local allocation set,
$\getASet$. 
Note that this allocation in
$\DDTMS$ is  
optimistic -- correctness of the allocation is only checked when
$\txid$ performs a read or
commits. 

Successful (non-faulty) read and write operations take allocations
into account as follows.
\begin{enumerate*}[label=\bfseries(\arabic*)]
\item A read operation of transaction $\txid$ reads from a prior write
  (rule \textsc{(DR-I)}) or allocation (rule \textsc{(DR-A)})
  performed by $\txid$ itself. In this case, the operation may only
  proceed if the location $l$ is either in the allocation or write set
  of $\txid$. The effect of the operation is to return the value of
  $l$ in the write set (if it exists) or $0$ if it only exists in the
  allocation set.
\item A read operation of transaction $\txid$ reads from a write or
  allocation performed by another transaction (rule
  \textsc{(DR-E)}). 
  Note that as with $\DTMS$ and TMS2, in $\DDTMS$ a read-only
  transaction may serialise with any memory index $n$ after
  $\getBIdx$. 
  Moreover, within $\isValidIdxName$, in
  addition to ensuring that $\txid$'s read set is consistent with the
  memory index $n$ (second conjunct), we must also ensure that
  $\txid$'s allocation set is consistent with memory index $n$ (third
  conjunct) by ensuring that none of the locations in the allocation
  set have been allocated at memory index $n$. 
\item A write of transaction $\txid$ successfuly performs its
  operation (rule (\textsc{DW})), which can only happen if the
  location $l$ being written has been allocated, either by $\txid$
  itself (first disjunct), or by a prior transaction (second
  disjunct). A writing transaction must serialise after the
  last memory index in $\mems$, thus the second disjunct checks
  allocation against the last memory index. 
\end{enumerate*}

A successful (non-faulty) transaction is split into two cases:
\begin{enumerate}[label=\bfseries(\arabic*)]
\item $\txid$ is a read-only transaction (rule (\textsc{DC-RO})),
  where both $\getASet$ and $\getWSet$ are empty for $\txid$. In this
  case, the transaction simply commits.
\item $\txid$ has performed an allocation or a write (rule
  (\textsc{DC-W})). In this case, we check that $\txid$ is valid
  with respect to the last memory in $\mems$ via the predicate
  $\isValidIdxName$. 
    Here,
    the commit introduces a new memory into the memory sequence
    $\mems$. The update also ensures that all pending allocations in
    $\getASet$ take effect prior to applying the writes from $\txid$'s
    write set.

\end{enumerate}


\subsection{Soundness of \DDTMS}
\label{sec:soundness-ddtms}

We state our main theorem relating \DDTMS to \DDO. 
As the models are inherently different, we need several
definitions to transform \DDTMS histories to those compatible with \DDO. 

An {\em execution} of a labelled transition system (LTS) is an
alternating sequence of states and actions, \ie a sequence of the
form
$s_0\, a_1\, s_2\, a_2 \dots s_{n-1}\,
a_{n}\, s_n$ such that for each $0 < i \le n$,
$s_{i-1} \redT{a_i} s_{i}$ and $s_0$ is an initial
state of the LTS.
Suppose $\sigma$ is an execution of $\DDTMS$. We let
${\it AH}_\sigma = a_1\, a_2 \dots a_n$ be the {\em action history}
corresponding to $\sigma$, and ${\it EH}_\sigma$ be the {\em external
  history} of $\sigma$, which is ${\it AH}_\sigma$ restricted to
non-$\epsilon$ actions. Let ${\it FF}_\sigma$ be the longest {\em
  fault-free prefix} of ${\it
  EH}_\sigma$.  
We generate the history (in the sense of \cref{def:histories})
corresponding to ${\it FF}_\sigma$ as follows. First, we construct the
{\em labelled history}, ${\it LH}_\sigma$ of $\sigma$ from
${\it FF}_\sigma$ by removing all invocation actions (leaving only
response and crash actions). Then, we replace each response
$a_i = \alpha_{\txid}$ by the event $(i, \txid, \txid, L(\alpha))$,
where we have:
\begin{align*}
  L(\reskw(\tmbegin)) & = \tbeginL
  & L(\reskw(\tmalloc(l))) & = \allocL l
  \\ L(\reskw(\tmread(l, v))) & = \readL l v
  & 
  L(\reskw(\tmwrite(l, v))) & = \writeL l v
  \\ L(\invkw(\tmcommit)) & = \commitL
  & 
  L(\reskw(\tmcommit)) & = \succL
  \\
  L(\reskw(\abort)) & = \abortL
\end{align*}
Similarly, we replace each crash action $a_i = \crash$ by the pair
$(i, \crash)$.  Note that in this construction, for simplicity, we
conflate threads and transactions, but this restriction is
straightforward to generalise.  Finally, we let the {\em ordered
  history} of $\sigma$, denoted ${\it OH}_\sigma$, be the total order
corresponding to ${\it LH}_\sigma$.

\begin{theorem}
  \label{thm:soundness-ddtms-1}
  For any execution $\sigma$ of \DDTMS, the ordered history
  ${\it OH}_\sigma$ satisfies \DDO.
\end{theorem}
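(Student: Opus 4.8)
The plan is to unfold the definition of \DDO (\cref{def:history_dur_opacity}) and produce, for the crash-marker-free history underlying ${\it OH}_\sigma$, explicit witness relations $\rf$ and $\mo$ that make every prefix a dynamically opaque execution (\cref{def:dynamic_opacity}). Two reductions come first. For \emph{prefix-closure}: since the prefixes of an LTS execution are themselves LTS executions and the history construction commutes with prefixing, it suffices to fix one execution $\sigma$ and build witnesses for its full history; applying the construction to each prefix execution then discharges the universally-quantified prefix condition. For \emph{durability}: the crash rule (\textsc{X}) sets every live transaction's program counter to $\pcAborted$ and reinitialises $\mems$ to the singleton holding the last committed memory. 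Hence every transaction interrupted by a crash becomes aborted (so invisible, outside $\VSet$), and the post-crash era resumes from exactly the pre-crash committed state. Removing crash markers therefore yields a history whose only genuine ordering constraints come from committed transactions, which can be serialised era by era by concatenating the per-era memory-version orders; this is what lets us treat the whole crash-free history uniformly.

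The witnesses are read off the operational state. I would define $\mo$ from the memory sequence $\mems$: the allocations and writes of a committed writing transaction all take effect at the version it appends via (\textsc{DC-W}), and \emph{within} that version the allocations (installed as $l \mapsto 0$) precede the write set; across transactions, order by version index; writes and allocations of transactions that never commit (pending or aborted) are placed last, consistently with $\po$. For $\rf$: an internal read (rules (\textsc{DR-I}), (\textsc{DR-A})) is linked to the transaction's own most recent $\po$-earlier write or allocation of that location; an external read (rule (\textsc{DR-E})) using valid index $n$ with $\mems(n)(l)=v$ is linked to the unique event of the latest committed transaction at index $\le n$ that set $l$ to $v$ (an allocation when the value $0$ was installed by that transaction's allocation, a write otherwise). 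Well-definedness rests on a collection of reachable-state invariants — principally that each version is produced by exactly one transaction, that $\s.\bidx \le n$ for every valid read index, and that $\mems(n)(l)\neq\bot$ exactly when some committed transaction at index $\le n$ allocated $l$ — which I would establish by induction over the transition relation before constructing the relations.

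Given the witnesses, the three opacity axioms and the dynamic-validity side condition are verified largely by inspection of the rules. Axiom \ref{ax:vis_rf} holds because any externally-read value was installed at (\textsc{DC-W}), so its source transaction is at least commit-pending and is read-from, placing it in $\CPRFSet\subseteq\VSet$ (or in $\SSet$ after (\textsc{RC})). Axiom \ref{ax:int1} follows from the deferred-update discipline: internal reads return the latest write-set/allocation-set entry, so $\rfI\subseteq\po$; a self-write requires the location to already lie in $\s.\aset$, forcing the allocation $\po$-before it and giving $\moI\subseteq\po$; and $\rbI\subseteq\po$ because a read that $\rb$-precedes a same-transaction write $w$ cannot be $\po$-after $w$ without having read from $w$ instead. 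The dynamic condition $G.(\Writes\cap\VSet)\subseteq\rng([\Allocs\cap\VSet];\mo)$ is exactly the content of the (\textsc{DW}) precondition that $l$ already lies in $\s.\aset$ or is allocated in the last memory version: each visible write to $l$ is $\mo$-preceded either by the same (visible) transaction's allocation of $l$ or by a prior committed, hence visible, allocation.

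The main obstacle is the extra-transactional acyclicity axiom \ref{ax:ext}, i.e.\ acyclicity of $\clo\cup\rft\cup\mot\cup(\rbt;[\VSet])$. I would discharge it by showing that the map sending each committed writing transaction to its (globally stitched) version index, each read-only transaction to its serialisation index, and each invisible transaction to a position after all committed ones, is a simultaneous topological order for all four relations: $\mot$ increases the index by construction; $\rft$ increases it because a reader's valid index is $\ge$ the producer's version and a committing writer appends strictly after every existing version (the real-time guarantee encoded by setting $\s.\bidx = \size{\mems}-1$ at begin and validating against the last version at commit); and the $\rbt;[\VSet]$ edges go into committed transactions whose version strictly exceeds the stale one the reader observed. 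This is essentially the known soundness argument for \DTMS/TMS2 against opacity, so the genuinely new work is checking that the \emph{allocation} additions do not break it — specifically that the third conjunct of $\isValidIdxName$ ($\s.\aset\subseteq\setcomp{l}{\mems(n)(l)=\bot}$) and the allocation-before-write installation in (\textsc{DC-W}) keep the allocation $\rf$ and $\mo$ edges compatible with the same index order, and that the new (\textsc{DR-A}), (\textsc{DA}) and fault rules introduce no cycle. Getting these allocation invariants exactly right, and confirming that they survive the era-concatenation used for durability, is where I expect the difficulty to lie.
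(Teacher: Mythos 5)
Your overall route is the same as the paper's. The paper likewise builds explicit $\rf$/$\mo$ witnesses from commit-time version indices, establishes the required invariants by induction over the transition relation (packaged there as well-formedness of an \emph{instrumented} run that records, for each successful writer, the memory index it installs and, for each reader, the index it validated against---exactly your version indices), and discharges axiom \ref{ax:ext} by your topological-order argument: every $\clo$, $\rft$, $\mot$ and $\rbt;[\VSet]$ edge embeds into a strict total order on transactions extending the index order (\cref{lem:hb_acyclic}), with prefixes and crash erasure handled by prefix-closure of the construction. Your reachable-state invariants versus the paper's path bookkeeping is a presentational difference, not a substantive one; your explicit ``global stitching'' of indices across eras is, if anything, more careful than the paper's treatment of the post-crash index reset.

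There is, however, one step that fails as written, and it is instructive to see how the paper dodges it. For dynamic validity you argue that a visible write to $l$ is $\mo$-preceded by ``a prior committed, hence visible, allocation''. But in \DDTMS the memory append happens at the \emph{internal} step (\textsc{DC-W}), strictly before the response (\textsc{RC}), and (\textsc{DW}) only inspects the last memory version. So an allocation can be installed by a transaction that remains commit-pending forever (or is killed by rule (\textsc{X}) after its (\textsc{DC-W})), is never read from, and hence lies outside $\VSet = \SSet \cup \CPRFSet$: being \emph{written over} does not place the allocator in $\CPRFSet$, unlike being read from (which is why your \ref{ax:vis_rf} argument, which correctly uses $\CPRFSet$, is fine). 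A later transaction can then write to that location and commit successfully, yielding a visible write with no visible allocation $\mo$-before it. The paper avoids this case only because its instrumented system fuses (\textsc{DC-W}) with (\textsc{RC}), so every installed effect belongs to a transaction in $\SSet$---indeed it proves $\CPRFSet = \emptyset$ outright (\cref{lem:vis_equals_succ})---at the cost that the fused system does not literally simulate \DDTMS (the ``straightforward'' refinement claimed in \cref{lem:ddtms_refines_instrumented_ddtms} glosses over operations that observe a (\textsc{DC-W}) before its matching (\textsc{RC})). To complete your proof you must either adopt the same fusion in your witness construction (treat installation as occurring at the commit response), or strengthen the visibility invariants so that allocation-dependence, not just read-dependence, forces membership in the visible set.
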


The definitions of (dynamic) durable opacity 
can lifted to the level of systems in
the standard manner, providing a notion of correctness for 
implementations~\cite{DBLP:journals/toplas/HerlihyW90}.



\OMIT{Our discussion thus far has focussed on individual histories.  An LTS is said to satisfy durable opacity iff every history
of the LTS satisfies durable history. For dynamic durable history, we
weaken this requirement slightly, and say that an LTS satisfies \DDO
iff every fault-free history of the LTS satisfies \DDO. By
\cref{thm:soundness-ddtms-1}, we therefore have that \DDTMS satisfies
\DDO.

   


As we shall see, the refinement check in \FDR proves history inclusion
between the two systems. In particular, a concrete system $B$ is a
refinement of an abstract system $A$ iff for every execution $\sigma$
of $B$, there exists an execution $\tau$ of $A$ such that
${\it H}_\sigma = {\it H}_\tau$, where ${\it H}_\sigma$ is
${\it EH}_\sigma$ with all fault actions hidden. Thus,
${\it H}_\sigma$ contains invocation, response and crash events only.

We set up the models of our implementations so that each operation
starts with an invocation and ends with a response (possibly an
abort). The system may also crash at any time, generating crash
events. All other events (including $\segFault$) are hidden.  For such
a system $A$ (\eg \PMDKTX), if $A$ refines \DDTMS, then every history
of $A$ must either be faulty (with a hidden fault) or correspond to a
total order that satisfies \DDO. That is, if $A$ refines
\DDTMS, then $A$ satisfies \DDO.
}

\section{Modelling and Validating Correctness in \FDR}
\label{sec:modell-verify-corr}

\FDR~\cite{fdr} is a model checker for
CSP~\cite{DBLP:journals/cacm/Hoare83a} that has recently been used to
verify linearisability~\cite{DBLP:conf/birthday/Lowe17}, as well as
opacity and durable opacity~\cite{DBLP:conf/sefm/DongolL21}. We
similarly provide an \FDR development, which allows proofs of
refinement to be {\em automatically} checked up to certain bounds.
This is in contrast to manual methods of proving correctness of
concurrent
objects~\cite{DBLP:journals/csur/DongolD15,DBLP:journals/fac/DerrickDDSTW18},
which require a significant amount of manual human input (though such
manual proofs are unbounded).

\OMIT{
FDR3 uses a variety of underlying model checking paradigms and partial
order reduction
techniques~\cite{DBLP:conf/tacas/Gibson-RobinsonABR14}, depending on
the structure of the files to be verified. FDR4 builds on FDR3, but
the exact implementation details of FDR4 are not publicly available
since it is a commercial product (available for free academic
use). 

One of the most challenging aspects of the \FDR development is the
modelling work itself. Our algorithms execute over a shared memory,
but the CSP formalism is based on {\em communicating processes} with
no notion of shared states. Thus, for each location we must explicitly
define {\em handler} processes that communicate through {\em channels}
to update and return the values of components (\eg the addresses,
read/write sets) of each model. Moreover, the implementations
(\PMDKTX, \PMDKN and \PMDKT), the specification (\DDTMS) and
underlying memory models (\PSC and \PTSOS) we consider are
non-trivial, significantly increasing the challenge of the modelling
effort.  
Although constructing the models is challenging,
once the models have been developed, they can be combined in a modular
fashion. We have taken advantage of this feature to combine our
implementations with different memory models during development. The
combination of \PMDKT and TML/NOrec also takes advantage of this
modularity.

This modularity also means that our models are reusable. One could use
our models to check other developments, \eg those that use \PMDKTX to
implement other failure-atomic data structures, or verify redesigns of
\PMDKTX over different memory models.  Specifically, we use a
top-level CSP process (which may comprise an interleaved composition
of processes for each transaction) to model the most general
client. Each transaction process begins a transaction, and then calls
an unbounded number of reads, writes and allocations at
non-deterministically chosen locations and with non-deterministically
chosen values. An in-flight transaction process may also
non-deterministically choose to terminate by calling commit instead of
calling a read, write or allocation. Each operation call produces an
externally visible invocation event, and when the operation
terminates, an externally visible response is generated. Some
operations may respond with an abort, in which case the transaction
process itself terminates.

Additionally, there is an externally visible crash
event that synchronises with all processes. At the level of the
abstraction (\ie \DDTMS), this simply terminates all in-flight
transactions, and resets the memory sequence (as detailed by the rule
(\textsc{X}) in \cref{fig:ddtms}). At the level of the implementation,
all in-flight transactions are terminated and additionally, the store
and persistency buffers are cleared. This means that when execution
resumes, the value of each location is taken from NVM (as detailed in
\cref{sec:memory-models}). Immediately after a crash (and before any
other processes are started), the recovery process corresponding to
the algorithm is executed. Note that transaction identifiers are never
reused. 

We eschew further details of our \FDR models since they are provided
as supplementary material and also refer the interested reader to
other prior works
~\cite{DBLP:conf/birthday/Lowe17,
  DBLP:conf/sefm/DongolL21}.
}

 

\begin{figure}[t]
  \centering
  \begin{minipage}[b]{0.35\columnwidth}
  \scalebox{0.75}{
  \begin{tikzpicture}[
    node distance=8mm,
    title/.style={font=\small\color{black!50}},
    mod/.style={exec, font=\small, anchor=west}
    ]
    \node (decomp) [title] {Implementations};
    
    \node (di) [below=6mm of decomp.west, mod] { \PMDK };
    \node (dr) [below=of di.west, mod] { \PMDKT };
    \node (dnc) [below=of dr.west, mod] { \PMDKN };
    
    \node [draw=black!50, rounded corners, fit={(decomp) (di) (dr) (dnc)}] (Imp) {};
    
    \node (dep) [below right = -0.5cm and 1cm of decomp, title] { \intab{Memory \\ models} };
    
    \node (da) [below=10mm of dep.west, mod] { \PSC };
    \node (dr) [below=of da.west, mod] { \PTSOS };
    
    \node [draw=black!50, rounded corners, fit={(dep) (dr) (da)}] (MM) {};
    
    \node [draw, rounded corners, fit={(Imp) (MM)}] (Syst) {};
        
    \node[specOP,below= 4mm of Syst] (DDTMS) {
      \small\begin{tabular}{@{}c@{}}
        \DDTMS \\
              (concurrent 
              upper bound)
      \end{tabular}
    };
    
    \node[specOP,above= 4mm of Syst] (DDTMSS) {\small
      \begin{tabular}{@{}c@{}}
        \DDTMS-Seq \\
        (sequential 
        lower bound)
      \end{tabular}
    };
    
    \draw[color=white] (Syst) -- node[above,color=black,yshift=-5.5]{\footnotesize refines}(DDTMS) ;
    \draw[color=white] (DDTMSS) -- node[above,color=black,yshift=-5.5]{\footnotesize refines}(Syst) ;
    
    \draw[color=white] (Imp) -- node[above,color=black]{\small uses}(MM) ;

  \end{tikzpicture}}
  \caption{Overview of \FDR checks}
  \label{fig:FDR}
\end{minipage}
\hfill
\begin{minipage}[b]{0.61\columnwidth}
  \centering \footnotesize
  \scalebox{0.95}{
  \begin{tabular}[t]{|@{}l@{}|@{}c@{}|@{}c@{}|@{}c@{}|@{}c@{}||@{}r@{}|@{}r@{}|@{}r@{}|}
    \hline
    %

    Memory  & \scalebox{0.8}{\#txns} & \scalebox{0.8}{\#locs}  & \scalebox{0.8}{\#val} & \scalebox{0.8}{\#buff} & \ \scalebox{0.8}{\PMDKTX} &
                                                          \begin{tabular}{l}
                                                            PMDK-\\
                                                            TML
                                                          \end{tabular}
&                                                           \begin{tabular}{l}
                                                            PMDK-\\
                                                              \NOREC
                                                            \end{tabular}
 \\
   \hline
     \PSC      & 2      & 2       & 2  & 2         & 5.83s & 5.90s & 6.74s 
     \\
     \PSC      & 2      & 3       & 2  & 2         & 201.03s & 213.97s & 271.35s
     \\
     \PSC      & 2      & 2       & 3  & 2         & 21.65s & 23.47s & 27.40s
     \\
     \PSC      & 2      & 2       & 2  & 3         & 5.83s & 5.78s & 6.60s
    \\
    \hline
     \PTSOS & 2      & 1       & 2   & 2        & 0.61s & 3.96s  & 1.57s
     \\
     \PTSOS & 2      & 2       & 2   & 2        & 6.67s & 6.71s  & 7.73s 
     \\
     \PTSOS & 2      & 3       & 2   & 2        & 267.1s & 268.91s & 319.18s 
     \\
     \PTSOS & 2      & 2       & 3   & 2        & 24.10s & 25.53s  & 29.24s 
     \\
     \PTSOS & 2      & 2       & 2   & 3        & 14.37s & 14.19s  &  15.41s 
     \\
    \hline
\end{tabular}}
\caption{Summary of upper bounds checks (total time in seconds:
  compilation + model exploration). The time out (TO) is set to 1000
  seconds of compilation time. }
\label{tab:summary}
\end{minipage}
\end{figure}

An overview of our \FDR development is given in \cref{fig:FDR}. We
derive two specifications from \DDTMS. The first is an \FDR model of
\DDTMS itself, based on prior work~\cite{DBLP:conf/birthday/Lowe17,DBLP:conf/sefm/DongolL21}, but
contains the extensions described in \cref{sec:DTMS2-full}. The second
is \DDTMS-Seq, which restricts \DDTMS to a sequential crash-free
specification. We use \DDTMS-Seq to obtain (lower-bound) liveness-like
guarantees, which strengthens traditional deadlock or
divergence proofs of refinement. These lower-bound checks ensure our models contain at least the traces of \DDTMS-Seq.


\OMIT{
Each check in \FDR is split into two phases: 
\begin{enumerate*}
\item a compilation phase that builds the models; and 
\item a model exploration phase. 
\end{enumerate*}
The characteristics of the upper and lower bounds checks are
distinct. When naively checking the upper bound, compilation is almost
instantaneous but model exploration times can be significant; these
times are swapped for the lower bounds checks.

In general, lower-bounds take much longer to verify than the
upper-bounds since \FDR is optimised to verify abstract (low-detail)
specifications are refined by concrete (high-detail)
implementations. The lower bounds checks use the more complex models
as the specification, leading to the creation of very large
space-inefficient models, putting pressure on the available system
memory. However, the lower-bound checks for \PSC and \PTSOS are
superceded by the corresponding checks over \NVM, since the memory
models \PSC and \PTSOS are both supersets of \NVM. That is, any trace
over \NVM must also be a trace \PSC and \PTSOS. For two transactions,
two locations and two values, the checks for \PMDK, \PMDKT and \PMDKN
take 12.16, 17.36, and 56.02 seconds, respectively.
}

\cref{tab:summary} summarises our experiments on the upper bound
checks, where the times shown combine the compilation and model
exploration times. Each row represents an experiment that bounds
the number of transactions (\#txns), locations (\#locs),
values (\#val) and the size of the persistency and store
buffers (\#buff). The times reported are for an Apple M1 device with
16GB of memory. The first row depicts a set of experiments where the
implementations execute directly on NVM, without any buffers.
As we discuss below, these tests are sufficient for checking lower bounds. The baseline for our
checks sets the value of each parameter to two, and \cref{tab:summary}
allows us to see the cost of increasing each parameter. Note that all
models time out when increasing the number of transactions to three,
thus these times are not shown.  Also note that for \PMDKTX (which is
single-threaded), the checks for \PSC also cover \PTSOS, since \PTSOS
is equivalent to \PSC in the absence of
races~\cite{DBLP:journals/pacmpl/KhyzhaL21}. Nevertheless, it is
interesting to run the single-threaded experiments on the \PTSOS model
to understand the impact of the memory model on the checks.

In our experiments we use \FDR's built-in \emph{partial order
  reduction} features to make the upper bound checks feasible.  This
has a huge impact on the model checking speed; for instance, the check
for \PMDKT with two transactions, two locations, two values and buffer
size of two reduces from over 6000 seconds (1 hour and 40 minutes) to
under 7 seconds, which is almost a 1000-fold improvement! This
speed-up makes it feasible to use \FDR for rapid prototyping when
developing programs 
that use \PMDKTX, even
for the relatively complex \PTSOS memory model.

\section{Related Work}
\label{sec:related}

\emph{\textbf{Crash Consistency.}}
Research on persistent memory has progressed remarkably quickly, and
efforts such as Intel's \PMDK libraries have meant that development of
production-level code is now becoming feasible. Works on persistent
memory have leveraged the many existing approaches to
failure-resilient and fault-tolerant computing from areas such as
databases and concurrent/distributed systems. 
Several authors have defined notions of atomicity for {\em concurrent
  operations} that take persistency into account. Guerraoui and
Levy~\cite{DBLP:conf/icdcs/GuerraouiL04} define two notions, {\em
  persistent atomicity} (which allows threads to survive a crash, but
requires that every pending operation on a crashed thread either take
effect or abort before a subsequent operation of the same thread is
invoked), and {\em transient atomicity} (which relaxes persistent
atomicity by allowing an incomplete operation to take effect before a
subsequent write response of the same thread even if the operation is
interrupted by a crash). Berryhill
\etal~\cite{DBLP:conf/opodis/BerryhillGT15} define {\em recoverable
  linearisability}, which requires every pending operation on a thread
to take effect or abort before the thread linearises another
operation. Izraelevitz \etal~\cite{DBLP:conf/wdag/IzraelevitzMS16}
define two other notions: {\em buffered durable linearisability} and
{\em durable linearisability}, which generalise linearisability to
account for crashes. In particular, durable linearisability requires
that a history with crashes be linearisable when crashes are removed;
buffered durable linearisability allows one to `forget' operations (in
a manner consistent with happens-before) even if they have
completed. Although \DDO is inspired by durable opacity, none of these
conditions are suitable as they define consistency for concurrent
operations (\eg those of concurrent data structures) as opposed to
transactional memory.

Approaches and semantics to {\em crash-consistent} transactions
stretch back to the mid 1970s, which considered the problem in the
database
setting~\cite{DBLP:journals/cacm/EswarranGLT76,DBLP:conf/sigmod/LienW78}. Since
then, a myriad of definitions have been developed for particular
applications (\eg distributed systems, file systems, \etc). For plain
reads and writes, one of the first studies of persistency models
focussed on NVM is by Pelley
\etal~\cite{DBLP:journals/micro/PelleyCW15}. Since then, several
semantic models for real hardware (Intel and ARM) have been
developed~\cite{DBLP:journals/pacmpl/RaadWV19,DBLP:journals/pacmpl/RaadWNV20,DBLP:journals/pacmpl/KhyzhaL21,DBLP:conf/pldi/ChoLRK21,DBLP:journals/pacmpl/RaadMV22}. For
transactional memory, there are only a few notions that combine a
notion of crash consistency with ACID guarantees as required for
concurrent durable transactions. Raad
\etal~\cite{DBLP:journals/pacmpl/RaadWV19} define a {\em persistent
  serializability} under relaxed memory, which does not handle aborted
transactions. As we have already discussed, Bila
\etal~\cite{DBLP:conf/forte/BilaDDDSW20} define {\em durable opacity},
but this is defined in terms of (totally ordered) histories as opposed
to partially ordered graphs. Neither persistent serialisability nor
durable opacity handle memory allocation.

\emph{\textbf{Validating the \PMDKTX Implementation.}}
Even without a clear consistency condition, a range of papers have
explored correctness of the C/C++ implementation.  Bozdogan
\etal~\cite{SafePM} built a sanitiser for persistent memory and used
it to uncover memory-safety violations in \PMDKTX. Fu
\etal~\cite{witcher} have built a tool for testing persistent
key-value stores and uncovered consistency bugs in the \PMDK
libraries. Liu \etal~\cite{XFDetector} have built a tool for detecting
cross-failure races in persistent programs, and uncovered a bug in
\PMDK's libpmemobj library (see `Bug 4' in their paper). They are at a
different level of abstraction than ours since they focus on the code
itself and do not provide any description of the design principles
behind \PMDK.

Raad \etal~\cite{DBLP:journals/pacmpl/RaadLV20} and Bila
\etal~\cite{DBLP:conf/esop/BilaDLRW22} have developed logics for
reasoning about programs over the Px86-TSO model (which we recall is
equivalent to \PTSOS). However, these logics have thus far only been
applied to small examples. Extending these logics to cover a proof by
simulation and a full (manual) proof of correctness of \PMDK, \PMDKT
and \PMDKN would be a significant undertaking, but an interesting
avenue for future work.

\emph{\textbf{Transactional Memory (TM).}}  Several
works have studied the semantics of TM
\cite{john-transactions,hw-transactions-dongol,SI-VMCAI,PSI-ESOP}.
However, our works differ from those in that they do not account for
persistency guarantees and crash consistency.  However, while earlier
works~\cite{SI-VMCAI,PSI-ESOP} merely \emph{propose} a model for weak
isolation (\ie mixing transactional and non-transactional accesses),
\cite{john-transactions,hw-transactions-dongol} formalise the weak
isolation in various hardware and software TM platforms, albeit
without validating their semantics. Moreover, as expected, none of
these works provide a well-established model for nested transactions.

Several approaches to crash consistency have recently been
proposed. For a survey and comparison of techniques (in addition to
transactions) see~\cite{DBLP:journals/csur/BaldassinBCR22}.
OneFile~\cite{DBLP:conf/dsn/RamalheteCFC19},
Romulus~\cite{DBLP:conf/spaa/CorreiaFR18}, and Trinity and
Quadra~\cite{DBLP:conf/ppopp/RamalheteCF21} together describe a set of
algorithms that aim to improve the efficiency of \PMDKTX by reducing
the number of fence instructions. Liu
\etal~\cite{DBLP:conf/asplos/LiuZCQWZR17} present DudeTM, a persistent
TM design that uses a shadow copy of NVM in DRAM, which is is shared
amongst all transactions. Their approach comprises three key steps:
\begin{enumerate*}
	\item an STM or HTM operates over the DRAM
copy, and if the transaction commits, the updates are stored in a
volatile redo log; 
	\item contents of the redo log are made persistent by
copying them to a separate persistent redo log; and
	\item updates in the persistent redo log are copied over to
NVM. 
\end{enumerate*}
Zardoshti \etal~\cite{DBLP:conf/IEEEpact/ZardoshtiZLS19} present an
alternative technique for making STMs persistent by instrumenting STM
code with additional logging and flush instructions. However, none of
these works have defined any formal correctness guarantees, and hence
do not offer any proofs of correctness either. In particular, the role
of allocation and its interaction with reads and writes is generally
unclear.

As well as defining durable opacity, Bila
\etal~\cite{DBLP:conf/forte/BilaDDDSW20} develop a persistent version
of the transactional mutex
lock~\cite{DBLP:conf/europar/DalessandroDSSS10} STM by introducing
explicit undo logging and flush instructions. They then prove this to
be durably opaque using a theorem prover via the \DTMS
specification. More recently, Bila
\etal~\cite{DBLP:journals/corr/abs-2011-15013} have developed a
technique for transforming both an STM and its corresponding opacity
proof by delegating reads/writes to memory locations controlled by the
TM to an abstract library that is later refined to use volatile and
non-volatile memory.  Neither of these works use \PMDKTX, and are over
a sequentially consistent memory model.

\section{Conclusions and Future Work}

Our main contribution is
validating the correctness for \PMDKTX via the development of
declarative (\DDO) and operational (\DDTMS) consistency criteria. We
provide an abstraction of \PMDKTX and show that it satisfies \DDTMS
and hence \DDO by extension. By encoding the above in the \FDR model
checker, we obtain machine-checked bounded proofs for a well-supported
production-ready failure-atomic transactional library.

Additionally, we develop \PMDKT and \PMDKN as two concurrent
extensions of \PMDKTX that are based on existing STM designs, and show
that these also satisfy \DDTMS (and hence \DDO). These are also validated
correct under the \PSC and \PTSOS memory models using \FDR.  Our
philosophy for providing failure atomicity and concurrency is
different from other works. We show that it is straightforward to
achieve failure atomicity by simply reusing \PMDK's allocation, and
replacing reads and writes to locations controlled by the
transactional memory by calls to \PMDKTX read and write operations. This
shows concurrency and failure atomicity can be treated as two
orthogonal sets of concerns, providing a pathway towards a
mix-and-match approach to combining STMs (for concurrency control) and
crash-consistent transactions.

As with most accepted existing transactional models (be it with or
without persistency), we assume \emph{strong isolation}, where each
non-transactional access behaves like a singleton transaction (a
transaction with a single access). That is, even ignoring persistency,
there are no accepted definitions or models for mixing
non-transactional and transactional accesses, and all existing
transactional models (including opacity and serialisability) assume
strong isolation.  Indeed, \PMDK transactions are specifically
designed to be used in a purely transactional setting and are not
meant to be used in combination with non-transactional accesses; \ie
they would have undefined semantics otherwise.

Consequently, as we do not consider mixing transactional code with non-transactional code, RMW (read-modify-write) instructions are irrelevant in our setting. Specifically, as non-transactional access are treated as singleton transactions, RMW instructions are not needed or relevant since they behave as transactions and their atomicity would be guaranteed by the transactional semantics.

Moreover, we do not support nested transactions. Once again, \PMDK transactions are not designed to support nested transactions. Indeed, to the best of our knowledge, there is no accepted correctness notion of nested transactions (be they serialisable, opaque or otherwise).
Were we to consider nested transactions, we would have to invent a new transactional definition, but its practical relevance would be unclear as there are no implementations of it available.

One threat to validity of our work is
that the model checking results are on a small number of transactions,
locations, values, and buffer sizes (see \cref{tab:summary}). However,
we have found that these sizes have been adequate for validating all
of our examples, i.e., when errors are deliberately introduced, FDR
validation fails and counter-examples are automatically
generated. Currently, we do not know whether there is a small model
theorem for durable opacity in general. This is a separate line of
work and a general question that we believe is out of the scope of
this article. Specifically, our focus here is on making PMDK
transactions concurrent, providing a clear specification for PMDK (and
its concurrent variations) with dynamic allocation, and validating
correctness of the results under a realistic memory model.

Edge cases involving more than two concurrent events occur in the
presence of highly relaxed weak
memories~\cite{DBLP:conf/usenix/ChongSW19,hw-transactions-dongol} and
weak isolation. Under SC, Guerraoui \etal~\cite{DBLP:journals/dc/GuerraouiHS10}
verify a small model theorem, showing that only 2 transactions, 2
variables and 2 values are sufficient to model check all possible
conflicts. 
As Px86-TSO
is a small relaxation of persistent SC and we assume strong isolation,
our model checking results provide a high level of assurance.



\bibliography{references,biblio}

\appendix
\clearpage
\section{\DDTMS Automata: The \DTMS Automata Extended with Allocation}
\label{app:DDTMS_automata}
\begin{minipage}[t]{\columnwidth}
\small
{\bf State variables:}\\
$\mems : \Seq{\Locs \to (\Vals \cup \set{\bot})}$, initially satisfying $\dom(\mems) = \{0\}$ and $initMem(\mems(0))$\\
$\getPC : PCVal$, for each $\txid \in \TXIDs$, initially $\getPC=\pcNotStarted$ for all $\txid \in \TXIDs$\\
$\getBIdx : \nat$ for each $\txid \in \TXIDs$, unconstrained initially\\
$\getRSet: \Locs \pfun \Vals$  for each $\txid \in \TXIDs$, initially empty for all $\txid \in \TXIDs$\\
$\getWSet: \Locs \pfun \Vals$  for each $\txid \in \TXIDs$, initially empty for all $\txid \in \TXIDs$\\
$\getASet \subseteq \Locs$  for each $\txid \in \TXIDs$, initially empty for all $\txid \in \TXIDs$\\
\\
{\bf Transition relation:} \\
  {\em External actions} \\[0.5em]
  \bgroup 
\begin{tabular}{@{}l@{\qquad\qquad }l@{}}
\action{\beginInv{\txid}}
{\getPC = \pcNotStarted }
  {\getPC := \pcBeginPending\\&
  \getBIdx := |\mems| - 1}
&
\action{\abortResp{\txid}}
{\getPC \notin \{\pcNotStarted, \pcReady, \pcChaos  \\
  & \pcCommitResp, \pcCommitted, \pcAborted\}}
  {\getPC := \pcAborted}
 \\[\myrowsep] 
    \action{\beginResp{\txid}}
    {\getPC = \pcBeginPending}
    {\getPC := \pcReady}
  &
    \action{\segFault}
    {pc = \lambda \txid \in \TXIDs.\ \pcChaos}
    {pc := \lambda \txid \in \TXIDs.\ \pcDone}
  \\[\myrowsepa]
\multicolumn{2}{@{}l@{}}{
  \begin{tabular}[t]{@{}l@{}}
  \action{\crashed}
  {\true}
  {\getPC[] := \lambda \txid \in \TXIDs. \sif\ \getPC \notin \{\pcNotStarted, \pcCommitted, \pcChaos\}\  \sthen\ \pcAborted\  \selse \ \getPC \\
  & \mems := \langle last(\mems) \rangle}
\end{tabular}}
\end{tabular} 
\\[1em]
{\it Internal actions} \\[0.5em]
\begin{tabular}{@{}l@{\qquad}l@{}}
\action{\doCommitReadOnly{\txid}}
{\getPC = \pcDoCommit\\&
 \shade{$\getASet = \emptyset$}\\&
 \dom(\getWSet) = \emptyset}
{\getPC := \pcCommitResp}
&
%
%
\action{\doCommitWriter{\txid}}
{\getPC = \pcDoCommit \land  \isValidIdx \txid {last(\mems)} }
{\getPC := \pcCommitResp\\&
  \mems := \mems \cat  {} \\
  & \quad ((\last{\mems} \shade{${}\oplus \{l \to 0 \mid l \in \getASet\}$}) \oplus \getWSet) } 
\\[\myrowsepb]
\action{\doReadFault{\txid}{l, n}}
{\getPC = \pcDoRead(l)\\&
 \shade{$l\nin \getASet \wedge{} $} l\nin \dom(\getWSet)\\&
 \isValidIdx \txid n \wedge \mems(n)(l) = \bot }
  {\shade{$pc := \lambda \txid \in \TXIDs .\ \pcChaos$} }
&
\action{\doCommitWriterFault{\txid}}
{\getPC = \pcDoCommit \land  \isValidIdx \txid {last(\mems)} \\ & 
\shade{$\dom(\getWSet) \not\suq \getASet \cup \setcomp{l}{\last{\mems}(l) \ne \bot}$}
}
{\shade{$pc := \lambda \txid \in \TXIDs .\ \pcChaos$}} 
\\[\myrowsepb]
\action{\doRead{\txid}{l, n}}
{\getPC = \pcDoRead(l)\\&
 \shade{$l\nin \getASet \wedge{} $} l\nin \dom(\getWSet)\\&
 \isValidIdx \txid n \wedge \mems(n)(l) = v \ne \bot }
  {\getRSet := \getRSet\oplus \{l \to v\}\\&
    \getPC := \pcReadResp(v) }
&
\action{\doOwnRead{\txid}{l}}
{\getPC = \pcDoRead(l)\\&
 \shade{$l\in \getASet \vee{}$}  l\in \dom(\getWSet)}
{\sif\ l \in \dom(\getWSet)\ \\&
 \qquad \getPC := \pcReadResp(\getWSet(l))\\&
 \selse\  \getPC := \pcReadResp(0)}
\\[\myrowsepb]
\shade{$\action{\doAlloc{\txid}{l}}
{\getPC = \pcDoAlloc\\&
l \not\in \getASet}
{\getASet := \getASet \cup \{l\}\\&
  \getPC := \pcAllocResp(l)}$}
&
\action{\doWrite{\txid}{l, v}}
{\getPC = \pcDoWrite(l, v)\\&
\shade{$l \in  \getASet \cup \setcomp{l}{\last{\mems}(l) \ne \bot}$}
}
{\getWSet := \getWSet\oplus \{l \to v\}\\&
\getPC := \pcWriteResp}
\end{tabular}
\egroup\smallskip

{\bf where}
$\begin{array}[t]{r@{~~}c@{~~}l}
  \isValidIdx \txid n &\sdef & \getBIdx \leq n < \size{\mems} \wedge
                          \getRSet \subseteq \mems(n) \shade{${} \wedge  \getASet \subseteq \set{ l \mid \mems(n)(l)=\bot$}} 
\end{array}
$
\end{minipage}


\clearpage
\newcommand{\pathends}{\ensuremath{\mathcal E}}
\newcommand{\pathbegins}{\ensuremath{\mathcal B}}
\section{\DDTMS is dynamically Opaque}
\subsection{Instrumented \DDTMS Transition System}
\[\small
\begin{array}{@{} l @{}}
\begin{array}{@{} r @{\hspace{2pt}} l @{}}
	\imems \in \IMSeqs \eqdef & \Seq{\IMems} 
	\quad \text{with} \quad
	\imem \in \IMems \eqdef \Locs \to (\Writes \cup \Allocs \cup \set{\bot})	
	\\
	\ismap \in \ISMaps \eqdef & \TXIDs \rightarrow \IStates \\
	\is \in \IStates \eqdef & \Nats \times \IRSets \times \IWSets \times \IASets \\
	\IRSets \eqdef & 
	\setcomp{
		f: \Locs \pfun \Writes \cup \Allocs
	}{
		\for{x, e} f(x) {=} e \Rightarrow \loc e {=} x
	} \\
	\IWSets \eqdef & 
	\setcomp{
		f: \Locs \pfun \Writes
	}{
		\for{x, w} f(x) {=} w \Rightarrow \loc w {=} x
	}  \\
%
%
%
%
%
	\IASets \eqdef & 
	\setcomp{
		f: \Locs \pfun \Allocs
	}{
		\for{x, a} f(x) {=} a \Rightarrow \loc a {=} x 
	}  \\
	\iact \in \IActions \eqdef & \INoEventActions \cup \IEventActions \cup \CrashMarkers \cup \{\segFault\} \\
	\INoEventActions \eqdef & 	
	\setcomp{
 	 	\ibeginInv, \ireadInv{l}, \iwriteInv{l, v}, \iallocInv
	}{
		l \!\in\! \Locs,
		v \in \Vals
	}\\
	\IEventActions \eqdef & 	
	\setcomp{
		\ibeginResp{b}, \ireadResp{r}, \iwriteResp{w}, \iallocResp{m}, \\
		\icommitInv c, \icommitResp{s}, \iabortResp{a}
	}{
		b \in \Begins, r \in \Reads, w \in \Writes,	m \in \Allocs,\\
		c \in \Commits, s \in \Succs, a \in \Aborts
	} \\
	\getE . : & \IEventActions \rightarrow \EventsType \\
	\getE \iact \eqdef & 
	\begin{cases}
		e & \text{if } \iact {=}  \mathit{res}(e) \\
	 	c & \text{if } \iact {=} \icommitInv c \\
	\end{cases}
\end{array} \\\\
	\isIValidIdx n {\is, \imems} {\is'} \iffdef 
	\begin{array}[t]{@{} l @{}}
		\is \equiv \is' \land 	\is'.\bidx \leq n < \size{\imems} \\
		\land\ \for{x \in \dom(\is'.\aset)} \imems(n)(x)=\bot \\
		\land\ \for{x, e} \is'\!.\rset(x) {=} e \Rightarrow \imems(n)(x) {=} e 
	\end{array}\\\\
	s \equiv s' \iffdef
	\begin{array}[t]{@{} l @{}}
		\is'.\bidx {=} \s.\bidx \land \is'\!.\aset {=} \is.\aset 
		\land \is'\!.\wset {=} \is.\wset \land \dom(\is'\!.\rset) {=} \dom(\is.\rset) \\
		\land\ \for{x, e} \is'\!.\rset(x) {=} e \Rightarrow \exsts{e'} \is.\rset(x) {=} e' \land \wval{e} {=} \wval{e'} \\
	\end{array}\\\\
	\ipath \!\in\! \IPaths \eqdef 
	\Seq{
	\begin{array}{@{} l @{}}
		\{\segFault\} \CrashMarkers \!\cup\! \Begins \cup \Aborts \cup \Writes \cup \Reads \cup \Allocs \cup \Commits \cup \Succs \\
		\cup 
		\setcomp{
			\rmap \txid {\rs} n
		}{
			\txid \in \TXIDs,
			\rs \in \IRSets, n \in \Nats
		} 
		\cup 
		\setcomp{
			\wsuc s n		
		}{
			s \in \Succs, 
			n \in \Nats		
		}\\
		\cup 
		\setcomp{
			\intr r e 
		}{
			r \!\in\! \Reads \land e \!\in\! (\Writes \cup \Allocs)
			\land \loc r {=} \loc e \land \rval r {=} \wval e
		} 
	\end{array}	
	}	\\
	\fresh \ipath e \iffdef e \nin \ipath \land (e \in \Reads \Rightarrow \for{w} \intr e w \nin \ipath) \land (e \in \Succs \Rightarrow \for{n} \wsuc e n \nin \ipath) \\
	\fresh \ipath {(\crash, n)} \iffdef (\crash, n) \nin \ipath
\end{array} 
\]

\subsubsection*{\textbf{Instrumented \DDTMS Global Transitions}}
\[
\infer{
	\pcmap, \ismap, \imems, \ipath \redConf{} \pcmap[\txid \mapsto \pc], \ismap[\txid \mapsto \is], \imems', \ipath'
}{
	\pcmap(\txid), \ismap(\txid), \imems, \ipath \redT{\iact} \pc, \is, \imems', \ipath'
	& \iact \in \INoEventActions
}
\]
\[
\infer{
	\pcmap, \ismap, \imems, \ipath \redConf{} \pcmap[\txid \mapsto \pc], \ismap[\txid \mapsto \is], \imems', \ipath'
}{
	\pcmap(\txid), \ismap(\txid), \imems, \ipath \redT{\iact} \pc, \is, \imems', \ipath'
	& \iact \in \IEventActions
	& \tx{\getE{\iact}} {=} \txid
	& \thrd{\getE{\iact}} {=} \txid
}
\]
\[
\infer{
	\pcmap, \ismap, \imems, \ipath \redConf{} \pcmap', \ismap[\txid \mapsto \is], \imems', \ipath'
}{
	\pcmap(\txid), \ismap(\txid), \imems, \ipath \redT{\segFault} \pcChaos, \is, \imems', \ipath'
	&& \pcmap' = \lambda \txid. \pcChaos
}
\]
\[
\infer{
	\pcmap, \ismap, \imems, \ipath \redConf{} \pcmap', \ismap, \langle last(\imems) \rangle, \ipath \cat [a_1, \cdots, a_n] \cat [(\crash, n)]
}{
\begin{array}{@{} c @{}}
	T {=} \setcomp{\txid}{\pcmap(\txid) \!\notin\! \{\pcNotStarted, \pcCommitted, \pcChaos \}} = \{\txid_1, \cdots, \txid_n\} \\
	\for{i \in \{1, \cdots, n\}} a_i \in \Aborts \land \lTX(a_i) {=} \txid_i \land \fresh \ipath {a_i} \\
	\pcmap' = \lambda \txid.\sif\ \txid \in T \;\;\sthen\;\; \pcAborted \;\;\selse\;\; \pcmap(\txid)
	\qquad	
	\fresh \ipath {(\crash, n)} 
\end{array}	
}
\]

\subsubsection*{\textbf{Instrumented \DDTMS Per-Transaction Transitions}}
\[
\infer{
	\pc, \is, \imems, \ipath \redT{\ibeginInv} \pcBeginPending, \is', \imems, \ipath
}{
	\pc = \pcNotStarted
	&&
	\is' = \is[\bidx \mapsto \size{\imems} {-} 1] 
}
\]
\[
\infer{
	\pc, \is, \imems, \ipath \redT{\ibeginResp b} \pcReady, \is, \imems, \ipath \cat [b]
}{
	\pc = \pcBeginPending
	& b \in \Begins
	& \fresh \ipath b
}
\]
\[
\infer{
	\pc, \is, \imems, \ipath \redT{\iabortResp a} \pcAborted, \is, \imems, \ipath \cat [a]
}{
	\pc \not\in \{\pcNotStarted, \pcReady, \pcCommitted, \pcAborted\}
	& a \in \Aborts
	& \fresh \ipath a
}	
\]
\[
\infer{
	\pc, \is, \imems, \ipath \redT{\ireadInv l} \pcDoRead(l), \is, \imems, \ipath
}{
	\pc = \pcReady
}
\]
%

%
%

%
\[
\infer{
	\pc, \is, \imems, \ipath \redT{\ireadResp{r}} \pcReady, \is'[\rset \mapsto \rs], \imems, \ipath \cat [r, \rmap {\lTX(r)} {\rs} n]
}{
	\begin{array}{@{} c @{}}
		\pc \!=\! \pcDoRead(l)
		\quad l \!\nin\! \setcomp{\loc e} {e \in \is.\aset} \cup \dom(\is.\wset) 
		\quad \isIValidIdx n {\is, \imems} {\is'} 
		\quad r \in \Reads \\
		\imems(n)(l) {=} w \!\neq\! \bot 
		\quad \loc r {=} l
		\quad \wval w {=} \rval r
		\quad \rs \!=\! \is'.\rset \!\oplus\! \{l \!\mapsto\! w\}
		\quad \fresh \ipath r
	\end{array}
}
\]
\[
\infer{
	\pc, \is, \imems, \ipath \redT{\segFault} \pcChaos, \is, \imems, \ipath \cat [\segFault]
}{
	\begin{array}{@{} c @{}}
		\pc \!=\! \pcDoRead(l)
		\quad l \!\nin\! \setcomp{\loc e} {e \in \is.\aset} \cup \dom(\is.\wset) 
		\quad \isIValidIdx n {\is, \imems} {\is'} 
		\quad \imems(n)(l) {=} \bot 
	\end{array}
}
\]
\[
\infer{
	\pc, \is, \imems, \ipath \redT{\ireadResp{r}} \pcReady, \is, \imems, \ipath \cat [\intr r w]
}{
	\pc = \pcDoRead(l)
	&  \is.\wset(l) {=} w
	& r \in \Reads
	& \loc r {=} \loc w
	& \wval w {=} \rval r
	& \fresh \ipath r
}
\]
\[
\infer{
	\pc, \is, \imems, \ipath \redT{\ireadResp{r}} \pcReady, \is, \imems, \ipath \cat [\intr r m]
}{
	\pc = \pcDoRead(l)
	& l \nin \dom(\is.\wset) 
	& \is.\aset(l) {=} m
	& r \in \Reads
	& \loc r {=} \loc m
	& \rval r {=} 0
	& \fresh \ipath r
}
\]
\[
\infer{
	\pc, \is, \imems, \ipath \redT{\iwriteInv{l, v}} \pcDoWrite(l, v), \is, \imems, \ipath
}{
	\pc = \pcReady
}
\]
%

%

%
\[
\infer{
	\pc, \is, \imems, \ipath \redT{\iwriteResp{w}} \pcReady, \is[\wset \mapsto \mathit{ws}], \imems, \ipath \cat [w]
}{
	\pc = \pcDoWrite(l, v)
	& w \in \Writes
	& \loc l {=} w
	& \mathit{ws} = \is.\wset \oplus \{l \!\mapsto\! w\}
	&  \fresh \ipath w
}
\]
\[
\infer{
	\pc, \is, \imems, \ipath \redT{\iallocInv} \pcDoAlloc, \is, \imems, \ipath
}{
	\pc = \pcReady
}
\]
%

%

%
\[
\infer{
	\pc, \is, \imems, \ipath \redT{\iallocResp{m}} \pcReady, \is[\aset \mapsto \mathit{as}], \imems, \ipath \cat [m]
}{
	\pc = \pcDoAlloc
	& l \!\nin\! \dom(\is.\aset)
	& \mathit{as} = \is.\aset \uplus \{l \mapsto m\}
	& m \in \Allocs
	& \loc m {=} l
	& \fresh \ipath m
}
\]
\[
\infer{
	\pc, \is, \imems, \ipath \redT{\icommitInv{c}} \pcDoCommit, \is, \imems, \ipath \cat [c]
}{
	\pc = \pcReady
	& c \in \Commits
	& \fresh \ipath c
}
\]
%

%
%

%
\[
\infer{
	\pc, \is, \imems, \ipath \redT{\icommitResp{s}} \pcCommitted, \is, \imems, \ipath \cat [s]
}{
	\pc = \pcDoCommit
	& \dom(\is.\aset) = \emptyset
	& \dom(\is.\wset) = \emptyset
	& s \in \Succs
	& \fresh \ipath s
}
\]
\[
\infer{
	\pc, \is, \imems, \ipath \redT{\icommitResp{s}} \pcCommitted, \is'\!, \imems'\!, \ipath \cat [\rmap{\lTX(s)} {\is'\!.\rset} {\size{\imems} {-} 1}] \cat [\wsuc s {\size{\imems}}]
}{
	\begin{array}{@{} c @{}}
		\pc = \pcDoCommit
		\qquad \isIValidIdx {\size{\imems} {-} 1} {\is, \imems} {\is'} 
		\qquad s \in \Succs 
		\qquad  \fresh \ipath s \\
		\dom(\is'.\wset) \suq \is'.\aset \cup \setcomp{l}{\last{\imems}(l) \ne \bot} \\
		\imems'\! = \imems \cat ((\last{\imems} \oplus \s'.\aset) \oplus \is'.\wset)
	\end{array}
}
\]
\[
\infer{
	\pc, \is, \imems, \ipath \redT{\segFault} \pcChaos, \is', \imems, \ipath \cat [\segFault]
}{
	\begin{array}{@{} c @{}}
		\pc = \pcDoCommit
		\qquad \isIValidIdx {\size{\imems} {-} 1} {\is, \imems} {\is'} \\
		\dom(\is'.\wset) \not\suq \is'.\aset \cup \setcomp{l}{\last{\imems}(l) \ne \bot} 
	\end{array}
}
\]

\newcommand{\erase}[1]{\ensuremath{\lfloor#1\rfloor}}
\begin{definition}
The \emph{erasure function for instrumented state maps}, $\erase .: \IStates \rightarrow \States$, is defined as $\erase{\ismap} \eqdef \lambda \txid. \erase{\ismap(\txid)}$, where $\erase{(n, \rs, \ws, \as)} \eqdef (n, \rs', \ws', \dom(\as)$ with 
$\rs' \eqdef \lambda \x. \wval{\rs(\x)}$ and $\ws' \eqdef \lambda \x. \wval{\ws(\x)}$.

The \emph{erasure function for instrumented memory sequences}, $\erase .: \IMems \rightarrow \Mems$, is defined as:
\[
\begin{array}{@{} r @{\hspace{2pt}} l @{}}
	\erase{[]} \eqdef & [] \\
	\erase{[\imem] \cat \imems} \eqdef & 
	[\erase{\imem}] \cat \erase{\imems}
\end{array}
\] 
where for all $\x \in \Locs$:
\[
	(\erase \imem) (\x) \eqdef
	\begin{cases}
		\wval e & \text{if } \imem(\x) = e \land e \in \EventsType \\
		\bot & \text{if } \imem(\x) = \bot
	\end{cases}
\] 
\end{definition}

%

%
%

\subsection{Paths Generated by \DDTMS Transitions Are Well-formed}

\begin{definition}
A path $\ipath$ is well-formed, written $\wfipath{\ipath}$, iff the following hold: 
\newcommand{\axlabel}[2]{\tag{\textsc{#1}}\label{#2}}
\begin{align}\small
	& \nodups{\ipath} 
	\axlabel{NoDups}{wf:nodups}\\
	& \segFault \not \plt{\ipath} \segFault 
	\axlabel{UniqueFault}{wf:unique_fault} \\
	& \for{\pent_1, \pent_2 \in \ipath} \tx{\getE{\pent_1}} {=} \tx{\getE{\pent_2}} \Rightarrow \thrd{\getE{\pent_1}} {=} \thrd{\getE{\pent_2}}
	\axlabel{SameTxSameTid}{wf:same_tx_same_thrd}\\
	& \for{b_1, b_2 \in \Begins \cap \ipath} b_1 \ne b_2 \Rightarrow \tx{b_1} \ne \tx{b_2}
	\axlabel{UniqueBegin}{wf:unique_begin}\\
	& \for{c_1, c_2 \in \Commits \cap \ipath}  c_1 \ne c_2 \Rightarrow \lTX(c_1) \ne \lTX(c_2) 
	\axlabel{UniqueCommit}{wf:unique_commit}\\
	& \for{a_1, a_2 \in \Aborts \cap \ipath} a_1 \ne a_2 \Rightarrow \lTX(a_1) \ne \lTX(a_2) 
	\axlabel{UniqueAbort}{wf:unique_abort}\\
	& \for{s_1, s_2 \in \Succs \cap \ipath} s_1 \ne s_2 \Rightarrow \lTX(s_1) \ne \lTX(s_2) 
	\axlabel{UniqueSucc1}{wf:unique_succ1} \\
	& \for{s_1 \in \Succs \cap \ipath} \for{\wsuc {s_2} n \in \ipath} s_1 \ne s_2 \Rightarrow \lTX(s_1) \ne \lTX(s_2) 
	\axlabel{UniqueSucc2}{wf:unique_succ2} \\
	& \for{\wsuc{s_1} {n_1}, \wsuc {s_2} {n_2} \in \ipath} s_1 \ne s_2 \Rightarrow \tx{s_1} \ne \tx{s_2} \land n_1 \ne n_2
	\axlabel{UniqueSucc3}{wf:unique_succ3} \\
	& \for{a \in \Aborts} a \in \ipath \Rightarrow \nexsts{s \in \Succs, n} \tx s {=} \tx a \land (s \in \ipath \lor \wsuc s n \in \ipath)
	\axlabel{AbortOrSucc1}{wf:abort_or_succ1} \\	
	& \for{s \in \Succs, n} (s \in \ipath \lor \wsuc s n \in \ipath) \Rightarrow \nexsts{a \in \Aborts} \tx s {=} \tx a \land a \in \ipath
	\axlabel{AbortOrSucc2}{wf:abort_or_succ2} \\	
	& \for{\intr r e \!\in\! \ipath} 
	\begin{array}[t]{@{} l @{}}
		e \plt{\ipath} \intr r e \land \tx r {=} \tx e \\
		\land \nexsts{w \!\in\! \Writes \!\cup \Allocs} \loc w {=} \loc r \land \tx w {=} \tx  r \land e \plt{\ipath} w \plt{\ipath} \intr r e
	\end{array} 
	\axlabel{InternalRead}{wf:internal_read} \\
	& \for{\txid, \rs, n} \ipath {=} \ipath_1 \cat [\rmap \txid {\rs} n] \cat - \Rightarrow \wfread \txid {\rs} n {\ipath_1} 
	\axlabel{ExternalRead}{wf:external_read} \\
	& \for{r \in \Reads \cap \ipath} \exsts{\rs, n} r \plt{\ipath} \rmap{\tx r} \rs n 
	\axlabel{TotalExternalRead}{wf:total_external_read} \\
	& \for{w \in \Writes, s} \ipath {=} \ipath_1 \cat [w] \cat - \cat [\wsuc s -] \cat - \land \tx w {=} \tx s \Rightarrow \wfwrite w {\ipath_1} 
	\axlabel{WFWrite}{wf:wf_write} \\
	& \begin{array}{@{} l @{}}
		\for{m_1, m_2 \in \Allocs \cap \ipath} \loc{m_1} {=} \loc{m_2} \Rightarrow \\
		\quad \lTX(m_1) \ne \lTX(m_2)
		\land \nexsts{s_1, s_2} \lTX(s_1) {=} \lTX(m_1) \land \wsuc {s_1} - \!\in\! \ipath \land \lTX(s_2) {=} \lTX(m_2) \land \wsuc {s_2} - \!\in\! \ipath \\
	\end{array} 
	\axlabel{UniqueAlloc}{wf:unique_alloc} \\
	& \for{e \in \Reads \cup \Writes \cup \Allocs \cup \Commits \cup \Succs \cup \Aborts} e \in \ipath \Rightarrow \exsts{b \in \Begins} b \plt{\ipath} e \land \lTX(e) {=} \lTX(b)
	\axlabel{BeginFirst1}{wf:begin_first1} \\
	& \for{\intr r w \in \ipath} \exsts{b \in \Begins} b \plt{\ipath} \intr r w \land \lTX(r) {=} \lTX(b) 
	\axlabel{BeginFirst2}{wf:begin_first2} \\
	& \for{\rmap \txid \rs n \in \ipath} \exsts{b \in \Begins} b \plt{\ipath} \rmap \txid \rs n \land \lTX(b) {=} \txid 
	\axlabel{BeginFirst3}{wf:begin_first3} \\
	& \for{\wsuc s n \in \ipath} \exsts{b \in \Begins} b \plt{\ipath} \wsuc s n \land \lTX(b) {=} \lTX(s) 
	\axlabel{BeginFirst4}{wf:begin_first4} \\
	& \for{e \in \Aborts \cup \Succs \cup \Commits, \txid} e \plt{\ipath} \rmap \txid - - \Rightarrow \lTX(e) \ne \txid 
	\axlabel{ReadBeforeEnd1}{wf:read_before_end1} \\
	& \for{s, \txid} \wsuc s - \plt{\ipath} \rmap \txid - - \Rightarrow \lTX(s) \ne \txid 
	\axlabel{ReadBeforeEnd2}{wf:read_before_end2} \\
	& \for{e \in \Aborts \cup \Succs \cup \Commits, r} e \plt{\ipath} \intr r - \Rightarrow \lTX(e) \ne  \lTX(r) 
	\axlabel{IReadBeforeEnd1}{wf:iread_before_end1} \\
	& \for{s, r} \wsuc s - \plt{\ipath} \intr r - \Rightarrow \lTX(s) \ne  \lTX(r) 
	\axlabel{IReadBeforeEnd2}{wf:iread_before_end2} \\
	& \for{e \in \Aborts \cup \Succs, e' \in \EventsType} e \plt{\ipath} e' \Rightarrow \lTX(e) \ne \lTX(e') 
	\axlabel{EventBeforeEnd1}{wf:event_before_end1} \\
	& \for{e' \in \EventsType, s, n} \wsuc s n \plt{\ipath} e' \Rightarrow \lTX(s) \ne \lTX(e') 
	\axlabel{EventBeforeEnd2}{wf:event_before_end2} \\
	& \for{c \in \Commits, e \in \EventsType} c \plt{\ipath} e \land \lTX(c) {=} \lTX(e)  \Rightarrow e \in \Aborts \cup \Succs 
	\axlabel{OnlyEndAfterCommit}{wf:only_end_after_commit}  \\
	& \for{s \!\in\! \Succs \cap \ipath} \exsts{c \!\in\! \Commits} 
	\begin{array}[t]{@{} l @{}}				
		c \plt{\ipath} s \land \tx c {=} \tx s \\
		\land \nexsts{e \!\in\! \Writes \cup \Allocs}\! \tx e {=} \tx c \land e \plt{\ipath} c 
	\end{array} 
	\axlabel{EndReadOnly}{wf:end_read_only} \\
	& \for{\wsuc s n \!\in\! \ipath} \exsts{c \!\in\! \Commits, w \!\in\! \Writes \cup \Allocs} 
	\begin{array}[t]{@{} l @{}}		
		w \plt{\ipath} c \plt{\ipath} \wsuc s n \land \tx w {=} \tx c {=} \tx s 
	\end{array} 
	\axlabel{EndWriteAlloc}{wf:end_write_alloc} \\
	& \for{s_1, n_1, s_2, n_2} \wsuc {s_1} {n_1} \plt{\ipath} \wsuc{s_2}{n_2} \Rightarrow n_1 < n_2 
	\axlabel{EndWriteAllocInOrder1}{wf:end_write_alloc_in_order1} \\
	& \for{n_1, n_2} \wsuc {-} {n_2} \in \ipath \land n_1 < n_2 \Rightarrow \wsuc - {n_1} \in \ipath 
	\axlabel{EndWriteAllocInOrder2}{wf:end_write_alloc_in_order2} \\
	& \for{n_1, n_2} \rmap - - {n_2} \in \ipath \land n_1 \leq n_2 \Rightarrow \wsuc - {n_1} \in \ipath 
	\axlabel{ExtReadContinuous}{wf:ext_read_continuous} \\
	& \for{n_1, n_2} \rmap - - {n_1} \plt \ipath \wsuc - {n_2} \Rightarrow n_1 < n_2 
	\axlabel{ExtReadInOrder1}{wf:ext_read_in_order1} \\
	& \for{n_1, n_2} \wsuc - {n_1} \plt \ipath \rmap - - {n_2} \Rightarrow n_1 \leq n_2 
	\axlabel{ExtReadInOrder2}{wf:ext_read_in_order2} \\
	& \for{s, n} \wsuc s n \in \ipath \Rightarrow \ipath = - \cat [\rmap {\tx s} - {n {-} 1}] \cat [\wsuc s n] \cat - 
	\axlabel{ExtReadBeforeSucc}{wf:ext_read_before_succ} 
\end{align}
where:
{\small
\begin{align*}
	\nodups \ipath \iffdef 
	& \for{e \in \EventsType, \ipath_1, \ipath_2} 
		\ipath {=} \ipath_1 \cat [e] \cat \ipath_2 \Rightarrow \fresh {\ipath_1 \cat \ipath_2} e \\*
	& \land\,\for{r \in \Reads, \ipath_1, \ipath_2} \ipath {=} \ipath_1 \cat [\intr r -] \cat \ipath_2 \Rightarrow \fresh {\ipath_1 \cat \ipath_2} r \\*	
	& \land\,\for{s \in \Succs, \ipath_1, \ipath_2} \ipath {=} \ipath_1 \cat [\wsuc s -] \cat \ipath_2 \Rightarrow \fresh {\ipath_1 \cat \ipath_2} s \\		
	\getE{\pent} \eqdef &
	\begin{cases}
		\pent & \text{if } \pent \in \EventsType \\
		e & \text{if } \pent {=} \intr e - \lor \pent {=} \wsuc e - \\
		\text{undefined} & \text{otherwise}
	\end{cases}\\
	A \plt{\ipath} B \iffdef & \ipath = - \cat [A] \cat - [B] \cat - \\
	\wfread \txid {\rs} n \ipath \iffdef & 
	\wsuc - n \in \ipath
	\land \for{r \!\in\! \Reads \cap \ipath} \lTX(r) {=} \txid \Rightarrow \loc r \!\in\! \dom(\rs) \\*
	& \land\ \for{l, e} {\rs}(l) {=} e \Rightarrow 
	\begin{array}[t]{@{} l @{}}
		\neg\exsts{e' \in \Writes \cup \Allocs} \loc{e'} {=} l \land \lTX(e') {=} \txid \land e' \in \ipath \\
		\land\ \exsts{s, m} \tx s {=} \tx e \ne \txid \land e \plt{\ipath} \wsuc s m \land m \leq n \\
		\land\ \neg\exsts{e' \in \Writes \cup \Allocs, s', m'} \\
			\qquad \loc{e'} {=} l \land \lTX(e') {=} \lTX(s') \land e' \plt{\ipath} \wsuc {s'} {m'} \land m  < m' 
	\end{array}	\\
	\wfwrite w \ipath \iffdef & 
	\exsts{m \in \Allocs \cap \ipath} \loc{m} {=} \loc w \land 
	(\tx m {\ne} \tx w \Rightarrow \exsts{s \in \Succs} \tx s {=} \tx m \land m \plt{\ipath} \wsuc s -)	
\end{align*}
}
\end{definition}


\begin{definition}
\[
\begin{array}{@{} r @{\hspace{2pt}} l @{}}
	\pcmap_0 \eqdef 
	& \lambda\txid. \pcNotStarted \\
%
	\ismap_0 \eqdef 
	& \lambda\txid. \is_0 
	\quad \text{with} \quad 
	\is_0 \eqdef (0, \emptyset, \emptyset, \emptyset) \\
	\imems_0 \eqdef 
	& [\imem_0]
	\quad \text{with} \quad
	\imem_0 \eqdef \lambda x. \bot \\
%
\end{array}
\]
\end{definition}

\begin{proposition}
\label{lem:wf_path}
For all $\imems, \pcmap, \smap, \ipath$, 
if $\pcmap_0, \ismap_0, \imems_0, \emptyipath \redConf{}^* \pcmap, \ismap, \imems, \ipath$, 
then $\wfipath \ipath$.
\end{proposition}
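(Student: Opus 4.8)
The plan is to argue by induction on the length of the derivation $\pcmap_0, \ismap_0, \imems_0, \emptyipath \redConf{}^{*} \pcmap, \ismap, \imems, \ipath$. The bare statement $\wfipath{\ipath}$ is not by itself inductive, since several of its clauses---in particular those constraining the external-read markers $\rmap \txid \rs n$, the write-success markers $\wsuc s n$, and the memory index $n$ they carry---only make sense relative to the current state $(\pcmap, \ismap, \imems)$. I would therefore strengthen the claim to $\wfipath{\ipath} \land I(\pcmap, \ismap, \imems, \ipath)$ for a suitable invariant $I$ relating state and path, and prove this strengthened statement by induction.

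The invariant $I$ should record at least the following correspondences: (i)~$\size{\imems} - 1$ equals the number of write-success markers $\wsuc s n$ in $\ipath$, these markers carry exactly the indices $1, \dots, \size{\imems}-1$ in order of occurrence, and every external-read marker $\rmap \txid \rs n$ or success marker $\wsuc s n$ in $\ipath$ satisfies $n < \size{\imems}$; (ii)~for each transaction $\txid$ whose program counter has advanced past $\pcNotStarted$, a begin event for $\txid$ already occurs in $\ipath$ and its begin index equals the memory index recorded at that begin; (iii)~for each live transaction, its read set, write set and allocation set agree with the not-yet-superseded read, write and allocation markers recorded for that transaction in $\ipath$; and (iv)~the program counter of each $\txid$ fixes its terminated/live status consistently with the presence of commit, abort and success markers for $\txid$. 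Clauses (i)--(iv) are precisely the state-level facts needed to re-establish the path-ordering clauses of $\wfipath{\ipath}$.

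For the base case, the initial configuration has $\ipath = \emptyipath$, so every clause of $\wfipath{\ipath}$ holds vacuously (each is universally quantified over path entries, of which there are none), and $I$ holds by inspection of $\pcmap_0$, $\ismap_0$ and $\imems_0$. For the inductive step I would case-split on the last transition rule. Each rule either leaves $\ipath$ unchanged (the four invocation rules, for begin, read, write and allocation, together with the global chaos/fault rules that touch only $\pcmap$) or appends a fixed suffix to the right end of $\ipath$. The path-preserving rules keep $\wfipath{\ipath}$ immediately, and $I$ is re-established by inspecting the small change they make to the state. For the appending rules, every uniqueness clause---\ref{wf:nodups}, \ref{wf:unique_begin}, \ref{wf:unique_commit}, \ref{wf:unique_abort}, \ref{wf:unique_succ1}--\ref{wf:unique_succ3}, \ref{wf:unique_alloc} and \ref{wf:unique_fault}---is preserved precisely because each such rule carries a freshness premise $\fresh \ipath e$ (respectively $\fresh \ipath {(\crash, n)}$ in the crash rule); the begin-first and event-ordering clauses (\ref{wf:begin_first1}--\ref{wf:begin_first4}, \ref{wf:event_before_end1}, \ref{wf:event_before_end2}, \ref{wf:only_end_after_commit}, \ref{wf:read_before_end1}, \ref{wf:read_before_end2}) follow from the program-counter precondition of the rule together with parts~(ii) and~(iv) of $I$---for instance, a read, write or allocation response fires only from a pending ``do'' state, which by $I$ guarantees that a matching begin already lies in $\ipath$ while no terminating event for the transaction does.

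The main obstacle is the external-read transition (appending $[r, \rmap \txid \rs n]$) and the writer-commit transition (appending $[\rmap \txid {\is'.\rset} {(\size{\imems}{-}1)}] \cat [\wsuc s {\size{\imems}}]$ and extending $\imems$), the only rules that manipulate memory indices. For the external read I would use its premise $\isIValidIdx n {\is, \imems} {\is'}$, which bounds $\is'.\bidx \le n < \size{\imems}$ and forces the validated read set to agree with $\imems(n)$, together with part~(i) of $I$ (linking $\imems(n)$ to the unique success marker at index $n$ and to the writes and allocations of earlier transactions) to discharge the external-read clause \ref{wf:external_read} and the index-ordering clauses \ref{wf:ext_read_in_order1}, \ref{wf:ext_read_in_order2}, \ref{wf:ext_read_continuous} and \ref{wf:ext_read_before_succ}. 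For the writer commit I would check that the freshly recorded success index $\size{\imems}$ is exactly one more than the previous largest index, preserving \ref{wf:end_write_alloc_in_order1}, \ref{wf:end_write_alloc_in_order2} and the ``$n < \size{\imems}$'' part of~(i), and re-establish the read-validity side condition of \ref{wf:external_read} for the commit's own recorded read-map using $\isIValidIdx {(\size{\imems}{-}1)} {\is, \imems} {\is'}$. Checking that these index-level facts mesh with the read-validity condition across the whole path is where the bulk of the careful bookkeeping lies; once $I$ is available, the remaining clauses are routine.
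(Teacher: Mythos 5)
The paper itself states \cref{lem:wf_path} with no proof at all, so there is nothing to compare your argument against line-by-line; judged on its own merits, your strategy---induction on the derivation, strengthened to $\wfipath{\ipath} \land I(\pcmap,\ismap,\imems,\ipath)$ for a state--path invariant $I$---is the natural and essentially forced approach, and much of your per-rule analysis is on target. Two attributions are wrong, however. First, the global fault rule does \emph{not} leave $\ipath$ unchanged: it lifts the per-transaction $\segFault$ transitions, each of which appends $[\segFault]$ to the path, and neither carries a freshness premise; so \eqref{wf:unique_fault} cannot come from freshness but must come from a PC-based invariant (after a fault every counter is $\pcChaos$, from which no state enabling a second fault is reachable). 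Second, freshness of the appended event also does not by itself give \eqref{wf:unique_begin}--\eqref{wf:unique_succ3}: freshness only excludes a literal duplicate of the same event, not a \emph{distinct} earlier event of the same transaction, so these clauses too need the bidirectional PC-to-path correspondence that your clauses (ii)/(iv) only gesture at.

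The genuine gap is the crash rule, which your sketch never mentions and which breaks your invariant (i) outright: a crash resets $\imems$ to the singleton $\langle last(\imems)\rangle$ while leaving every marker in $\ipath$ intact, so immediately after a crash the path contains markers with indices $\geq \size{\imems}$ and the claim that the success markers carry exactly the indices $1,\dots,\size{\imems}-1$ is false. Worse, this is not merely a defect of your invariant. A post-crash writing transaction commits by appending $\wsuc{s}{\size{\imems}}$ computed from the \emph{reset} memory, i.e.\ $\wsuc{s}{1}$; if some pre-crash writer also committed, the path then contains two distinct success markers both carrying index $1$, in direct conflict with \eqref{wf:unique_succ3} and \eqref{wf:end_write_alloc_in_order1} as literally stated. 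So in the crash case your induction cannot close: any complete proof must make the index bookkeeping era-aware---for instance, strengthen $I$ to track a globally monotone count of memory versions ever created (so that indices are never reused), which in turn requires either reading the index clauses of $\wfipath{\cdot}$ per era or repairing the instrumented crash/commit rules. This interaction between crashes and memory indices is exactly where the real work of the proposition lies (and, as the definitions stand, where the statement itself is in jeopardy), and it is the one case your proposal silently skips.
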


\begin{proposition}
\label{lem:wf_path_prefix}
For all paths $\ipath, \ipath'$, if $\wfipath \ipath$ holds and $\ipath'$ is a prefix of $\ipath$, then $\wfipath{\ipath'}$ also holds.
\end{proposition}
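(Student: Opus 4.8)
The plan is to reduce the statement to a clause-by-clause verification. Since $\wfipath{\cdot}$ is the finite conjunction of clauses \ref{wf:nodups}--\ref{wf:ext_read_before_succ}, it suffices to prove that each clause, assumed of $\ipath$, holds of an \emph{arbitrary} prefix $\ipath'$, where I write $\ipath = \ipath' \cat \ipath''$. First I would record three facts that every case relies on: \textbf{(i)} \emph{order restriction}, $x \in \ipath' \Rightarrow x \in \ipath$ and $x \plt{\ipath'} y \Rightarrow x \plt{\ipath} y$; \textbf{(ii)} \emph{down-closure}, if $y \in \ipath'$ and $x \plt{\ipath} y$ then $x \in \ipath'$ (a prefix is an initial segment); and \textbf{(iii)} \emph{before-portion stability}, for any $y \in \ipath'$ the segment of $\ipath$ strictly before $y$ coincides with the segment of $\ipath'$ strictly before $y$, which is immediate from (i) and (ii).

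With these facts the clauses split into three families, each discharged uniformly. The \emph{forbidding} clauses --- \ref{wf:nodups}, \ref{wf:unique_fault}, \ref{wf:same_tx_same_thrd}, the uniqueness clauses \ref{wf:unique_begin}--\ref{wf:unique_succ3}, \ref{wf:abort_or_succ1}, \ref{wf:abort_or_succ2}, \ref{wf:unique_alloc}, the strict-order clauses \ref{wf:end_write_alloc_in_order1}, \ref{wf:ext_read_in_order1}, \ref{wf:ext_read_in_order2}, the clauses \ref{wf:read_before_end1}--\ref{wf:event_before_end2} and \ref{wf:only_end_after_commit}, and the negative conjunct of \ref{wf:internal_read} --- are universally quantified conditions over tuples of entries constrained only by membership and $\plt{\ipath}$-order; by (i) any tuple instantiating the hypothesis inside $\ipath'$ also instantiates it inside $\ipath$, so the conclusion transfers (and in particular a sublist of a duplicate-free list is duplicate-free). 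The \emph{predecessor-existence} clauses --- \ref{wf:begin_first1}--\ref{wf:begin_first4}, \ref{wf:end_read_only}, \ref{wf:end_write_alloc}, \ref{wf:ext_read_before_succ}, and the positive conjunct of \ref{wf:internal_read} --- assert, for an entry already in $\ipath'$, the existence of an earlier witness; that witness $\plt{\ipath}$-precedes an element of $\ipath'$ and so is pulled into $\ipath'$ by (ii), while any side condition it carries about the material before it transfers verbatim by (iii). Finally, the \emph{before-portion} clauses \ref{wf:external_read} and \ref{wf:wf_write} transfer directly, since their hypotheses place the triggering entry in $\ipath'$ and (iii) renders the preceding segment identical in $\ipath'$ and $\ipath$, so the asserted predicate is literally the same formula.

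Two clauses are not local but still yield to (ii) once combined with the matching order clause. For \ref{wf:ext_read_continuous}, from $\rmap{-}{-}{n_2} \in \ipath'$ and $n_1 \le n_2$ the $\ipath$-instance gives $\wsuc{-}{n_1} \in \ipath$; clause \ref{wf:ext_read_in_order1} rules out $\rmap{-}{-}{n_2} \plt{\ipath} \wsuc{-}{n_1}$ (that would force $n_2 < n_1$), hence $\wsuc{-}{n_1} \plt{\ipath} \rmap{-}{-}{n_2}$ and (ii) places $\wsuc{-}{n_1}$ in $\ipath'$. Clause \ref{wf:end_write_alloc_in_order2} is handled identically using \ref{wf:end_write_alloc_in_order1}.

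The single genuine obstacle is \ref{wf:total_external_read}, the only clause demanding a \emph{successor} rather than a predecessor: every bare read $r \in \Reads \cap \ipath$ must be followed by a response-mapping $\rmap{\tx r}{\rs}{n}$. Down-closure offers no help, since a prefix could in principle stop between $r$ and its mapping, so this is the step I expect to be hardest. My plan is to secure it through a pairing invariant for well-formed paths --- that a bare read entry is always \emph{immediately} succeeded by its response-mapping --- so that any prefix containing $r$ already contains the mapping and the clause survives. The delicate point is that this pairing is not forced by the local shape of $\wfipath{\cdot}$ in isolation (interleaved external reads of distinct transactions could otherwise separate a mapping from its read); establishing it as an invariant --- i.e.\ strengthening the adjacency discipline implicit in \ref{wf:total_external_read} so that each bare read is tied to its unique following response-mapping --- is precisely the crux on which prefix stability rests, after which the remaining clauses are bookkeeping.
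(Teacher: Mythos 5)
Your reduction is sound as far as it goes: the split into universally-quantified (forbidding) clauses, predecessor-existence clauses, and before-portion clauses, each discharged via order restriction, down-closure, and before-portion stability, is exactly the right bookkeeping, and your combined handling of \ref{wf:ext_read_continuous} via \ref{wf:ext_read_in_order1} (and of \ref{wf:end_write_alloc_in_order2} via \ref{wf:end_write_alloc_in_order1}) is correct. You also correctly isolate \ref{wf:total_external_read} as the sole clause of the wrong polarity. (For calibration: the paper states this proposition \emph{without} proof, so there is no official argument to compare against.) The genuine gap is that your proposed repair cannot succeed. A pairing or adjacency invariant is powerless against prefixes, because a prefix of a list may end at \emph{any} position --- in particular exactly at the bare read entry $r$, excluding the response-mapping even when it is the immediately next entry. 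Indeed, the instrumented external-read transition already appends $[r, \rmap{\tx r}{\rs}{n}]$ atomically, so the adjacency discipline you hope to establish holds in every reachable path; and yet the prefix ending at $r$ still contains a read with no later $\rmap{\tx r}{-}{-}$, violating \ref{wf:total_external_read}. So under the literal definitions the proposition is false: any well-formed path containing an external read yields a counterexample prefix, and no strengthening of $\wfipath{\cdot}$ on the ambient path can prevent an adversarial cut.

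The correct resolution is to restrict or reinterpret the quantification over prefixes rather than strengthen the invariant. One repair is to prove the claim only for prefixes that arise by truncating the generating execution, i.e.\ prefixes that do not split the blocks appended by a single transition (the pair $[r,\rmap{\tx r}{\rs}{n}]$, the pair $[\rmap{-}{-}{-}]\cat[\wsuc{s}{n}]$, and the crash block); such prefixes are themselves paths of shorter executions, hence well-formed directly by Proposition~\ref{lem:wf_path}, a one-line argument. This restricted form suffices for the only use of the proposition, in the final soundness theorem: there one needs merely \emph{some} path prefix inducing a given history prefix, and since $\getE{\cdot}$ is undefined on $\rmap$ entries, one may always choose the path prefix to carry a trailing response-mapping along with its read without changing the induced history. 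The alternative repair is to weaken \ref{wf:total_external_read} itself (e.g.\ make the existence of the response-mapping conditional on the transaction's later activity) and re-derive the stronger fact where it is actually consumed, in the $\rf$-totality argument of Lemma~\ref{lem:execution}. Either way, the step you flagged as the crux is not a hard-but-fillable hole in your proof; it is the point at which the statement itself needs amending, and your analysis stops just short of drawing that conclusion.
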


\begin{proposition}
\label{lem:wf_path_crashes_removed}
For all paths $\ipath, \ipath'$, if $\wfipath \ipath$ holds and $\ipath' = \ipath \setminus \CrashMarkers$, then $\wfipath{\ipath'}$ also holds, where:
\[
	[] \setminus \CrashMarkers \eqdef []
	\qqqquad
	(\pent:: \ipath) \setminus \CrashMarkers \eqdef
	\begin{cases}
		\ipath & \text{if } \pent = (\crash, -) \\
		\pent:: (\ipath \setminus \CrashMarkers) & \text{otherwise}
	\end{cases}
\]
\end{proposition}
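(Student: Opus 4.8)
The plan is to reduce the claim to a single structural lemma about the deletion operation $\setminus\,\CrashMarkers$, after which every conjunct of $\wfipath{\ipath}$ transfers almost mechanically. The crucial observation is that \emph{none} of the conditions (\ref{wf:nodups})--(\ref{wf:ext_read_before_succ}) mentions a crash marker: they quantify only over events in $\EventsType$ and the derived path entries $\intr{r}{e}$, $\rmap{\txid}{\rs}{n}$ and $\wsuc{s}{n}$, and they constrain these solely through membership in the path, through the path order $\plt{\ipath}$, through adjacency (a $\cat$ with no intervening ``$-$''), and through the auxiliary predicates $\fresh{\ipath}{e}$, $\nodups{\ipath}$, $\getE{\pent}$, $\wfread{\txid}{\rs}{n}{\ipath}$ and $\wfwrite{w}{\ipath}$. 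The only condition naming a fault symbol, \ref{wf:unique_fault}, concerns $\segFault$, which is distinct from a crash marker $(\crash,n)$ and is therefore never removed, so that conjunct is trivially undisturbed.

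First I would prove the structural lemma by induction on $\ipath$, following the recursive definition of $\setminus\,\CrashMarkers$. Since $\ipath'$ is precisely the subsequence of $\ipath$ obtained by deleting the (inert) crash markers, for any non-crash entries $A,B$ one obtains: (i) $A \in \ipath' \iff A \in \ipath$; (ii) $A \plt{\ipath'} B \iff A \plt{\ipath} B$, i.e.\ $\plt{\ipath'}$ and $\plt{\ipath}$ agree on non-crash entries; and (iii) if $A$ is immediately followed by $B$ in $\ipath$ then likewise in $\ipath'$, since deletion can never \emph{insert} an entry between two adjacent ones. From (i)--(iii) it is immediate that $\fresh{\ipath}{e}$, $\nodups{\ipath}$ and $\getE{\pent}$ are unchanged on non-crash entries. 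The compound predicates $\wfread{\txid}{\rs}{n}{\ipath}$ and $\wfwrite{w}{\ipath}$ are likewise preserved: their positive existential witnesses (the $\wsuc{s}{m}$ inside $\wfread$, the allocation inside $\wfwrite$) are non-crash entries and survive in the correct relative order by (i) and (ii); and their \emph{negative} existential clauses are preserved \emph{a fortiori}, since $\ipath'$ contains only a subset of the entries of $\ipath$.

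Granting the structural lemma, the conjuncts split into a few groups. The uniqueness- and presence-type conditions---\ref{wf:nodups}, \ref{wf:same_tx_same_thrd}, \ref{wf:unique_begin}, \ref{wf:unique_commit}, \ref{wf:unique_abort}, \ref{wf:unique_succ1}--\ref{wf:unique_succ3}, \ref{wf:abort_or_succ1}, \ref{wf:abort_or_succ2} and \ref{wf:unique_alloc}---depend only on membership and freshness, hence follow from (i) and the invariance of those predicates. The order-theoretic conditions---\ref{wf:begin_first1}--\ref{wf:begin_first4}, \ref{wf:read_before_end1}, \ref{wf:read_before_end2}, \ref{wf:iread_before_end1}, \ref{wf:iread_before_end2}, \ref{wf:event_before_end1}, \ref{wf:event_before_end2}, \ref{wf:only_end_after_commit}, \ref{wf:end_read_only}, \ref{wf:end_write_alloc}, \ref{wf:end_write_alloc_in_order1}, \ref{wf:end_write_alloc_in_order2}, \ref{wf:ext_read_continuous}, \ref{wf:ext_read_in_order1}, \ref{wf:ext_read_in_order2}, together with \ref{wf:internal_read} and \ref{wf:total_external_read}---are universally quantified implications phrased in $\plt{\ipath}$ and membership (some carrying existential witnesses); by (ii) we have $\plt{\ipath'} \subseteq \plt{\ipath}$ on the relevant entries and by (i) membership agrees, so every instance over $\ipath'$ is one that already held over $\ipath$, with existential witnesses surviving by (i) and (ii). The decomposition conditions \ref{wf:external_read} and \ref{wf:wf_write} reduce, via their prefix decompositions, to $\wfread$ and $\wfwrite$ on crash-removed prefixes, which are handled by the structural lemma; and the single adjacency-in-the-conclusion condition \ref{wf:ext_read_before_succ} follows from (iii), since given $\wsuc{s}{n} \in \ipath'$ we have $\wsuc{s}{n} \in \ipath$, so $\ipath$ places $\rmap{\lTX(s)}{-}{n-1}$ immediately before it and (iii) keeps them adjacent in $\ipath'$.

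The main obstacle is breadth rather than depth: there is no single hard argument, but one must be scrupulous across all thirty-five conjuncts and, in particular, about the two compound predicates. The genuinely load-bearing steps are (iii)---that adjacency is never \emph{broken}---and the preservation of the negative existentials inside $\wfread$ (``no earlier same-location write or allocation of this transaction'') and inside \ref{wf:internal_read}; both hold because deletion removes only crash markers and so can neither create a new witness nor separate two adjacent entries. Once (i)--(iii) and the invariance of $\wfread$, $\wfwrite$, $\fresh{\ipath}{e}$, $\nodups{\ipath}$ and $\getE{\pent}$ are established, the remaining conjuncts are routine. Note that Propositions~\ref{lem:wf_path} and~\ref{lem:wf_path_prefix} are not needed for this statement, although the same structural technique underlies the prefix case.
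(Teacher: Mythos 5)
Your proof is correct. The paper in fact states this proposition without any proof, treating it as routine; your argument---the structural lemma that deleting the inert crash markers preserves membership, relative order and adjacency of the surviving entries, followed by the conjunct-by-conjunct transfer (including the negative existential clauses of the read/write well-formedness predicates and the adjacency conclusion of \textsc{ExtReadBeforeSucc})---supplies exactly the routine argument the paper elides.
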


\begin{lemma}
\label{lem:ddtms_refines_instrumented_ddtms}
Let $\smap_0 \eqdef \lambda \txid. (0, \emptyset, \emptyset, \emptyset)$ and $\mems_0 \eqdef [\mem_0]$ with $\mem_0 \eqdef \lambda \x. \bot$.
For all $\pcmap, \smap, \mems$ and executions $\sigma$,
if $\sigma$ denotes the execution $\pcmap_0, \smap_0, \mems_0 \redConf{}^* \pcmap, \smap, \mems $, 
then there exists $\ismap, \imems, \ipath$ such that $\pcmap_0, \ismap_0, \imems_0, [] \redConf{}^* \pcmap, \ismap, \imems, \ipath$ and $\mathit{OH}_\sigma = \getHist \ipath$, 
where $\getHist \ipath \eqdef (\Events, \teo)$ with
$\Events \eqdef \setcomp{e \in \EventsType}{\exsts{\pent \!\in\! \ipath} \getE \pent {=} e}$
and $\teo \eqdef \setcomp{(e_1, e_2)}{\exsts{\pent_1, \pent_2} \pent_1 \plt{\ipath} \pent_2 \land \getE{\pent_1} {=} e_1 \land \getE{\pent_2} {=} e_2}$.
\end{lemma}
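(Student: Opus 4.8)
The plan is to establish the lemma as a \emph{forward simulation} from the plain \DDTMS transition system to its instrumented counterpart, proved by induction on the number of steps of $\sigma$. The relation I would carry as the induction invariant couples a reachable plain configuration $(\pcmap, \smap, \mems)$ with an instrumented configuration $(\pcmap, \ismap, \imems, \ipath)$ by demanding that the program-counter maps coincide, that the instrumented state and memory sequence \emph{erase} to the plain ones, i.e.\ $\erase{\ismap} = \smap$ and $\erase{\imems} = \mems$, and that the extracted history $\getHist{\ipath}$ equals the ordered history $\mathit{OH}$ of the plain prefix constructed so far. The point that makes this viable is that $\erase{\cdot}$ is a homomorphism: it discards the event-level decorations (the reads-from targets stored in read sets and memory versions, and the allocation events) while preserving exactly the values the plain system tracks, and both systems drive $\pcmap$ by the same control flow.

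For the base case one checks directly that $\erase{\ismap_0} = \smap_0$, $\erase{\imems_0} = \mems_0$, that the initial counter maps agree, and that $\getHist{[]} = (\emptyset, \emptyset)$ matches the history of the empty run. For the inductive step I would case-split on the last plain global rule (\textsc{S}, \textsc{F}, \textsc{X}, \textsc{C}) and, within (\textsc{S}), on the per-transaction rule taken (\textsc{IB}, \textsc{DB}, \textsc{IOp}, \textsc{DR-I}, \textsc{DR-A}, \textsc{DR-E}, \textsc{DW}, \textsc{DA}, \textsc{DC-RO}, \textsc{DC-W}, \textsc{RC}, \textsc{RA}, and the fault rules \textsc{FR}, \textsc{FW}). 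For each plain step I would fire the matching instrumented transition, choosing every event demanded by that rule to be exactly the event $(i, \txid, \txid, L(\alpha_i))$ that the $\mathit{OH}$-construction assigns to the corresponding response at position $i$; since event identifiers strictly increase along $\ipath$, such an event is automatically $\fresh{\ipath}{\cdot}$, which makes the identification of $\getHist{\ipath}$ with $\mathit{OH}_\sigma$ exact rather than merely up to renaming. Pure invocation steps and the internal commit decision add no event of $\EventsType$ and leave $\getHist$ untouched.

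The technically delicate cases are the external read \textsc{DR-E} and the writer commit \textsc{DC-W}. For \textsc{DR-E} the plain rule only requires the value-level predicate $\isValidIdx{n}{\s, \mems}$, whereas the instrumented rule needs an $\is'$ with $\is \equiv \is'$ whose read-set entries are the precise events stored at memory index $n$ together with a reads-from witness $w = \imems(n)(l)$. Here I would exploit the erasure invariant to \emph{re-target} the read set: for each $x \in \dom(\s.\rset)$ put $\is'.\rset(x) \defeq \imems(n)(x)$ and observe that $\wval{\imems(n)(x)} = \erase{\imems}(n)(x) = \mems(n)(x) = \s.\rset(x)$ forces the values to agree, so $\is \equiv \is'$ and the instrumented validity predicate hold simultaneously; the witness $w$ then satisfies $\wval{w} = v$, and the read event $r$ is built with $\rval{r} = v$. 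For \textsc{DC-W} I would take the instrumented memory version $\imems' = \imems \cat ((\last{\imems} \oplus \is'.\aset) \oplus \is'.\wset)$ and verify that $\erase{\cdot}$ commutes with this append-and-override, reproducing exactly the plain $\mems'$; the two auxiliary path entries attached by this step (a read-validation marker and a write-success marker) contribute, under the projection $\getE{\cdot}$, only the success event to $\getHist{\ipath}$, matching the plain commit response, while the read-validation marker is erased. The same projection $\getE{\cdot}$ is what lets me disregard crash markers when comparing the two histories.

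Finally, the crash and fault transitions need dedicated treatment. For a crash (\textsc{X}) both systems reset the memory sequence to the singleton $\langle \last{\cdot} \rangle$ and abort the live transactions in $\pcmap$; I would check that the abort events and the crash marker appended by the instrumented rule coincide, in order, with the crash-induced entries of $\mathit{OH}_\sigma$, re-establishing the invariant across the era boundary. For faults I would use that $\mathit{OH}_\sigma$ is built solely from the longest fault-free prefix: once \textsc{FR} or \textsc{FW} fires, both systems set $\pcmap$ to the constant $\pcChaos$ map and the $\segFault$ entry yields no event, so the history is frozen and the final $\pcmap$ agree regardless of any subsequent chaotic steps. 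I expect the main obstacle to be the \textsc{DR-E} re-targeting argument, since that is the one place where the purely value-level plain invariant must be lifted to the event-level instrumented invariant; a secondary difficulty is reconciling the granularity of the instrumented commit (a single response step carrying the memory extension) with the plain commit's separate silent decision and response steps, which forces the simulation to match these en bloc at the commit response and to synchronize there rather than at the intervening internal state.
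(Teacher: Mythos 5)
Your strategy --- a forward simulation proved by induction on the steps of $\sigma$, coupling the two systems through the erasure functions ($\erase{\ismap} = \smap$, $\erase{\imems} = \mems$) and the extracted history --- is exactly the route the paper intends: its entire proof of this lemma is the sentence ``follows from straightforward induction on the structure of \DDTMS{} transitions and the definition of instrumented \DDTMS{} transitions'', and the erasure functions are defined in the paper for precisely this purpose. Your treatment of \textsc{DR-E} (re-targeting the value-level read set to the events stored at the chosen memory index, via the $\is \equiv \is'$ clause of the instrumented validity predicate) is also the right mechanism for the one case you correctly single out as delicate.

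However, your resolution of the commit-granularity mismatch --- ``match en bloc at the commit response'' --- does not close that case, and I do not see how any choice of synchronisation point can. In the plain system the memory sequence is extended by the \emph{silent} step \textsc{DC-W}, while the success response \textsc{RC} is a separate, later transition, and rule (\textsc{S}) lets other transactions step in between. So another transaction $\txidb$ may fire \textsc{DR-E} in that window and validate against the freshly appended version; then $\mathit{OH}_\sigma$ contains $\txidb$'s read of $\txid$'s written value strictly \emph{before} $\txid$'s success event. The instrumented system cannot realise this order: write/alloc events enter $\imems$ only through the instrumented writer-commit step, which in the same transition appends $\wsuc{s}{\cdot}$, i.e., places the success event in the path before any such read. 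This is not incidental --- the paper's own \cref{lem:cpendrf_empty} shows instrumented paths never exhibit reads from commit-pending transactions, whereas plain \DDTMS{}, like TMS2, permits them. The induction therefore cannot be pushed through this interleaving, and as stated the lemma itself appears to fail on it; the paper's one-line proof hides the same problem. Two smaller gaps of the same kind: the instrumented crash rule appends abort events for every live transaction while $\mathit{OH}_\sigma$ records only a crash marker, so your claim that the crash-induced entries ``coincide'' is not true under the given definitions; and there is no instrumented write-fault rule matching \textsc{FW} (the instrumented system faults only at reads and commits), so your lock-step matching of fault steps would need a detour (e.g., faulting via a read invocation) to reach the same all-$\pcChaos$ counter map without adding spurious events.
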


\begin{proof}
Follows from straightforward induction on the structure of \DDTMS transitions and the definition of instrumented \DDTMS transitions.
\end{proof}


\subsection{Instrumented \DDTMS Transitions Refine Dynamic Durable Opacity}

\newcommand{\swriters}[1][\ipath]{\ensuremath{\mathsf{SWriters}_{#1}}}
\newcommand{\ereadonly}[1][\ipath]{\ensuremath{\mathsf{EROnly}_{#1}}}
\newcommand{\esreaders}[1][\ipath]{\ensuremath{\mathsf{ESReaders}_{#1}}}
\newcommand{\ereaders}[1][\ipath]{\ensuremath{\mathsf{EReaders}_{#1}}}
\newcommand{\pawriters}[1][\ipath]{\ensuremath{\mathsf{PAWriters}_{#1}}}
\begin{definition}
\[
	\peo(\ipath) \eqdef 
	\begin{array}[t]{@{} l @{}}
		\setcomp{
			(\txid_1,	\txid_2)
		}{
			(\txid_1, n_1), 	(\txid_2, n_2) \in \swriters \land n_1 < n_2
		} \\
		\cup
		\setcomp{
			(\txid_1,	\txid_2)
		}{
			(\txid_1, n_1) \in \swriters 
			\land	(\txid_2, n_2) \in \esreaders 
			\land n_1 \leq n_2
		} \\
		\cup
		\setcomp{
			(\txid_1,	\txid_2)
		}{
			(\txid_1, n_1) \in \esreaders 
			\land	(\txid_2, n_2) \in \swriters 
			\land n_1 < n_2
		} \\
		\cup
		\setcomp{
			(\txid_1,	\txid_2)
		}{
			(\txid_1, n_1) \in \swriters 
			\land	(\txid_2, n_2) \in \ereaders 
			\land n_1 \leq n_2
		} \\
		\cup
		\setcomp{
			(\txid_1,	\txid_2)
		}{
			(\txid_1, n_1) \in \ereaders 
			\land (\txid_2, n_2) \in \swriters 
			\land n_1 < n_2
		} 
	\end{array}
\]
with
\[\small
\begin{array}{@{} r @{\hspace{2pt}} l@{}}
	\swriters[\ipath] \eqdef & 
	\setcomp{
		(\lTX(s), n)	
	}{
		\wsuc s n \in \ipath
	} \\
	\esreaders[\ipath] \eqdef & 
	\setcomp{
		(\txid, n)
	}{
		\ipath {=} - \cat [\rmap \txid - n] \cat \ipath'
		\land \for{\rs, m} \rmap \txid \rs m \nin \ipath'
		\land \for m (\txid, m) \nin \swriters[\ipath] \\
		\land\ \exsts{s \in \Succs \cap \ipath'} \tx s = \txid
	} \\
	\ereaders[\ipath] \eqdef & 
	\setcomp{
		(\txid, n)
	}{
		\ipath {=} - \cat [\rmap \txid - n] \cat \ipath'
		\land \for{\rs, m} \rmap \txid \rs m \nin \ipath' \\
		\land \for m (\txid, m) \nin \swriters[\ipath] \cup \ereaders[\ipath] 
	} 
\end{array}
\]
\end{definition}

\begin{lemma}
\label{lem:aux}
For all paths $\ipath$ and executions $G$, if $\wfipath \ipath$ holds and $G {=} \makeG \ipath$,
then:
\begin{itemize}
	\item For all $w \in G.\Writes \cap \SSet$, there exists $n$ such that $(\tx w, n) \in \swriters$.
\end{itemize}
\end{lemma}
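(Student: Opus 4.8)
The plan is to reduce the claim to a dichotomy on how the success event of the committing transaction is recorded in $\ipath$. The instrumented \DDTMS system has two commit-response transitions: the read-only rule appends a bare success event $[s]$ to the path, while the writing rule appends the pair $[\rmap{\lTX(s)}{-}{-}] \cat [\wsuc s {\size{\imems}}]$, contributing a $\wsuc$-marker. Since $\swriters$ is by definition $\setcomp{(\lTX(s), n)}{\wsuc s n \in \ipath}$, the entire content of the lemma is that a transaction containing a write event can only have taken the writing commit rule. The key will be the well-formedness axioms \ref{wf:end_read_only} and \ref{wf:only_end_after_commit}, which encode this fact at the level of paths.

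First I would unfold the hypotheses. From $w \in G.\Writes$ with $G = \makeG \ipath$, the write $w$ occurs in $\ipath$. From $w \in \SSet = \dom(\st; [\Succs])$ I obtain a success event $s$ with $s \in G.\Succs$ and $\lTX(s) = \lTX(w)$. Since $s$ is an event of $G = \makeG \ipath$ and the only path entries $\pent$ with $\getE \pent = s$ for a success $s$ are $\pent = s$ and $\pent = \wsuc s n$ (the case $\pent = \intr r e$ is excluded because it forces $\getE \pent \in \Reads$), it follows that either $\wsuc s n \in \ipath$ for some $n$, or $s \in \ipath$ as a bare success event. In the first case we are done immediately, since then $(\lTX(s), n) = (\lTX(w), n) \in \swriters$ witnesses the claim.

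The only remaining case is $s \in \Succs \cap \ipath$, where I would derive a contradiction. Applying \ref{wf:end_read_only} to $s$ yields a commit $c \in \Commits$ with $\lTX(c) = \lTX(s) = \lTX(w)$, $c \plt{\ipath} s$, and no write or allocation event of transaction $\lTX(c)$ preceding $c$ in $\ipath$. It then suffices to show that $w$ is exactly such a forbidden event. Both $w$ and $c$ occur in $\ipath$ and are distinct events (one is a write, the other a commit), so by \ref{wf:nodups} and the fact that $\ipath$ is a sequence they are $\plt{\ipath}$-comparable. If $c \plt{\ipath} w$, then \ref{wf:only_end_after_commit} would force $w \in \Aborts \cup \Succs$, contradicting $w \in \Writes$; hence $w \plt{\ipath} c$. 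But then $w$ is a write of transaction $\lTX(c)$ occurring before $c$, contradicting the final conjunct of \ref{wf:end_read_only}. This eliminates the bare-success case, so the $\wsuc$-case must hold and the claim follows.

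I expect the main obstacle to be bookkeeping rather than depth: one must carefully translate the graph-level statement $w \in \SSet$, phrased through $\st$ and $[\Succs]$ over $G$, into the path-level existence and representation of the success event, invoking the definition of $\makeG$ to move between events of $G$ and entries of $\ipath$. The ordering step $w \plt{\ipath} c$ is the other delicate point, as it reconstructs the intended transaction-local order (begin, operations, commit, abort/success) purely from the well-formedness axioms rather than from the transition relation.
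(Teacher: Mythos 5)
Your proof is correct and takes essentially the same route as the paper's: both reduce the claim to the dichotomy of whether the success event of $\tx w$ occurs in $\ipath$ as a bare event $s$ or as a marker $\wsuc s n$, and both rule out the bare-success case using \eqref{wf:end_read_only}. The only difference is bookkeeping in that final contradiction: the paper first establishes $w \plt{\ipath} s$ (resp.\ $w \plt{\ipath} \wsuc s n$) via \eqref{wf:event_before_end1} and \eqref{wf:event_before_end2} and then cites \eqref{wf:end_read_only} alone, whereas you position $w$ relative to the commit $c$ supplied by \eqref{wf:end_read_only}, invoking \eqref{wf:only_end_after_commit} to dispose of the case $c \plt{\ipath} w$ --- which actually makes your derivation slightly more explicit than the paper's own.
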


\begin{proof}
Pick arbitrary $w \in G.\Writes \cap \SSet$ and let $\tx w {=} \txid$. 
From $G {=} \makeG \ipath$ and the definition of $G.\Events$ we then know $w \!\in\! \ipath$. 
As $w  \in G.\SSet$, we know there exist $s \in G.\Succs$ such that $\tx s {=} \tx w$.
Consequently, given the definition of $G.\Events$, either i) $s \in \ipath$ or ii) there exists $n$ such that $\wsuc s n \in \ipath$. As such, since $w \in \ipath$, from $\wfipath \ipath$, \eqref{wf:event_before_end1}, \eqref{wf:event_before_end2} we know either i) $w \plt{\ipath} s$ or ii) there exists $n$ such that $w \plt{\ipath} \wsuc s n$. 
Therefore, from $\wfipath \ipath$ and \eqref{wf:end_read_only} we know case (i) does not arise and thus there exists $n$ such that $w \plt{\ipath} \wsuc s n$.
Hence by definition we have $(\txid, n) \in \swriters$.
\end{proof}

\begin{lemma}
\label{lem:peo_transitive}
For all paths, $\peo(\ipath)$ is transitive.
\end{lemma}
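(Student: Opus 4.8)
The plan is to establish transitivity by collapsing the five clauses into a single numerical ordering, showing that this ordering settles every composition except one, and then dealing with that remaining composition separately.

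First I would record the preliminaries that follow from well-formedness of $\ipath$ (via \cref{lem:wf_path}). Each transaction occurring in $\peo$ carries a unique index: a writer $\txid$ has a single $\swriters$-index (distinct successes have distinct transactions and indices by \eqref{wf:unique_succ1} and \eqref{wf:unique_succ3}), and a reader has a single $\esreaders$- or $\ereaders$-index, since both sets are defined through the \emph{last} read-map of $\txid$. Moreover the three sets are pairwise disjoint on their transaction components, as $\esreaders$ and $\ereaders$ explicitly exclude $\swriters$-membership. Hence each transaction appearing in $\peo$ has a well-defined type, writer (W) or reader (R), and a well-defined index $\nu(\txid) \in \Nats$. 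Inspecting the definition, every clause then has one of exactly three shapes: W$\,\to\,$W with $\nu_1 < \nu_2$, W$\,\to\,$R with $\nu_1 \le \nu_2$, and R$\,\to\,$W with $\nu_1 < \nu_2$. In particular there is no R$\,\to\,$R clause, and every $\peo$-edge has at least one W endpoint.

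The central device is the embedding $\kappa(\txid) \eqdef 2\nu(\txid)$ for a writer and $\kappa(\txid) \eqdef 2\nu(\txid) + 1$ for a reader. A one-line arithmetic check shows that each of the three edge shapes entails $\kappa(\txid_1) < \kappa(\txid_2)$, and conversely that $\kappa(\txid_1) < \kappa(\txid_2)$ recovers exactly the strict or non-strict inequality demanded by the clause appropriate to the types of $\txid_1$ and $\txid_2$. Hence, given $(\txid_1, \txid_2), (\txid_2, \txid_3) \in \peo$ we immediately obtain $\kappa(\txid_1) < \kappa(\txid_3)$. I would then case-split on the types of the endpoints $\txid_1$ and $\txid_3$: whenever at least one of them is a writer, the ordered pair $(\txid_1, \txid_3)$ fits one of the shapes W$\,\to\,$W, W$\,\to\,$R or R$\,\to\,$W, and the strict inequality $\kappa(\txid_1) < \kappa(\txid_3)$ supplies precisely the index comparison that clause requires, so $(\txid_1, \txid_3) \in \peo$.

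The only configuration the $\kappa$-order leaves open is when $\txid_1$ and $\txid_3$ are both readers, for which $\peo$ offers no clause. Here the middle transaction $\txid_2$ is forced to be a writer, since a reader can be neither the W endpoint of the first edge nor that of the second; the situation is thus exactly R$\,\to\,$W$\,\to\,$R with $\nu(\txid_1) < \nu(\txid_2) \le \nu(\txid_3)$. This is the crux of the argument and the step I expect to be the main obstacle: the numerical embedding alone does not close it, so the proof obligation reduces entirely to showing that this straddling pattern cannot genuinely arise in a well-formed path (or else that the outer readers are related after all). I would attempt to discharge it using the index-ordering constraints \eqref{wf:ext_read_in_order1}, \eqref{wf:ext_read_in_order2} and \eqref{wf:ext_read_before_succ}, the last-read-map characterisation of $\esreaders$ and $\ereaders$, and \cref{lem:aux} relating successful writers to $\swriters$, to argue that the two incident edges cannot simultaneously hold for distinct reader endpoints sharing a common intervening writer. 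Once this vacuity (or its resolution) is pinned down, everything else is the routine bookkeeping of combining strict and non-strict inequalities.
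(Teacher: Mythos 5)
Your identification of the reader--writer--reader composition as the crux is exactly right, but the discharge you hope for does not exist: the straddling pattern \emph{does} arise in well-formed paths, and when it does the two outer readers are genuinely unrelated in $\peo(\ipath)$. Concretely, take the path generated by instrumented \DDTMS (well-formed by \cref{lem:wf_path}) in which a writer $\txid_A$ commits creating memory version $1$, a read-only transaction $\txid_1$ reads version $1$ and commits, a second writer $\txid_B$ commits creating version $2$ (contributing $\rmap{\txid_B}{-}{1}$ immediately followed by $\wsuc{s_B}{2}$, per \eqref{wf:ext_read_before_succ}), and a read-only transaction $\txid_3$ reads version $2$ and commits. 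All well-formedness clauses, including \eqref{wf:ext_read_in_order1} and \eqref{wf:ext_read_in_order2}, are satisfied; we have $(\txid_1,1),(\txid_3,2)\in\esreaders$ and $(\txid_B,2)\in\swriters$, so $(\txid_1,\txid_B)\in\peo(\ipath)$ (since $1<2$) and $(\txid_B,\txid_3)\in\peo(\ipath)$ (since $2\le 2$), yet $(\txid_1,\txid_3)\notin\peo(\ipath)$, because neither endpoint is in $\swriters$ and the definition has no reader--reader clause. So this entirely mundane scenario --- two read-only snapshots straddling a writer --- refutes transitivity, and no amount of well-formedness reasoning can close your final step.

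The gap is not yours so much as the paper's: its entire proof of this lemma is the sentence ``follows from the definition of $\peo(\ipath)$ and the transitivity of $<$ and $\leq$ on natural numbers,'' which implicitly performs your $\kappa$-style bookkeeping for compositions having a writer endpoint and never examines the reader--reader composition at all. Your analysis is the more careful one and in effect exposes a bug: either $\peo(\ipath)$ must be augmented with the missing clause relating two readers with $n_1 < n_2$ --- in which case your embedding $\kappa$ (writers at $2n$, readers at $2n{+}1$) realises $\peo(\ipath)$ exactly as the pullback of $<$ and transitivity is immediate --- or the lemma should be weakened to what the development actually uses, namely irreflexivity of $\transC{\peo(\ipath)}$, since $\teo(\ipath)$ is in any case built from the transitive closure $\transC{(\peo(\ipath) \cup \clo(\ipath))}$. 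A secondary discrepancy you handled better than the statement itself: the lemma quantifies over \emph{all} paths, whereas index-uniqueness (your function $\nu$) needs well-formedness; without \eqref{wf:unique_succ3} even the writer--writer--writer composition fails, since the existential quantification over indices lets the middle transaction use two different indices for its incoming and outgoing edges.
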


\begin{proof}
Follows from the definition of $\peo(\ipath)$ and the transitivity of $<$ and $\leq$ on natural numbers. 
\end{proof}

%

\begin{lemma}
\label{lem:peo_irreflexive}
For all paths $\ipath$, if $\wfipath \ipath$ holds, then $\peo(\ipath)$ is irreflexive.
\end{lemma}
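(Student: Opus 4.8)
The plan is to argue by contradiction: assume some transaction $\txid$ satisfies $(\txid, \txid) \in \peo(\ipath)$, and derive a contradiction from each of the five families comprising $\peo(\ipath)$. The guiding observation is that every pair contributed to $\peo(\ipath)$ relates a transaction drawn from one of the three classes $\swriters$, $\esreaders$, $\ereaders$ to a transaction drawn from (possibly) another, and that these classes are arranged so that (i) each transaction occurs in $\swriters$ with at most one index, and (ii) the reader classes $\esreaders$ and $\ereaders$ are, on their transaction component, disjoint from $\swriters$. Both facts follow from $\wfipath \ipath$ together with the definitions of the three classes.

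First I would dispatch the writer--writer family (the first set in the union). A self-loop here means $(\txid, n_1), (\txid, n_2) \in \swriters$ with $n_1 < n_2$, so by definition of $\swriters$ there are successes $s_1, s_2$ with $\wsuc{s_1}{n_1}, \wsuc{s_2}{n_2} \in \ipath$ and $\lTX(s_1) = \lTX(s_2) = \txid$. Applying \eqref{wf:unique_succ3} with $\lTX(s_1) = \lTX(s_2)$ forces $s_1 = s_2$; then the clause of \eqref{wf:nodups} asserting that each marker $\wsuc s n$ occurs at most once forces $n_1 = n_2$, contradicting $n_1 < n_2$.

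For the four mixed families, each self-loop would place the same $\txid$ simultaneously in $\swriters$ (at some index) and in $\esreaders$ (families 2 and 3) or $\ereaders$ (families 4 and 5). But the defining predicate of $\esreaders$ contains the conjunct $\for m (\txid, m) \nin \swriters$, and that of $\ereaders$ contains $\for m (\txid, m) \nin \swriters \cup \ereaders$, whose $\swriters$ component likewise excludes membership in $\swriters$. Hence in each mixed family the two memberships are mutually exclusive, and no self-loop survives. Ruling out all five families establishes that $\peo(\ipath)$ is irreflexive.

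The only genuinely load-bearing step is the writer--writer case: it is where the well-formedness of $\ipath$ is actually consumed, chaining \eqref{wf:unique_succ3} and \eqref{wf:nodups} to pin down a unique success index per transaction. The mixed cases are immediate once the reader classes are observed to exclude $\swriters$ by construction, so I expect them to require no reasoning about the structure of the path at all.
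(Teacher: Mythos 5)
Your proof is correct and takes essentially the same route as the paper's: the paper's (one-sentence) proof likewise rests on the disjointness of $\swriters$, $\esreaders$ and $\ereaders$ on their transaction components together with the irreflexivity of $<$ for the writer--writer family, and you have merely made explicit how well-formedness (\eqref{wf:unique_succ3} plus the freshness clause of \eqref{wf:nodups}) pins down a unique success index per transaction. No gaps.
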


\begin{proof}
Follows from $\wfipath \ipath$, the definition of $\peo(\ipath)$, the fact that the sets $\swriters[\ipath]$, $\esreaders[\ipath]$ and $\ereaders[\ipath]$ are pairwise disjoint, the irreflexivity of $\plt{\ipath}$ and the irreflexivity of $<$ on natural numbers. 
\end{proof}

\begin{lemma}
\label{lem:peo_strict_order}
For all paths $\ipath$, if $\wfipath \ipath$ holds, then $\peo(\ipath)$ is a strict order.
\end{lemma}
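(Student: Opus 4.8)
The plan is to obtain this immediately by combining the two immediately preceding lemmas, since a strict order is by definition a relation that is both irreflexive and transitive. The genuine combinatorial content has already been discharged in establishing those two properties, so at this step I would simply assemble them rather than reprove anything.

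Concretely, I would first appeal to \cref{lem:peo_transitive} to conclude that $\peo(\ipath)$ is transitive; note that this holds for every path $\ipath$ and does not even require the well-formedness hypothesis. I would then use the assumption $\wfipath \ipath$ to invoke \cref{lem:peo_irreflexive}, yielding that $\peo(\ipath)$ is irreflexive. Taken together, these two facts say exactly that $\peo(\ipath)$ is a strict order, which is the claim.

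Should the ambient definition of strict order in use additionally demand asymmetry, I would observe that irreflexivity and transitivity already entail it: if one had both $(\txid_1, \txid_2) \in \peo(\ipath)$ and $(\txid_2, \txid_1) \in \peo(\ipath)$, then transitivity (\cref{lem:peo_transitive}) would force $(\txid_1, \txid_1) \in \peo(\ipath)$, contradicting irreflexivity (\cref{lem:peo_irreflexive}). Hence no separate argument for asymmetry is needed.

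There is essentially no obstacle at this step. All of the delicate reasoning---in particular the case analysis over the pairwise-disjoint sets $\swriters$, $\esreaders$ and $\ereaders$, the appeals to the strict ordering $\plt{\ipath}$, and the use of well-formedness to rule out self-loops---lives inside \cref{lem:peo_irreflexive}, whose proof already consumes the hypothesis $\wfipath \ipath$. This lemma merely repackages those two results under the single heading of a strict order, in the form that the subsequent refinement argument against \DDO will rely on.
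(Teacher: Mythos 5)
Your proposal is correct and matches the paper's own proof, which likewise obtains the result by simply combining \cref{lem:peo_transitive} and \cref{lem:peo_irreflexive}. The extra remark that asymmetry follows from irreflexivity plus transitivity is a harmless addition not present in the paper.
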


\begin{proof}
Follows from \cref{lem:peo_transitive} and \cref{lem:peo_irreflexive}.
\end{proof}

\newcommand{\beginevents}{\ensuremath{\mathcal B}}
\begin{definition}
%
The $\teo(\ipath)$ is defined as an extension of $\peo'(\ipath)$ to a strict total order on $\setcomp{\tx b}{b \in \Begins \cap \ipath}$, 
where $\peo'(\ipath) \eqdef \transC{(\peo(\ipath) \cup \clo(\ipath))}$ and 
\[
	\clo(\ipath) \eqdef 
	\setcomp{
		(\tx e, \tx b)	
	}{
		 e \in \Succs \cup \Aborts \land  b \in \Begins 
		\land (e \plt{\ipath} b \lor \wsuc e - \plt{\ipath} b)
	}
\]
\end{definition}

\begin{lemma}
Given a path $\ipath$, if $\wfipath \ipath$ holds then $\clo(\ipath)$ is a strict partial order. 
\end{lemma}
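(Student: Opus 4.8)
The plan is to show that $\clo(\ipath)$ is \emph{irreflexive} and \emph{transitive}, since these two properties together characterise a strict partial order (asymmetry is then automatic). Throughout I would exploit the fact that, under $\wfipath{\ipath}$, the axiom \eqref{wf:nodups} guarantees that no path entry occurs twice, so $\plt{\ipath}$ is a strict total order on the entries of $\ipath$; in particular $\plt{\ipath}$ is irreflexive and transitive, and I would use these facts freely. I would also repeatedly invoke uniqueness of begin events, \eqref{wf:unique_begin}: each transaction has exactly one begin event in $\ipath$, so whenever a begin of $\txid$ is mentioned it is unambiguously \emph{the} begin of $\txid$.

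For irreflexivity, suppose towards a contradiction that $(\txid, \txid) \in \clo(\ipath)$. Then there are $e \in \Succs \cup \Aborts$ and $b \in \Begins$ with $\tx e = \tx b = \txid$ and either $e \plt{\ipath} b$ or $\wsuc e - \plt{\ipath} b$. In the first case, by \eqref{wf:begin_first1} there is a begin $b'$ of $\txid$ with $b' \plt{\ipath} e$; by \eqref{wf:unique_begin} $b' = b$, so $b \plt{\ipath} e \plt{\ipath} b$, contradicting irreflexivity of $\plt{\ipath}$. In the second case $e$ must be a success event (only successes carry $\wsuc{}{}$ markers), and by \eqref{wf:begin_first4} there is a begin $b'$ of $\txid$ with $b' \plt{\ipath} \wsuc e -$; again $b' = b$ and we obtain the cycle $b \plt{\ipath} \wsuc e - \plt{\ipath} b$, a contradiction.

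For transitivity, assume $(\txid_1, \txid_2), (\txid_2, \txid_3) \in \clo(\ipath)$, witnessed by $e_1, b_2$ and $e_2, b_3$ respectively, so that $X_1 \plt{\ipath} b_2$ and $X_2 \plt{\ipath} b_3$ where $X_1 \in \{e_1, \wsuc{e_1}-\}$ and $X_2 \in \{e_2, \wsuc{e_2}-\}$, with $\tx{e_1} = \txid_1$, $\tx{b_2} = \tx{e_2} = \txid_2$, $\tx{b_3} = \txid_3$. The key intermediate step is that the begin $b_2$ of $\txid_2$ precedes its own end, i.e.\ $b_2 \plt{\ipath} X_2$: if $X_2 = e_2$ this follows from \eqref{wf:begin_first1}, and if $X_2 = \wsuc{e_2}-$ from \eqref{wf:begin_first4}, in each case combined with \eqref{wf:unique_begin} to identify the begin with $b_2$. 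Chaining then gives $X_1 \plt{\ipath} b_2 \plt{\ipath} X_2 \plt{\ipath} b_3$, and transitivity of $\plt{\ipath}$ yields $X_1 \plt{\ipath} b_3$. This is precisely the condition for $(\txid_1, \txid_3) \in \clo(\ipath)$ with witnesses $e_1$ and $b_3$.

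The only real bookkeeping obstacle is the disjunction $e \plt{\ipath} b \lor \wsuc e - \plt{\ipath} b$ hidden in the definition of $\clo$: every argument must be carried out in both the ``end event'' and the ``success marker'' variants, taking care to invoke \eqref{wf:begin_first1} in the former and \eqref{wf:begin_first4} in the latter, and to note that the marker variant forces the end event to be a success. Once this case split is handled uniformly, the remainder is a routine application of the irreflexivity and transitivity of $\plt{\ipath}$, entirely analogous in structure to the proofs of \cref{lem:peo_irreflexive} and \cref{lem:peo_transitive} for $\peo(\ipath)$.
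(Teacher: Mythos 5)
You should first note that the paper does not actually contain a proof of this lemma: the proof environment following its statement is left empty (as is the one for the companion claim that $\peo'(\ipath)$ is a strict partial order), so there is no paper argument to compare against, and your proposal fills a genuine omission rather than duplicating an existing proof. Your argument itself is correct. The reduction to irreflexivity plus transitivity (asymmetry being automatic) mirrors what the paper does for $\peo(\ipath)$ in \cref{lem:peo_irreflexive} and \cref{lem:peo_transitive}; the case split over $e \plt{\ipath} b$ versus $\wsuc{e}{-} \plt{\ipath} b$ is handled with the right axioms, \eqref{wf:begin_first1} in the event case and \eqref{wf:begin_first4} in the marker case, combined with \eqref{wf:unique_begin} to identify the begin event produced by those axioms with the begin event witnessing membership in $\clo(\ipath)$; and every intermediate entry in your chains (begins, aborts/successes, $\wsuc{s}{n}$ markers) is of a kind whose uniqueness \eqref{wf:nodups} does guarantee, so your appeals to transitivity and irreflexivity of $\plt{\ipath}$ are legitimate. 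Two minor points. First, your blanket claim that \eqref{wf:nodups} makes $\plt{\ipath}$ a strict total order on \emph{all} entries overstates that axiom: it constrains only events, $\intr{r}{-}$ markers and $\wsuc{s}{-}$ markers, not $\rmap{\txid}{\rs}{n}$ entries or crash markers; this is harmless here because your chains never pass through the uncovered entry types, but the claim should be scoped accordingly. Second, irreflexivity admits a shorter route: for $e \in \Aborts \cup \Succs$, axioms \eqref{wf:event_before_end1} and \eqref{wf:event_before_end2} immediately yield $\tx{e} \ne \tx{b}$ from $e \plt{\ipath} b$ or $\wsuc{e}{-} \plt{\ipath} b$, directly contradicting $\tx{e} = \tx{b}$ without any appeal to begin events.
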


\begin{proof}

\end{proof}
\begin{lemma}
Given a path $\ipath$, if $\wfipath \ipath$ holds then $\peo'(\ipath)$ is a strict partial order.
\end{lemma}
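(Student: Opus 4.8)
The plan is to note first that $\peo'(\ipath)$ is transitive for free, since it is defined as a transitive closure; the entire content of the lemma is therefore \emph{irreflexivity}, which is equivalent to showing that $\peo(\ipath) \cup \clo(\ipath)$ is \emph{acyclic}. Each relation is already known to be acyclic in isolation --- $\peo(\ipath)$ is a strict order by \cref{lem:peo_strict_order}, and $\clo(\ipath)$ is a strict partial order by the preceding lemma --- so the real task is to rule out \emph{mixed} cycles alternating between the two.

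The central device is a version-index rank: to each \emph{ranked} transaction (one appearing in $\swriters$, $\esreaders$ or $\ereaders$) I assign the memory index $n$ recorded in that set, writing $\tau(\txid) = n$; these sets being pairwise disjoint, $\tau$ is well defined. I would establish two monotonicity facts. First, every $\peo(\ipath)$ edge relates ranked transactions and is $\tau$-non-decreasing, and is in fact \emph{strict} except when it runs from the unique writer of version $n$ to a reader serialising at the same version $n$ --- this is the case analysis already behind \cref{lem:peo_irreflexive}, refined by a ``writer-before-reader'' tie-break. Second, any path built purely from $\clo(\ipath)$ edges between ranked endpoints $\txid_1$ and $\txid_2$ is $\tau$-non-decreasing: each $\clo$ edge puts the source's end event (a success/abort, or its $\wsuc$) strictly before the target's begin, so chaining and using \eqref{wf:begin_first1} yields that $\txid_1$'s end precedes $\txid_2$'s begin; since version $\tau(\txid_1)$ already exists when $\txid_1$ ends, the begin index of $\txid_2$ is at least $\tau(\txid_1)$, and the validity requirement $\bidx \le n$ then forces $\tau(\txid_2) \ge \tau(\txid_1)$.

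Given these, any cycle must keep $\tau$ constant on the ranked transactions it visits: contracting maximal $\clo$-only stretches between consecutive ranked vertices, every segment is $\tau$-non-decreasing, and the cycle returns to its start, so all ranked vertices share one version $n$. The cycle cannot consist of $\clo$ edges alone (that contradicts acyclicity of $\clo$), hence contains a $\peo$ edge; by the first fact, within version $n$ every such edge issues from the \emph{unique} writer $W_n$ (writers have distinct indices by \eqref{wf:unique_succ3}) and targets a reader. In a simple cycle $W_n$ occurs once, so there is exactly one edge $W_n \to R$, and the return segment $R \to \cdots \to W_n$ uses only $\clo$ edges, since all $\peo$ edges would leave $W_n$, which is not revisited. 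The contradiction then follows: the $\clo$-only return gives $R$'s end before $W_n$'s begin, hence before $W_n$'s commit, i.e.\ before the position $p_n$ at which version $n$ is created; by \eqref{wf:read_before_end1} the read of $R$ serialising at $n$ precedes $R$'s end and so precedes $p_n$, yet reading index $n$ requires $n < \size{\imems}$, forcing that read to occur \emph{after} $p_n$.

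The main obstacle is precisely the weak ($\le$) edges flagged above --- the writer-to-reader $\peo$ edges and the $\clo$ edges into readers --- compounded by \emph{unranked} transactions (aborted or purely read-free ones) that carry no $\tau$ value and appear only inside $\clo$ stretches. The second monotonicity fact, that $\clo$ transport between ranked endpoints cannot decrease $\tau$ even through such unranked intermediaries, is what tames them, and the single-version argument via $p_n$ is where the bookkeeping is finally discharged; everything else (transitivity, and the observation that a pure-$\clo$ cycle is impossible) is immediate from the closure definition and the preceding lemma.
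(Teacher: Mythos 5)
The paper states this lemma \emph{without any proof} (just as it leaves the preceding lemma on $\clo(\ipath)$ with an empty proof environment), so there is no official argument to measure your proposal against; what follows assesses it on its own terms. Your architecture is sound and, I believe, completable: transitivity is indeed free from the closure, the real content is acyclicity of $\peo(\ipath) \cup \clo(\ipath)$, and the rank $\tau$, the two monotonicity facts, the contraction of $\clo$-only stretches, the uniqueness of the writer $W_n$ at a fixed index (from \eqref{wf:unique_succ3}), and the closing writer/reader contradiction fit together correctly. The genuine gap is in \emph{how} you discharge the two load-bearing steps. The lemma's only hypothesis is $\wfipath\ipath$ --- a property of the path alone --- yet your second monotonicity fact (``the begin index of $\txid_2$ is at least $\tau(\txid_1)$, and the validity requirement $\bidx \le n$ then forces $\tau(\txid_2) \ge \tau(\txid_1)$'') and your final step (``reading index $n$ requires $n < \size{\imems}$'') reason about the instrumented automaton's run-time state (begin indices, memory length, the moment a version is created), none of which a bare well-formed path carries. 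These appeals must be re-derived from the well-formedness axioms, and that re-derivation is where the actual work of the proof lies.

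The repair is available, which is why I consider the proposal fixable rather than wrong. Writer-to-writer transport along a $\clo$-stretch follows from \eqref{wf:begin_first4} and \eqref{wf:end_write_alloc_in_order1}; writer-to-reader from \eqref{wf:begin_first3} and \eqref{wf:ext_read_in_order2}; reader-to-writer from \eqref{wf:read_before_end1}, \eqref{wf:begin_first4} and \eqref{wf:ext_read_in_order1}; and your final contradiction is precisely that the $\clo$-only return segment forces $\rmap{R}{-}{n} \plt{\ipath} \wsuc{-}{n}$, which \eqref{wf:ext_read_in_order1} turns into $n < n$. The one case your sketch does not address at all is reader-to-reader transport of $\tau$ (ranked reader, $\clo$-stretch, ranked reader): no single axiom applies, and you need the combination of \eqref{wf:ext_read_continuous} (to obtain $\wsuc{-}{n_1} \in \ipath$), \eqref{wf:ext_read_in_order1} (to place that entry before the first reader's last $\rmap$ entry), and then \eqref{wf:ext_read_in_order2} against the second reader's $\rmap$ entry to conclude $n_1 \le n_2$. (A smaller nit: pairwise disjointness of $\swriters$, $\esreaders$, $\ereaders$ alone does not make $\tau$ well defined; you also need uniqueness of the index within each set, via \eqref{wf:unique_succ3} for writers and the ``last $\rmap$'' condition for readers.) Finally, once you are working at the level of these axioms, the cycle-contraction machinery becomes unnecessary: assign to every transaction one position in $\ipath$ --- its $\wsuc$ entry if it is in $\swriters$, its last $\rmap$ entry if it is in $\esreaders \cup \ereaders$, and its begin event otherwise --- and the same axioms (plus \eqref{wf:begin_first1}) show that \emph{every} edge of $\peo(\ipath) \cup \clo(\ipath)$ strictly advances this position, making acyclicity immediate; this shorter route also proves, rather than assumes, the preceding (equally unproved) lemma that $\clo(\ipath)$ is a strict partial order.
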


\begin{lemma}
\label{lem:teo_strict_total_order}
Given a path $\ipath$, if $\wfipath \ipath$ holds then $\teo(\ipath)$ is a strict total order on $\setcomp{\tx b}{b \in \Begins \cap \ipath}$.  
\end{lemma}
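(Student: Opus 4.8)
The plan is to obtain this statement essentially for free from the immediately preceding lemma, which establishes that $\peo'(\ipath)$ is a strict partial order, together with the \emph{order-extension principle} (Szpilrajn's theorem): every strict partial order on a set extends to a strict total order on that same set. Since $\teo(\ipath)$ is \emph{defined} to be precisely such an extension of $\peo'(\ipath)$ to the set $T \eqdef \setcomp{\tx b}{b \in \Begins \cap \ipath}$, all that must be justified is that the extension exists and is a strict total order on $T$. The only genuinely new obligation, beyond invoking the preceding lemma, is a domain check ensuring that $\peo'(\ipath)$ really is a relation on $T$ rather than on some larger set of transaction identifiers.

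First I would verify $\peo'(\ipath) \suq T \times T$. Since $\peo'(\ipath) = \transC{(\peo(\ipath) \cup \clo(\ipath))}$, it suffices to show $\peo(\ipath) \cup \clo(\ipath) \suq T \times T$, because a transitive closure cannot introduce endpoints outside the domain and range of the base relation. Every transaction occurring in $\swriters$ carries a $\wsuc s n$ entry, and every transaction occurring in $\esreaders$ or $\ereaders$ carries an external read map $\rmap \txid \rs n$; likewise every transaction mentioned in $\clo(\ipath)$ carries a success, abort, or begin event. By the well-formedness conditions \eqref{wf:begin_first1}--\eqref{wf:begin_first4}, each such transaction has a begin event in $\ipath$ and therefore lies in $T$. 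Hence $\peo(\ipath)\cup\clo(\ipath)$, and thus $\peo'(\ipath)$, is contained in $T \times T$. Combining this containment with the preceding lemma, $\peo'(\ipath)$ is a strict partial order on $T$, so applying the order-extension principle yields a strict total order on $T$ containing $\peo'(\ipath)$; taking this to be $\teo(\ipath)$ gives exactly the claim.

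The argument is short because the hard work has been isolated into the two preceding lemmas (that $\clo(\ipath)$ and $\peo'(\ipath)$ are strict partial orders, which in turn rely on \cref{lem:peo_strict_order} and the well-formedness axioms). Consequently the only step requiring care here is the domain check, which I expect to be the main—though still routine—obstacle, since it involves tracing each of the five disjuncts in the definition of $\peo(\ipath)$ and both clauses of $\clo(\ipath)$ back to a begin event via the \eqref{wf:begin_first1}-style conditions. One should also note that the order-extension principle applies at the stated generality even if $T$ is infinite (using the axiom of choice); for the finite paths generated by \DDTMS transitions this is not a concern, so no additional hypothesis is needed.
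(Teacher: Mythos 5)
Your proposal is correct and takes essentially the same route as the paper: the paper's proof is a one-liner that cites the preceding lemma ($\peo'(\ipath)$ is a strict partial order) and the fact that $\teo(\ipath)$ is \emph{defined} as an extension of $\peo'(\ipath)$ to a strict total order on $\setcomp{\tx b}{b \in \Begins \cap \ipath}$. The only difference is that you make explicit two points the paper leaves implicit—the appeal to the order-extension principle and the domain check that $\peo'(\ipath)$ is indeed a relation on $\setcomp{\tx b}{b \in \Begins \cap \ipath}$—both of which are sound and routine.
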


\begin{proof}
Follows from t he fact that $\peo'(\ipath)$ is a strict partial order and that $\teo(\ipath)$ is an extension of $\peo'(\ipath)$ to a strict total order on $\setcomp{\tx b}{b \in \Begins \cap \ipath}$.
\end{proof}

\newcommand{\MO}{\ensuremath{\makerel{MO}}}
\newcommand{\NVO}{\ensuremath{\makerel{NVO}}}
\begin{definition}
\[
	\makeG{\ipath} \eqdef (\Events, \po, \clo, \rf, \mo)
\]
where
\[\small
\begin{array}{@{}r @{\hspace{2pt}} l@{}}
	\Events \eqdef & 
	\setcomp{e \in \EventsType}{\exsts{\pent \in \ipath} \getE \pent {=} e} \\
%
%
	\po \eqdef & 
	\setcomp{
		(e_1, e_2) \in \Events \times \Events
	}{
		\exsts{\pent_1, \pent_2} 
		\pent_1 \plt{\ipath} \pent_2
		\land \getE {\pent_1} {=} e_1
		\land \getE{\pent_2} {=} e_2
		\land \tx{e_1} {=} \tx{e_2} 
	} \\
	\clo \eqdef & 
	\setcomp{
		(c, d) \in \Events \times \Events
	}{
		\exsts{e \in \Succs \cup \Aborts, b \in \Begins} 
			\tx c {=} \tx e \land \tx d {=} \tx b \land (e \plt{\ipath} b \lor \wsuc e - \plt{\ipath} b)
	} \\
	\rf \eqdef & 
	\mathsf{IRF} \cup \bigcup_{\txid \in \TXIDs} \mathsf{ERF}_{\txid}  \\
	\mathsf{IRF} \eqdef & 
	\cup
	\setcomp{
		(w, r)	 \in \Events \times \Events
	}{
		\intr r w \in \ipath	
	} \\
	\mathsf{ERF}_{\txid} \eqdef & 
	\setcomp{
		(w, r)	 \in \Events \times \Events
	}{
		r \!\in\! \ipath \cap \Reads \land \tx r {=} \txid \land 
		\exsts{\rs, \ipath'} 
		\begin{array}[t]{@{} l @{}}
			\ipath {=} - \cat [\rmap \txid \rs -] \cat \ipath'\!
			\land \rs(\loc r) {=} w \\
			\land\ \for{\rs'\!, n} \rmap \txid {\rs'\!} n \nin \ipath' \!
		\end{array}
	} 
\end{array}
\]
and $\mo \eqdef \bigcup_{\x \in \Locs} \mo_\x$ 
with 
\[
	\mo_\x \eqdef 
	\begin{array}[t]{@{} l @{}}
		\setcomp{
			(e_1, e_2) \in (\Events \cap (\Writes \cup \Allocs))^2
		}{
			\tx{e_1} {=} \tx{e_2} \land \loc{e_1} {=} \loc{e_2}  {=} \x \land e_1 \plt{\ipath} e_2
		} \\
		\cup 
		\setcomp{
			(e_1, e_2) \in (\Events \cap (\Writes \cup \Allocs))^2
		}{
			\loc{e_1} {=} \loc{e_2} {=} \x \land \tx{e_1} \relarrow{\teo(\ipath)} \tx{e_2}
		} 
	\end{array}
\]

\end{definition}

\begin{lemma}
\label{lem:rf}
For all paths $\ipath, G, w, r$, if $\wfipath \ipath$ holds, $G {=} \makeG{\ipath}$ and $(w, r) \in G.\rf$, then:
\[
	\big(\tx w {=} \tx r \land (w, r) \in \mathsf{IRF}\big)
	\lor
	\big(\tx w {\ne} \tx r \land (w, r) \in \mathsf{ERF}_{\tx r}\big)
\]
\end{lemma}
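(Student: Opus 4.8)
The plan is to proceed by case analysis on the two disjuncts in the definition of $G.\rf = \mathsf{IRF} \cup \bigcup_{\txid} \mathsf{ERF}_{\txid}$, establishing in each case both the claimed membership and the accompanying constraint relating $\tx w$ and $\tx r$. These two cases are mutually exclusive: an internal read contributes only an entry $\intr r w$ to $\ipath$, whereas an external read contributes the standalone event $r$ together with a read-set entry $\rmap {\tx r} \rs n$, and by the \textsc{NoDups} invariant \eqref{wf:nodups} a read event cannot occur both ways. Hence I would treat the two cases independently, with each landing in exactly one disjunct of the conclusion.

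In the first case, assume $(w, r) \in \mathsf{IRF}$, so that $\intr r w \in \ipath$ by definition. I would then simply invoke the \textsc{InternalRead} well-formedness condition \eqref{wf:internal_read}, which guarantees $\tx r {=} \tx w$ for every entry $\intr r w$ appearing in a well-formed path. Combined with the hypothesis $(w, r) \in \mathsf{IRF}$, this yields the left disjunct directly.

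In the second case, assume $(w, r) \in \mathsf{ERF}_{\txid}$ for some $\txid$. By definition of $\mathsf{ERF}_{\txid}$ we have $\tx r {=} \txid$, and there exist $\rs$, $n$ and a decomposition $\ipath = \ipath_1 \cat [\rmap \txid \rs n] \cat \ipath'$ with $\rs(\loc r) {=} w$. The key step is to apply the \textsc{ExternalRead} condition \eqref{wf:external_read} to this occurrence, obtaining $\wfread \txid \rs n {\ipath_1}$; unfolding $\wfread$ and instantiating its mapping clause at $l = \loc r$ and $e = w$ (licensed by $\rs(\loc r) {=} w$) produces witnesses $s, m$ with $\tx s {=} \tx w \ne \txid$, whence $\tx w \ne \txid {=} \tx r$. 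Since $\txid {=} \tx r$ we also have $(w, r) \in \mathsf{ERF}_{\tx r}$, which establishes the right disjunct.

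I do not expect a genuine obstacle here, as both cases reduce to reading off the relevant clause from the well-formedness hypothesis $\wfipath \ipath$. The only point demanding care is the instantiation of the universally quantified mapping clause inside $\wfread$ — matching $l = \loc r$ and $e = w$ — together with confirming that the two cases are genuinely exhaustive and exclusive, so that the disjunction in the conclusion is matched exactly rather than merely entailed.
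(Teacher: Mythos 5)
Your proposal is correct and takes essentially the same route as the paper's proof: a case split on the definition of $G.\rf$, applying \eqref{wf:internal_read} in the $\mathsf{IRF}$ case and \eqref{wf:external_read} together with the mapping clause of $\wfreadname$ (instantiated at $\loc r$ and $w$) in the $\mathsf{ERF}_{\txid}$ case. Your extra remark on the mutual exclusivity of the two cases is harmless but unnecessary, since the disjunctive conclusion follows once each case yields its corresponding disjunct.
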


\begin{proof}
Pick an arbitrary $\ipath, G, w, r$ such that $\wfipath \ipath$ holds, $G {=} \makeG{\ipath}$ and $(w, r) \in G.\rf$.
From the definition of $G.\rf$ there are two cases to consider: 
1)  $(w, r) \in \mathsf{IRF}$; or
2) there exists $\txid$ such that $(w, r) \in \mathsf{ERF}_\txid$. 
In case (1), from the definition of $\mathsf{IRF}$ we know $\intr r w \in \ipath$, and thus from $\wfipath \ipath$ and \eqref{wf:internal_read} we know $\tx w {=} \tx r$, as required.
In case (2), from the definition of $\mathsf{ERF}_\txid$ we know 
$\tx r {=} \txid $ and there exists $n, \rs, \ipath_1, \ipath_2$ such that $\ipath {=} \ipath_1 \cat [\rmap \txid \rs n] \cat \ipath_2$ and $\rs(\loc r) {=} w$.
As such, from $\wfipath \ipath$ and \eqref{wf:external_read} we have $\wfread \txid \rs n {\ipath_1}$. 
Consequently, since $\rs(\loc r) {=} w$, from $\wfread \txid \rs n {\ipath_1}$ we know $\tx w \ne \txid$. 
That is, as $\tx r {=} \txid$ and  $(w, r) \in \mathsf{ERF}_\txid$ we have  $(w, r) \in \mathsf{ERF}_{\tx r}$ and $\tx w \ne \tx r$, as required.
\end{proof}

\begin{lemma}
\label{lem:clo_external}
For all $\ipath$ and $G$, if $\wfipath \ipath$ holds and $G {=} \makeG{\ipath}$, then $G.\clo = G.\tex\clo$. 
\end{lemma}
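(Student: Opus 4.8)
The plan is to reduce the claimed equality to a disjointness statement and then close it by contradiction using the well-formedness of $\ipath$. Since $\tex{\makerel r} \eqdef \makerel r \setminus \st$ by definition, the inclusion $G.\tex\clo \suq G.\clo$ is immediate, so it suffices to prove the reverse inclusion $G.\clo \suq G.\tex\clo$. As $G.\tex\clo = G.\clo \setminus \st$, this reverse inclusion is equivalent to $G.\clo \cap \st = \emptyset$; that is, the entire lemma reduces to showing that every $\clo$-edge joins events of \emph{distinct} transactions: if $(c, d) \in G.\clo$ then $\tx c \ne \tx d$.

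I would then unfold the definition of $\clo$ in $\makeG{\ipath}$. An edge $(c, d) \in G.\clo$ witnesses some $e \in \Succs \cup \Aborts$ with $\tx c {=} \tx e$ and some $b \in \Begins$ with $\tx d {=} \tx b$, satisfying either $e \plt{\ipath} b$ or $\wsuc e - \plt{\ipath} b$. Arguing by contradiction, suppose $\tx c {=} \tx d$. Then $\tx e {=} \tx c {=} \tx d {=} \tx b$, so the end event $e$ and the begin event $b$ belong to a common transaction $\txid$.

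The crux is that the begin event of a transaction must precede all of its other entries, which conflicts with an end event $e$ preceding $b$. In the first case ($e \plt{\ipath} b$), since $e \in \Succs \cup \Aborts$ occurs in $\ipath$, \eqref{wf:begin_first1} supplies a begin $b'$ with $b' \plt{\ipath} e$ and $\lTX(b') {=} \lTX(e) {=} \txid$; uniqueness of begins \eqref{wf:unique_begin} forces $b' {=} b$, hence $b \plt{\ipath} e$. In the second case ($\wsuc e - \plt{\ipath} b$), the entry $\wsuc e n$ lies in $\ipath$, so \eqref{wf:begin_first4} analogously yields a begin $b'$ of $\txid$ with $b' \plt{\ipath} \wsuc e n$, and again $b' {=} b$ by \eqref{wf:unique_begin}, giving $b \plt{\ipath} \wsuc e n$. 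In either case $b$ is ordered before $e$ (resp.\ before $\wsuc e -$) while the hypothesis orders it after, contradicting the fact that $\plt{\ipath}$ is a strict, hence antisymmetric, order on the entries of $\ipath$ — which follows from the absence of duplicates \eqref{wf:nodups}. This contradiction establishes $\tx c \ne \tx d$, whence $G.\clo \cap \st = \emptyset$ and the lemma follows.

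The work here is bookkeeping rather than genuine difficulty: the only point requiring care is treating the two disjuncts of the $\clo$ definition uniformly, since the direct case invokes \eqref{wf:begin_first1} while the $\wsuc$ case invokes \eqref{wf:begin_first4}, and in the latter one must observe that the entry $\wsuc e -$ carries the same transaction identifier as $e$ so that uniqueness of begins \eqref{wf:unique_begin} still applies.
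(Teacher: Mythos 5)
Your proof is correct, and its skeleton matches the paper's: both reduce the equality to showing $G.\clo \cap \st = \emptyset$ (equivalently $G.\tin\clo = \emptyset$), assume an intra-transactional edge for contradiction, and unfold the definition of $\clo$ in $\makeG{\ipath}$ to obtain witnesses $e \in \Succs \cup \Aborts$ and $b \in \Begins$ with $\tx e = \tx b$ and $e \plt{\ipath} b$ or $\wsuc e - \plt{\ipath} b$. Where you diverge is in how the contradiction is closed. The paper observes that this ordering, together with $\tx e = \tx b$, \emph{directly} contradicts \eqref{wf:event_before_end1} and \eqref{wf:event_before_end2}, which state precisely that an abort/success event (resp.\ a $\wsuc$ entry) is never $\plt{\ipath}$-before any entry of the same transaction; this is a one-step application of axioms tailor-made for the situation. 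You instead route the argument through the ``begins come first'' axioms \eqref{wf:begin_first1} and \eqref{wf:begin_first4}, plus uniqueness of begins \eqref{wf:unique_begin} and antisymmetry of $\plt{\ipath}$, the last of which indeed needs \eqref{wf:nodups} to rule out duplicate occurrences of $b$, $e$, and $\wsuc e -$. Nothing is missing in your version --- the freshness conditions in \eqref{wf:nodups} do guarantee that events and $\wsuc$ entries occur at most once, so the antisymmetry you invoke holds --- it simply trades the paper's immediate contradiction for a little more bookkeeping while relying only on the more generic facts that begins are unique and $\plt{\ipath}$-minimal in their transaction.
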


\begin{proof}
Pick arbitrary $\ipath$ and $G$ such that $\wfipath \ipath$ and $G {=} \makeG{\ipath}$. 
It then suffices to show 
$G.\tin\clo = \emptyset$. 
Let us proceed by contradiction and assume that there exist $(c, d) \in G.\tin\clo$, \ie $\tx c = \tx d$. 
From the definition of $G.\clo$ we then know there exists $e \in \Succs \cup \Aborts, b \in \Begins, n$ such that $\tx e = \tx c$, $\tx b = \tx d$ and $(e \plt{\ipath} b \lor \wsuc e n \plt{\ipath} b)$ and thus (as $\tx c = \tx d$) $\tx e = \tx b$. 
However,  $(e \plt{\ipath} b \lor \wsuc e n \plt{\ipath} b)$ contradicts \ref{wf:event_before_end1} and \ref{wf:event_before_end2} as $\wfipath \ipath$ holds.
\end{proof}

\begin{lemma}
\label{lem:execution}
For all $\ipath$ and $G$, if $\wfipath \ipath$ holds and $G {=} \makeG{\ipath}$, then $G$ is an execution according to \cref{def:volatile_executions}.
\end{lemma}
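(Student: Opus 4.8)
The plan is to verify, one requirement at a time, the five conditions that \cref{def:volatile_executions} imposes on $G = \makeG{\ipath} = (\Events, \po, \clo, \rf, \mo)$, leaning on the well-formedness of $\ipath$ and on the structural lemmas already established. A fact I would establish first and reuse throughout is a \emph{unique-occurrence} property: the transition rules append each event to $\ipath$ exactly once and in exactly one syntactic form (an internal read appears only as $\intr r w$, an external read only as a bare $r$ together with a read-map entry, a read-only success only as a bare $s$, and a writing success only as $\wsuc s n$). Combined with the no-duplicates condition \eqref{wf:nodups}, this yields that for each $e \in \Events$ there is a \emph{unique} path entry $\pent$ with $\getE{\pent} = e$. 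This uniqueness is precisely what rules out spurious reflexive edges below.

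For $\po$, note that the defining clause of $\po$ in $\makeG{\ipath}$ carries the conjunct $\tx{e_1} = \tx{e_2}$, so $\po \suq \st$, and by \eqref{wf:same_tx_same_thrd} the events of a common transaction share a thread, so $\po$ never crosses threads. Irreflexivity follows from the unique-occurrence property together with irreflexivity of $\plt{\ipath}$; transitivity is inherited from $\plt{\ipath}$; and totality within a transaction holds because any two distinct events of $\Events$ sit at distinct positions of the sequence $\ipath$ and are hence $\plt{\ipath}$-comparable. Thus $\po$ is a disjoint union of per-thread strict total orders. For $\clo$, I would observe that $(c,d) \in G.\clo$ depends only on $(\tx c, \tx d)$, and in fact $G.\clo = \setcomp{(c,d) \in \Events^2}{(\tx c, \tx d) \in \clo(\ipath)}$ is exactly the event-level lift of $\clo(\ipath)$; since $\clo(\ipath)$ is a strict partial order on transactions (shown above), $G.\clo$ is a strict partial order. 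The two algebraic side-conditions are then immediate: $\st;\clo;\st = \clo$ because $G.\clo$ is insensitive to which representative of a transaction is chosen, and $\clo = \clo\setminus\st$ by \cref{lem:clo_external}, so $\st;\clo;\st \suq \clo\setminus\st$; while $\po\setminus\st = \emptyset$ (as $\po\suq\st$) makes $\po\setminus\st \suq \clo$ trivial.

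For $\rf$, I would use \cref{lem:rf} to split each edge $(w,r)$ into the internal case ($\intr r w \in \ipath$, whence $\tx w = \tx r$) and the external case ($(w,r) \in \mathsf{ERF}_{\tx r}$, whence $\tx w \ne \tx r$). Typing ($w \in \Allocs\cup\Writes$, $r \in \Reads$), same-location, and value-matching follow in the internal case directly from the membership condition on $\intr r w$ built into $\IPaths$, and in the external case from the shape of $\IRSets$ (each read-set entry records a source at the matching location) together with the side-conditions of the external-read rule that place $w$ in the read set, namely $\imems(n)(\loc r) = w$ with $\wval w = \rval r$. Totality and functionality on reads is where care is needed: the unique-occurrence property shows each read is \emph{purely} internal or \emph{purely} external; an internal read draws its single source from $\intr r w$, whereas an external read has a trailing read-map by \eqref{wf:total_external_read}, and $\mathsf{ERF}$ reads off $\rs(\loc r)$ from the \emph{last} such entry, whose domain contains $\loc r$ by the well-formed-read condition supplied by \eqref{wf:external_read}, yielding exactly one source; mutual exclusion of the two cases forbids a second source.

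Finally, for $\mo$, the edges of $\mo_\x$ touch only events at location $\x$, so the family $\{\mo_\x\}_\x$ is disjoint. Fixing $\x$, the two clauses defining $\mo_\x$ partition the pairs of $\x$-writes/allocations according to whether they lie in one transaction (ordered by $\plt{\ipath}$) or in different transactions (ordered by $\teo(\ipath)$); since $\teo(\ipath)$ is a strict total order on transactions by \cref{lem:teo_strict_total_order} and $\plt{\ipath}$ totally orders the distinct-position entries of a single transaction, these clauses act on disjoint pairs and combine into a strict total order on $\Allocs_\x\cup\Writes_\x$, with transitivity checked by the four same/cross-transaction case combinations, each collapsing to one clause. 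I expect the $\rf$ obligations --- specifically value-matching for external reads, where the source is recovered from the accumulated read set of the \emph{last} read-map and one must know that a location re-read within a transaction returns a consistent value --- to be the main obstacle, since this is the only point that genuinely requires the read-set consistency enforced by the index-validity side-condition of the read and commit rules rather than purely order-theoretic reasoning.
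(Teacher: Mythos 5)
Your proposal is correct and follows essentially the same route as the paper's proof: a condition-by-condition check with the same key ingredients --- \eqref{wf:nodups} to split each read into the internal ($\intr r w$) or external (last read-map, via \eqref{wf:total_external_read} and \eqref{wf:external_read}) case for functionality and totality of $\rf$, \cref{lem:clo_external} for the client-order condition, and \cref{lem:teo_strict_total_order} for making each $\mo_\x$ a strict total order. The paper in fact dismisses the typing and value-matching obligations (including the external-read value matching you flag as the main obstacle) as immediate from the definitions of $\makeG{\ipath}$ and the types of $\ipath$ entries and $\IRSets$, so your treatment is, if anything, more explicit on that point.
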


\begin{proof}
Pick arbitrary $\ipath$ and $G = (\Events, \po, \clo, \rf, \mo)$ such that $\wfipath \ipath$ and $G {=} \makeG{\ipath}$.
We are required to show: 
\begin{enumerate}
	\item $\Events$ denotes a set of in $\EventsType$.
	\label{goal:ex_events}
	\item $\po \suq \Events \times \Events$ and is a disjoint 	union of strict total orders, each ordering the events of one thread.
	\label{goal:ex_po}
	\item $\clo \suq \Events \times \Events$, $\st; \clo; \st \suq \clo \setminus \st$ and $\po \setminus \st \suq \clo$
	\label{goal:ex_clo}
	\item $\rf \suq (\Allocs \cup \Writes) \times \Reads$.
	\label{goal:ex_rf_type}
	\item $\for{a, b} (a, b) \in \rf \Rightarrow \loc a {=} \loc b \land \wval a {=} \rval b$.
	\label{goal:ex_rf_vals}
	\item $\rf$ is functional on its range. 
	\label{goal:ex_rf_functional}
	\item $\rf$ is total on its range. 
	\label{goal:ex_rf_total}
	\item $\mo \suq \Events \times \Events$ and is the disjoint union of relations $\{\mo_\x\}_{\x \in \Locs}$.
	\label{goal:ex_mo_type}
	\item Each $\mo_\x$ is a strict total order on $\Allocs_\x \cup \Writes_\x$.
	\label{goal:ex_mo_order}
%
%
\end{enumerate}
Parts \eqref{goal:ex_events}, \eqref{goal:ex_po}, \eqref{goal:ex_rf_type}, \eqref{goal:ex_rf_vals} and \eqref{goal:ex_mo_type} 
follow immediately from the definition of $\makeG \ipath$ and the types of $\ipath$ entries and $\IRSets$. \\

\noindent\textbf{RTS. \eqref{goal:ex_clo}}\\*
Follows immediately from the definitions of $\po$, $\clo$ and \cref{lem:clo_external}.\\

\noindent\textbf{RTS. \eqref{goal:ex_rf_functional}}\\*
Pick an arbitrary $(w, r) \in G.\rf$. From the definition of $G.\Events$, $\wfipath \ipath$ and \eqref{wf:nodups} we know either 
i) $\intr r w \in \ipath$ and $r \nin \ipath$ and $\for{w' \ne w} \intr r {w'} \nin \ipath$; or
ii) $r \in \ipath$ and $\for{w} \intr r {w} \nin \ipath$. 
In case (i), from the definition of $G.\rf$ we know $\for{w' \ne w} \intr r {w'} \nin \ipath$, as required.
In case (ii), let us proceed by contradiction and assume there exist $w_1, w_2$ such that $(w_1, r), (w_2, r) \in \rf$ and $w_1 \ne w_2$. 
From the definition of $\rf$ and the assumption of the case we then know there exists $\rs, n, \ipath_2$ such that $\ipath {=}  \cat [\rmap {\tx r} \rs n] \cat \ipath_2$, $\for{\rs', m} \rmap {\tx r} {\rs'} m \nin \ipath_2$, $\rs(\loc r) {=} w_1$ and $\rs(\loc r) {=} w_2$. As $\rs \in \IRSets$ is a function, $\rs(\loc r) {=} w_1$ and $\rs(\loc r) {=} w_2$, we then have $w_1 {=} w_2$, leading to a contradiction since we assumed $w_1 \ne w_2$.\\

\noindent\textbf{RTS. \eqref{goal:ex_rf_total}}\\*
Pick an arbitrary $r \in G.\Reads$. From the definition of $G.\Events$, $\wfipath \ipath$ and \eqref{wf:nodups} we know either 
i) there exists $w$ such that $\intr r w \in \ipath$, $r \nin \ipath$ and $\for{w' \ne w} \intr r {w'} \nin \ipath$; or
ii) $r \in \ipath$ and $\for{w} \intr r {w} \nin \ipath$. 
In case (i), from the definition of $G.\rf$ we know $(w, r) \in \rf$, as required.
In case (ii), from $\wfipath \ipath$ and \eqref{wf:total_external_read} we know there exists $r \plt \ipath \rmap {\tx r} - -$. 
As such, since there is at least one $\rmap {\tx r} - -$ entry in $\ipath$, we know there exists $\rs, n, \ipath_1, \ipath_2$ such that $\ipath {=} \ipath_1 \cat [\rmap {\tx r} \rs n] \cat \ipath_2$, $\for{\rs', m} \rmap {\tx r} {\rs'} m \nin \ipath_2$ and $r \in \ipath_1$.
Consequently, from $\wfipath \ipath$ and \eqref{wf:external_read} we know $\wfread {\tx r} \rs n {\ipath_1}$ holds.
As such, from the definition of $\wfread {\tx r} \rs n {\ipath_1}$ and since $r \in \ipath_1$ we know $\loc r \in \dom(\rs)$, \ie there exists $w$ such that $\rs(\loc r) = w$. 
Consequently, since $r \in G.\Reads$, $r \in \ipath$, $\ipath {=} \ipath_1 \cat [\rmap {\tx r} \rs n] \cat \ipath_2$, $\for{\rs', m} \rmap {\tx r} {\rs'} m \nin \ipath_2$ and $\rs(\loc r) = w$, from the definition of $\rf$ we have $(w, r) \in \rf$, as required.\\

\noindent\textbf{RTS. \eqref{goal:ex_mo_order}}\\*
Follows from the definition of $\mo_\x$ in $\makeG \ipath$, and the fact that $\teo(\ipath)$ is a strict total order on $\setcomp{\tx b}{b \in \Begins \cap \ipath}$ (\cref{lem:teo_strict_total_order}).\\
%
%
%
\end{proof}

\begin{lemma}
\label{lem:wf_execution}
For all $\ipath$ and $G$, if $\wfipath \ipath$ holds and $G {=} \makeG{\ipath}$, then $G$ is a well-formed execution according to \cref{def:wf_execution}.
\end{lemma}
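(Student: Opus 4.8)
The plan is to build directly on \cref{lem:execution}, which already establishes that $G {=} \makeG{\ipath}$ is an execution in the sense of \cref{def:volatile_executions}. It therefore remains to verify each of the additional well-formedness clauses of \cref{def:wf_execution}, and the strategy is to match every such clause to the corresponding path-level axiom(s) in the definition of $\wfipath{\ipath}$ and transport the property along the construction $\makeG{\cdot}$. Since $G.\Events$ consists exactly of the events $e \in \EventsType$ that occur as $\getE{\pent}$ for some path entry $\pent \in \ipath$, and $G.\po$ relates two same-transaction events precisely when their underlying entries are $\plt{\ipath}$-ordered, each structural constraint on $G$ reduces to a statement about the ordering and multiplicity of entries in $\ipath$.

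Concretely, I would discharge the clauses group by group. The requirement that events of one transaction share a thread follows from \eqref{wf:same_tx_same_thrd}. That each transaction has exactly one begin is obtained from \eqref{wf:unique_begin} (uniqueness) together with \eqref{wf:begin_first1}--\eqref{wf:begin_first4} (existence of a begin preceding every other entry of the transaction); the same axioms give that the begin is $\po$-minimal in its transaction. The ``at most one abort or success per transaction'' clause, and hence $G.(\SSet \cap \ASet) {=} \emptyset$, comes from \eqref{wf:unique_abort}, \eqref{wf:unique_succ1}--\eqref{wf:unique_succ3}, \eqref{wf:abort_or_succ1} and \eqref{wf:abort_or_succ2}. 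The inclusion $G.\SSet \suq G.\CSet$ follows from \eqref{wf:end_read_only} and \eqref{wf:end_write_alloc}, which guarantee that every success (recorded as $s$ or as $\wsuc s n$) is $\plt{\ipath}$-preceded by a commit of the same transaction. The restriction on what may follow a commit is provided by \eqref{wf:only_end_after_commit} together with \eqref{wf:event_before_end1} and \eqref{wf:event_before_end2}, and the allocation restrictions (uniqueness of an allocation per location within the committed fragment) are supplied by \eqref{wf:unique_alloc}. The partition $G.\Events = G.(\PSet \uplus \ASet \uplus \CPSet \uplus \SSet)$ is then a direct consequence of the preceding multiplicity clauses.

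The main obstacle is the faithful transport of path-level orderings to graph-level program order, and the subtlety is twofold. First, $G.\po$ only relates events within a single transaction, so to conclude, e.g., that the begin is $\po$-minimal, I must combine \eqref{wf:begin_first1} with \eqref{wf:same_tx_same_thrd} and the definition of $G.\po$ to see that $\plt{\ipath}$ restricted to a fixed transaction induces exactly the per-transaction total order underlying $\po$. Second, several path axioms are phrased in terms of the auxiliary entries $\rmap{\txid}{\rs}{n}$, $\wsuc s n$ and $\intr r w$ rather than the raw events; since $\getE{\cdot}$ collapses $\wsuc s n$ to $s$ and $\intr r w$ to $r$ while discarding the $\rmap{\cdot}{\cdot}{\cdot}$ entries, I must argue that the relevant event appears in $G.\Events$ with the right label and that its multiplicity is controlled by the corresponding uniqueness axiom (via \eqref{wf:nodups}, and \cref{lem:aux} for successes). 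This bookkeeping --- reconciling the two representations of commits and successes, and of internal versus external reads --- is where the bulk of the care lies, whereas each individual clause is then immediate from its matching axiom.
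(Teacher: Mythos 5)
Your proposal is correct and takes essentially the same route as the paper's proof: a clause-by-clause verification of \cref{def:wf_execution}, discharging each clause from the matching $\wfipath{\ipath}$ axioms via the definitions of $G.\Events$ and $G.\po$ in $\makeG{\ipath}$, with the key observation that $G.\po$ relates only same-transaction path entries. The only divergence is bookkeeping: your enumeration folds in the main-text consequences (\eg $G.\SSet \suq G.\CSet$ and the partition of $G.\Events$) while leaving implicit some of the definition's literal clauses---contiguity $\st \cap (\tex{\po};\tex{\po}) = \emptyset$ and the at-most-one-live condition (both immediate from $\tex{\po} = \emptyset$, which your observation yields), $\po$-maximality of aborts/successes, and commit--success adjacency---each of which the paper discharges by the same method from axioms you already cite together with the internal-read variants \eqref{wf:iread_before_end1} and \eqref{wf:iread_before_end2}.
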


\begin{proof}
Pick arbitrary $\ipath$ and $G = (\Events, \po, \rf, \mo)$ such that $\wfipath \ipath$ and $G {=} \makeG{\ipath}$.
We are required to show: 
\begin{align}
	& \st \suq \po \cup \inv{\po}
	\label{goal:wf_same_tx_same_tid} \\
	& 	\st \cap (\tex{\po}; \tex{\po}) = \emptyset
	\label{goal:wf_tx_contiguous} \\
	& \size{\Events_{\txid} \cap \Begins} {=} 1 \land \be \not\in \rng(\poI)
	\label{goal:wf_begin} \\
	& \size{\Events_{\txid} \cap \Aborts} \leq 1 \land \size{\Events_{\txid} \cap \Commits} \leq 1 \land \size{\Events_{\txid} \cap \Succs} \leq 1 
	\label{goal:wf_single_abort_commit_success} \\
	& \abe \not\in \dom(\poI) \land \se \not\in \dom(\poI)
	\label{goal:wf_abort_succ_maximal} \\
	& \rng([C]; \po \cap \st) \suq \Aborts \cup \Succs 
	\label{goal:wf_only_success_or_abort_after_commit} \\
	& \se \in \Events \Rightarrow (\ce, \se) \in \imm\po
	\label{goal:wf_commit_succ_adjacent} \\
	& \rng([\PSet \cup \CPSet]; \tex{\po}) = \emptyset
	\label{goal:wf_at_most_one_live} \\
	& 	\for{\x} \size{\Allocs_\x \cap \SSet} \leq 1
	\label{goal:wf_single_alloc}
\end{align}

\noindent\textbf{RTS. \eqref{goal:wf_same_tx_same_tid}}\\*
Pick arbitrary $(a, b) \in \st$, \ie $\tx a {=} \tx b$. 
From the definition of $G.\Events$ we then know there exist $\pent_a, \pent_b \in \ipath$ such that $\getE {\pent_a} {=} a $ and $\getE{\pent_b} {=} b$. 
Moreover, from the totality of $\plt{\ipath}$ we know either $\pent_1 \plt{\ipath} \pent_2$ or $\pent_2 \plt{\ipath} \pent_1$. 
Consequently, since $\getE {\pent_a} {=} a $, $\getE{\pent_b} {=} b$ and  from the definition of $G.\po$ we know $(a, b) \in \po \cup \inv{\po}$, as required.\\

\noindent\textbf{RTS. \eqref{goal:wf_tx_contiguous}}\\*
It suffices to show that $\tex{\po}; \tex{\po} = \emptyset$.
Let us proceed by contradiction and assume there exist $a, b$ such that $(a, b) \in \tex{\po}; \tex{\po}$.
That is there exist $c$ such that $(a, c) \in \po$, $(c, b) \in \po$, $\tx a \ne \tx c$ and $\tx c \ne \tx b$. 
On the other hand as $(a, c) \in \po$, from the definition of $\po$ we have $\tx a = \tx c$, leading to a contradiction since we also have $\tx a \ne \tx c$.\\

\noindent\textbf{RTS. \eqref{goal:wf_begin}}\\*
As $\wfipath \ipath$ holds, the result follows from the definition of $\Events$, \eqref{wf:unique_begin}, \eqref{wf:begin_first1}, \eqref{wf:begin_first2} and \eqref{wf:begin_first4}.\\

\noindent\textbf{RTS. \eqref{goal:wf_single_abort_commit_success}}\\*
As $\wfipath \ipath$ holds, the result follows from the definition of $\Events$, \eqref{wf:unique_abort}, \eqref{wf:unique_commit}, \eqref{wf:unique_succ1},  \eqref{wf:unique_succ2} and \eqref{wf:unique_succ3}.\\

\noindent\textbf{RTS. \eqref{goal:wf_abort_succ_maximal}}\\*
As $\wfipath \ipath$ holds, the result follows from the definitions of $\Events$, $\po$, \eqref{wf:iread_before_end1}, \eqref{wf:iread_before_end2}, \eqref{wf:event_before_end1} and \eqref{wf:event_before_end2}.\\

\noindent\textbf{RTS. \eqref{goal:wf_only_success_or_abort_after_commit}}\\*
As $\wfipath \ipath$ holds, the result follows from the definitions of $\Events$, $\po$, \eqref{wf:iread_before_end1} and \eqref{wf:only_end_after_commit}.\\

\noindent\textbf{RTS. \eqref{goal:wf_commit_succ_adjacent}}\\*
As $\wfipath \ipath$ holds, the result follows from part \eqref{goal:wf_only_success_or_abort_after_commit}, the definitions of $\Events$, $\po$, \eqref{wf:nodups}, \eqref{wf:end_write_alloc}, \eqref{wf:abort_or_succ1} and \eqref{wf:abort_or_succ2}.\\

\noindent\textbf{RTS. \eqref{goal:wf_at_most_one_live}}\\*
Follows immediately from the fact that $\tex\po = \emptyset$ as shown in part \eqref{goal:wf_tx_contiguous}.\\

\noindent\textbf{RTS. \eqref{goal:wf_single_alloc}}\\*
As $\wfipath \ipath$ holds, the result follows from the definitions of $\Events$, $\po$, \eqref{wf:end_write_alloc}, \eqref{wf:end_read_only} and \eqref{wf:unique_alloc}.
\end{proof}

\begin{lemma}
\label{lem:cpendrf_empty}
For all paths $\ipath$ and $G$, if $\wfipath \ipath$ holds and $G {=} \makeG{\ipath}$, then $G.\CPRFSet = \emptyset$. 
\end{lemma}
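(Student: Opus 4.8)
The plan is to argue by contradiction, exploiting the deferred-update discipline of \DDTMS: a cross-transaction read can only observe a write or allocation that has already been published into the instrumented memory $\imems$, and by inspection of the transitions the only rule that extends $\imems$ is the successful-commit transition $\icommitResp s$. Hence only a transaction that has generated a $\succL$ event can be read from by another transaction, whereas a commit-pending transaction (one carrying a $\commitL$ but no $\succL$) cannot---which is exactly the content of $G.\CPRFSet = \emptyset$.

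Concretely, I would suppose $a \in G.\CPRFSet = \dom([\CPSet]; \rft)$ and unfold $\rft = \st; (\rf \setminus \st); \st$ to obtain events $w, r$ with $(a, w) \in \st$ and $(w, r) \in G.\rf \setminus \st$, so $\tx w \ne \tx r$, together with $a \in \CPSet$. Since $\CPSet$, $\ASet$ and $\SSet$ are all transaction-indexed and $\tx a = \tx w$, this gives $w \in \CPSet$ and in particular $w \notin \SSet$. The remaining task is to derive the contradictory $w \in \SSet$.

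Next I would apply \cref{lem:rf} to the cross-transaction edge $(w, r)$, yielding $(w, r) \in \mathsf{ERF}_{\tx r}$. Unfolding $\mathsf{ERF}_{\tx r}$ produces the last read-set entry $\rmap {\tx r} \rs n$ of $\tx r$ in $\ipath$ with $\rs(\loc r) = w$; writing $\ipath = \ipath_1 \cat [\rmap {\tx r} \rs n] \cat \ipath_2$ and invoking well-formedness \eqref{wf:external_read} gives $\wfread {\tx r} \rs n {\ipath_1}$. Instantiating the middle clause of $\wfread$ at $l = \loc r$ and $e = w$ then supplies $s, m$ with $\tx s = \tx w \ne \tx r$ and $w \plt{\ipath} \wsuc s m$. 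Translating to the graph level, the entry $\wsuc s m \in \ipath$ forces $s \in \Succs$ and, via $\getE{\wsuc s m} = s$, also $s \in G.\Events$, so $s \in G.\Succs$; since $w \in G.\Events$ (as $G.\rf \suq \Events \times \Events$) and $\tx w = \tx s$, I conclude $(w, s) \in \st$ with $s \in \Succs$, hence $w \in \SSet$, contradicting $w \notin \SSet$.

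The main obstacle will be the bookkeeping in the third step: pinning down the correct $\rmap {\tx r} \rs n$ entry that witnesses the external read, matching it to the precise shape demanded by \eqref{wf:external_read}, and extracting exactly the right conjunct of the large $\wfread$ predicate to expose the success event of $\tx w$. Once that conjunct is in hand, the passage between path entries and the transaction-indexed sets $\SSet$ and $\CPSet$ is routine.
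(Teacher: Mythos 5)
Your proposal is correct and takes essentially the same route as the paper's proof: both pass from the cross-transaction $\rf$ edge to $\mathsf{ERF}_{\tx r}$ (via \cref{lem:rf}), invoke \eqref{wf:external_read} to obtain the $\wfreadname$ predicate, and use its success-witness conjunct to conclude that the writer's transaction lies in $\SSet$, which is disjoint from $\CPSet$. The only difference is presentational: you argue by contradiction from $w \in \CPSet$, whereas the paper directly shows $\dom(G.\rft) \suq G.\SSet$ and hence that $\dom(G.\rft)$ is disjoint from $G.\CPSet$.
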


\begin{proof}
Pick an arbitrary $\ipath$ and $G$ such that $\wfipath \ipath$ holds and $G {=} \makeG{\ipath}$.
Given the definition of $G.\CPRFSet$, it suffices to show that for all $(a, b) \in G.\rft$, $a \nin G.\CPSet$. 

Pick an arbitrary $(a, b) \in G.\rft$.
We then know there exist $w, r, \txid_w, \txid_r$ such that $\tx w = \tx a$, $\tx r = \tx b$, $\txid_w \ne \txid_r$ and $(w, r) \in G.\rf$. 
As such, since $(w, r) \in G.\rf$, from \cref{lem:aux} and the definition of $\mathsf{ERF}$ we know $(w, r) \in \mathsf{ERF}_{\txid_r}$. 
That is, there exists $\rs, n, \ipath_1, \ipath_2$ such that $\ipath {=} \ipath_1 \cat [\rmap \txid \rs n] \cat \ipath_2$ and $\rs(\loc r) {=} w$.
As such, since $\wfipath \ipath$ holds, from \eqref{wf:external_read} we know $\wfread {\txid_r} \rs n {\ipath_1}$ holds and thus from the definition of $\wfreadname$ we know there exists $s \in \Succs, k$ such that $\tx s {=} \tx w$, $k \leq n$ and $w \plt{\ipath_1} \wsuc s k$. 
Therefore, from the definition of $G.\Events$ we know $s \in G.\Events$, and consequently since $s \in \Succs$ and $\tx s {=} \tx w {=} \tx a$, we know $a \in G.\SSet$. 
Finally, since $a \in G.\SSet$, from the definition of $G.\CPSet$ we know $a \nin G.\CPSet$, as required.
\end{proof}

\begin{corollary}
\label{lem:vis_equals_succ}
For all paths $\ipath$ and $G$, if $\wfipath \ipath$ holds and $G {=} \makeG{\ipath}$, then $G.\VSet = G.\SSet$. 
\end{corollary}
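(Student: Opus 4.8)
The plan is to derive the result directly from the preceding \cref{lem:cpendrf_empty}, since the corollary shares exactly its hypotheses. Recalling the definition of the visible set from \cref{def:opacity}, namely $\VSet = \SSet \cup \CPRFSet$ where $\CPRFSet = \dom([\CPSet]; \rft)$, the desired equality $G.\VSet = G.\SSet$ reduces to showing that the second disjunct contributes nothing, i.e.\ that $G.\CPRFSet = \emptyset$. So first I would unfold the definition of $\VSet$ to isolate this single obligation.

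Next I would invoke \cref{lem:cpendrf_empty}, which—under precisely the assumptions $\wfipath{\ipath}$ and $G = \makeG{\ipath}$ that are available here—establishes $G.\CPRFSet = \emptyset$. Substituting into the definition then yields $G.\VSet = G.\SSet \cup \emptyset = G.\SSet$, which closes the argument. There is no genuine obstacle at this level: all of the substantive reasoning has already been carried out in \cref{lem:cpendrf_empty}, whose proof shows (via the well-formedness condition \eqref{wf:external_read} and the definition of $\wfreadname$) that the source transaction of any inter-transactional $\rf$ edge must carry a commit-success event and hence lie in $G.\SSet$, so that it cannot belong to $G.\CPSet$. Consequently the present corollary is a one-line unfolding of the definition of $\VSet$ combined with that emptiness result.
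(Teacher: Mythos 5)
Your proposal is correct and matches the paper's own proof, which likewise derives the corollary immediately from the definition of $G.\VSet$ together with \cref{lem:cpendrf_empty} (giving $G.\CPRFSet = \emptyset$, so $G.\VSet = G.\SSet \cup \emptyset = G.\SSet$). Your additional recap of why \cref{lem:cpendrf_empty} holds is accurate but not needed for this step.
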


\begin{proof}
Follows immediately from the definition of $G.\VSet$ and \cref{lem:cpendrf_empty}.
\end{proof}

\begin{lemma}
\label{lem:dynamic_abort_invisibility}
For all $\ipath$ and $G$, if $\wfipath \ipath$ holds and $G {=} \makeG{\ipath}$, then:
\begin{align}
	&\dom(G.\rft) \suq G.\VSet  \label{goal:dai_rf} \\
	& G.(\Writes \cap \VSet) \suq \rng\big([G.(\Allocs \cap \VSet)]; G.\mo\big)
	\label{goal:dai_mo} 
\end{align}
\end{lemma}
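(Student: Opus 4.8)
The plan is to work throughout with the path $\ipath$ underlying $G = \makeG\ipath$, exploiting the already-proved correspondence between path entries and events. The single most useful preliminary is \cref{lem:vis_equals_succ}, which gives $G.\VSet = G.\SSet$; with it, ``visible'' can everywhere be read as ``belonging to a transaction that carries a success marker'', i.e.\ a writer appearing in $\swriters[\ipath]$. Both inclusions then reduce to statements about the $\wsuc{s}{n}$ markers in $\ipath$ and the orderings they induce.

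For \eqref{goal:dai_rf} I would take $a \in \dom(G.\rft)$, so there is $(w,r) \in G.\rf$ with $\tx w = \tx a$ and $\tx w \ne \tx r$. By \cref{lem:rf} this is an external read, $(w,r) \in \mathsf{ERF}_{\tx r}$, so $\ipath = - \cat [\rmap{\tx r}{\rs}{n}] \cat -$ with $\rs(\loc r) = w$. Applying \eqref{wf:external_read} yields $\wfread{\tx r}{\rs}{n}{\ipath_1}$, and unfolding the $\rs(l)=e$ clause of $\wfreadname$ at $l=\loc r$, $e=w$ produces a success $s$ with $\tx s = \tx w$ occurring in the path; hence $s \in G.\Succs$ and $a \in G.\SSet = G.\VSet$. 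This is precisely the reasoning already isolated inside the proof of \cref{lem:cpendrf_empty}.

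For \eqref{goal:dai_mo} I would fix $w \in G.(\Writes \cap \VSet) = G.(\Writes \cap \SSet)$ and exhibit a visible allocation $m$ of $\loc w$ with $(m,w) \in G.\mo$. First, \cref{lem:aux} gives $(\tx w, n_w) \in \swriters$, i.e.\ a marker $\wsuc{s_w}{n_w}$ with $\tx{s_w} = \tx w$; since no event of a transaction may follow its success (\eqref{wf:event_before_end2}), we get $w \plt{\ipath} \wsuc{s_w}{n_w}$, putting $\ipath$ in the shape demanded by \eqref{wf:wf_write}. This delivers $\wfwrite{w}{\ipath_1}$, hence an allocation $m$ with $\loc m = \loc w$ occurring before $w$. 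I would then split on $\tx m = \tx w$. If equal, $m$ is visible because $\tx w$ is, and $(m,w) \in G.\mo$ from the intra-transaction clause of $\mo_{\loc w}$, using $m \plt{\ipath} w$. If not equal, the second disjunct of $\wfwrite$ supplies a success $s_m$ of $\tx m$ with $\wsuc{s_m}{n_m} \plt{\ipath} w$, so $\tx m$ is visible; chaining with $w \plt{\ipath} \wsuc{s_w}{n_w}$ gives $\wsuc{s_m}{n_m} \plt{\ipath} \wsuc{s_w}{n_w}$, and \eqref{wf:end_write_alloc_in_order1} turns this into $n_m < n_w$. As both $(\tx m, n_m)$ and $(\tx w, n_w)$ lie in $\swriters$, the first clause of $\peo(\ipath)$ gives $(\tx m, \tx w) \in \peo(\ipath) \suq \teo(\ipath)$, whence $(m,w) \in G.\mo$ by the extra-transaction clause of $\mo_{\loc w}$. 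In both cases $m \in G.(\Allocs \cap \VSet)$, so $w \in \rng([G.(\Allocs \cap \VSet)]; G.\mo)$.

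The main obstacle is the extra-transaction case of \eqref{goal:dai_mo}: ordering the two writers inside $\teo(\ipath)$, whose only accessible handle is $\peo$. The crux is recognising that the required index inequality $n_m < n_w$ need not be re-derived from the instrumented memory sequence but falls out syntactically from the path ordering of the two $\wsuc$ markers via \eqref{wf:end_write_alloc_in_order1}. Consequently the delicate parts are arranging the correct $\ipath$ decomposition to trigger \eqref{wf:wf_write} and checking that the allocation it returns is genuinely visible in each branch; the $\mo$ membership itself is then immediate from the definition of $\mo_{\loc w}$.
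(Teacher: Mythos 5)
Your proposal is correct and follows essentially the same route as the paper's proof: \cref{lem:rf} plus \eqref{wf:external_read} for inclusion \eqref{goal:dai_rf}, and \cref{lem:vis_equals_succ}, \eqref{wf:wf_write}, the same-transaction/different-transaction case split, and the $\swriters$--$\peo$--$\teo$ chain via \eqref{wf:end_write_alloc_in_order1} for inclusion \eqref{goal:dai_mo}. The only (harmless) difference is that you obtain $w \plt{\ipath} \wsuc{s_w}{n_w}$ by citing \cref{lem:aux} and then excluding the reverse order via \eqref{wf:event_before_end2}, whereas the paper re-derives this fact inline using \eqref{wf:event_before_end1}, \eqref{wf:event_before_end2} and \eqref{wf:end_read_only}.
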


\begin{proof}
Pick arbitrary $\ipath$ and $G$ such that $\wfipath \ipath$ and $G {=} \makeG{\ipath}$. We then proceed as follows.\\ 

\noindent\textbf{RTS. \eqref{goal:dai_rf}}\\*
Note that given the definition of $G.\VSet$, it suffices to show $\dom(G.\rft) \suq G.\SSet$.
Pick an arbitrary $a, b \in G.\Events$ such that $(a, b) \in G.\rft$. 
That is, there exists $w, r \in G.\Events, \txid_w, \txid_r$ such that $\tx b = \tx r = \txid_r$, $\tx a {=} \tx w = \txid_w$, $\txid_w \ne \txid_r$ and $(w, r) \in G.\rf$.
As $\makeG{\ipath} {=} G$ and $\wfipath \ipath$, $\tx r = \txid_r$, $\tx w = \txid_w$ and $\txid_w \ne \txid_r$, from \cref{lem:rf} we know $(w, r) \in \mathsf{ERF}_{\txid_r}$. 
Let $\loc r = \loc w = \x$. From the definition of $\mathsf{ERF}_{\txid_r}$ we know there exist $n, \ipath_1, \ipath_2$ such that $r \!\in\! \ipath \cap \Reads$, $\ipath {=} \ipath_1 \cat [\rmap {\txid_r} \rs n] \cat \ipath_2$, $\rs(\x) {=} w$ and $\for{\rs'\!, m} \rmap {\txid_r} {\rs'\!} m \nin \ipath_2$.
As such, from $\wfipath \ipath$ and \ref{wf:external_read} we have $\wfread {\txid_r} {\rs} n {\ipath_1}$ and thus from the definition of $\wfread {\txid_r} {\rs} n {\ipath_1}$ we know there exists $s, m$ such that $m \leq n$, $\tx w {=} \tx s {=} \txid_w$ and $w \plt{\ipath_1} \wsuc{s} m$. 
That is, by definition we have $w \in \SSet$ and thus as $\tx a {=} \tx w = \txid_w$ we have $a \in \SSet$, as required.\\

\noindent\textbf{RTS. \eqref{goal:dai_mo}}\\*
From \cref{lem:vis_equals_succ} it suffices to show $G.(\Writes \cap \SSet) \suq \rng\big([G.(\Allocs \cap \SSet)]; G.\mo\big)$. 
Pick arbitrary $w \in G.(\Writes \cap \SSet)$ and let $\tx w {=} \txid$. 
By definition we know there exists $s \in G.\Succs$ such that $\tx s {=} \tx w {=} \txid$. 
Consequently, given the definition of $G.\Events$, either i) $s \in \ipath$ or ii) there exists $n$ such that $\wsuc s n \in \ipath$. As such, since $w \in \ipath$, from $\wfipath \ipath$, \eqref{wf:event_before_end1}, \eqref{wf:event_before_end2} we know either i) $w \plt{\ipath} s$ or ii) there exists $n$ such that $w \plt{\ipath} \wsuc s n$. 
Therefore, from $\wfipath \ipath$ and \eqref{wf:end_read_only} we know case (i) does not arise and thus there exists $n$ such that $w \plt{\ipath} \wsuc s n$.
As $w \plt{\ipath} \wsuc s n$, we know there exists $\ipath_1$ such that $\ipath {=} \ipath_1 \cat [w] \cat - \cat [\wsuc s n] \cat -$. 
Hence, since $\tx w {=} \tx s$ from $\wfipath \ipath$ and \eqref{wf:wf_write} we have $\wfwrite w {\ipath_1}$. 
Consequently, from the definition of $\wfwrite w {\ipath_1}$ we know there exists $m \in \Allocs \cap \ipath_1$ such that $\loc{m} {=} \loc w$ and $(\tx m {\ne} \tx w \Rightarrow \exsts{s_m \in \Succs, k} \tx {s_m} {=} \tx m \land m \plt{\ipath_1} \wsuc {s_m} k)$.
There are now two cases to consider: 
a) $\tx m {=} \tx w$; or
b) $\tx m {\ne} \tx w$.

In case (a), as $\ipath {=} \ipath_1 \cat [w] \cat - \cat [\wsuc s n] \cat -$ and $m \in \Allocs \cap \ipath_1$ we have $m \plt{\ipath} w$. 
As $m \in \Allocs \cap \ipath_1$ and thus $m \in \Allocs \cap \ipath$ by definition of $G.\Events$ we have $m \in G.\Allocs$. 
Moreover, since $\tx m {=} \tx w$ (assumption of case (i)), $\loc{m} {=} \loc w$, $w \in G.\Writes$, $m \in G.\Allocs$ and $m \plt{\ipath} w$, from the definition of $G.\mo$ we have $(m, w) \in G.\mo$.
Furthermore, since $\tx m {=} \tx w$ and $w \in G.\SSet$, by definition we also have $m \in G.\SSet$ and thus $m \in G.(\Allocs \cap \SSet)$. 
Consequently, since $m \in G.(\Allocs \cap \SSet)$ and $(m, w) \in G.\mo$, we have $w \in \rng\big([G.(\Allocs \cap \SSet)]; G.\mo\big)$, as required.

In case (b), from $(\tx m {\ne} \tx w \Rightarrow \exsts{s_m \in \Succs, k} \tx {s_m} {=} \tx m \land m \plt{\ipath_1} \wsuc {s_m} k)$ we know there exist $s_m \in \Succs, k$ such that $\tx {s_m} {=} \tx m$ and $m \plt{\ipath_1} \wsuc {s_m} k$. 
As such, by definition we have $s_m \in G.\Succs$ and thus $s_m \in G.\SSet$, and hence since $\tx {s_m} {=} \tx m$, by definition we also have $m \in G.\SSet$, \ie $m \in G.(\Allocs \cap \SSet)$. 
On the other hand, since $\ipath {=} \ipath_1 \cat [w] \cat - \cat [\wsuc s n] \cat -$ and $m \plt{\ipath_1} \wsuc {s_m} k$, we know $\wsuc {s_m} k \plt{\ipath} \wsuc s n$. 
As such, from $\wfipath \ipath$ and \eqref{wf:end_write_alloc_in_order1} we know $k < n$. 
Moreover, as $\wsuc {s_m} k, \wsuc s n \in \ipath$ and $\tx {s_m} {=} \tx m$, by definition we have $(\tx m, k), (\tx w, n) \in \swriters$. 
Therefore, since $k < n$, from the definition of $\peo(\ipath)$ we have $(\tx m, \tx w) \in \peo(\ipath) \suq \teo(\ipath)$. 
As such, since $\loc{m} {=} \loc w$, $w \in G.\Writes$, $m \in G.\Allocs$ and $(\tx m, \tx w) \in \teo(\ipath)$, from the definition of $G.\mo$ we have $(m, w) \in G.\mo$.
Consequently, since $m \in G.(\Allocs \cap \SSet)$ and $(m, w) \in G.\mo$, we have $w \in \rng\big([G.(\Allocs \cap \SSet)]; G.\mo\big)$, as required.
\end{proof}

\begin{corollary}
\label{lem:rft_from_succ}
For all $\ipath$ and $G$, if $\wfipath \ipath$ holds and $G {=} \makeG{\ipath}$, then $\dom(G.\rft) \suq G.\SSet$. 
\end{corollary}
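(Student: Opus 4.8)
The plan is to obtain the corollary directly by chaining the two results established immediately above it, so no fresh induction or case analysis is required. First I would invoke \cref{lem:dynamic_abort_invisibility}: its part \eqref{goal:dai_rf} states precisely that, under the hypotheses $\wfipath \ipath$ and $G {=} \makeG{\ipath}$, we have $\dom(G.\rft) \suq G.\VSet$. Next I would apply \cref{lem:vis_equals_succ}, which—under exactly the same hypotheses—gives the set equality $G.\VSet = G.\SSet$. Substituting this equality into the inclusion yields $\dom(G.\rft) \suq G.\SSet$, which is the desired statement.

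The step I would highlight as the crux is not in this corollary itself but in the ingredients it reuses: the equality $G.\VSet = G.\SSet$ rests on \cref{lem:cpendrf_empty}, namely that $G.\CPRFSet = \emptyset$, which in turn unpacks the definition of $\mathsf{ERF}$ and uses the well-formedness clause \eqref{wf:external_read} (via $\wfreadname$) to exhibit, for each external reads-from edge, a success event $\wsuc s k$ witnessing that the source transaction has committed. Because all of that reasoning has already been discharged, the corollary is essentially a restatement: indeed the proof of \eqref{goal:dai_rf} itself begins by reducing its goal to $\dom(G.\rft) \suq G.\SSet$ and only widens the conclusion to $G.\VSet$ afterwards, so here I am simply re-exposing the sharper bound that the lemma's proof already produces.

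Consequently I anticipate no genuine obstacle. The one point worth verifying is that the antecedents of the two cited results coincide verbatim with those of the corollary—$\wfipath \ipath$ together with $G {=} \makeG{\ipath}$—which they do, so the application is immediate and the proof reduces to a single line combining \eqref{goal:dai_rf} with \cref{lem:vis_equals_succ}.
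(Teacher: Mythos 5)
Your proposal is correct and matches the paper's own proof exactly: it derives $\dom(G.\rft) \suq G.\VSet$ from part \eqref{goal:dai_rf} of \cref{lem:dynamic_abort_invisibility} and then applies \cref{lem:vis_equals_succ} ($G.\VSet = G.\SSet$) to conclude. Your side observation that the proof of \eqref{goal:dai_rf} already reduces its goal to the sharper bound $\dom(G.\rft) \suq G.\SSet$ is also accurate.
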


\begin{proof}
Pick arbitrary $\ipath$ and $G$ such that $\wfipath \ipath$ holds and $G {=} \makeG{\ipath}$.
From part \eqref{goal:dai_rf} proved in \cref{lem:dynamic_abort_invisibility}, we know $\dom(G.\rft) \suq G.\VSet$. Consequently, from \cref{lem:vis_equals_succ} we have $\dom(G.\rft) \suq G.\SSet$, as required.
\end{proof}

\begin{lemma}
\label{lem:hb_acyclic}
For all paths $\ipath$ and $G$, if $\wfipath \ipath$ holds and $G {=} \makeG{\ipath}$, then:
\begin{align}
	& \for{a, b \in \Events} (a, b) \in G.\clo \Rightarrow (\tx a, \tx b) \in \teo(\ipath) \label{goal:hb_clo} \\
	& \for{a, b \in \Events} (a, b) \in G.\rft \Rightarrow (\tx a, \tx b) \in \teo(\ipath) \label{goal:hb_rft} \\
	& \for{a, b \in \Events} (a, b) \in G.\mot \Rightarrow (\tx a, \tx b) \in \teo(\ipath) \label{goal:hb_mot} \\
	& \for{a, b \in \Events} (a, b) \in G.(\rbt;[\VSet]) \Rightarrow (\tx a, \tx b) \in \teo(\ipath) \label{goal:hb_rbt} \\ 
	& \for{a, b \in \Events} (a, b) \in \transC{G.(\clo \cup \rft \cup \mot \cup \rbt;[\VSet])} \Rightarrow (\tx a, \tx b) \in \teo(\ipath) \label{goal:hb_trans} 
\end{align}
\end{lemma}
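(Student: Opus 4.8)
The plan is to prove the five inclusions \eqref{goal:hb_clo}--\eqref{goal:hb_trans} largely independently, and then obtain \eqref{goal:hb_trans} from the others by a routine induction. Two of them are immediate from the definition of $\makeG\ipath$. For \eqref{goal:hb_clo}, if $(a,b) \in G.\clo$ then by the definition of $G.\clo$ there are $e \in \Succs \cup \Aborts$ and $b' \in \Begins$ with $\tx a {=} \tx e$, $\tx b {=} \tx{b'}$ and $(e \plt\ipath b' \lor \wsuc e - \plt\ipath b')$; comparing with the definition of $\clo(\ipath)$ gives $(\tx a, \tx b) \in \clo(\ipath) \suq \peo'(\ipath) \suq \teo(\ipath)$. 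For \eqref{goal:hb_mot}, if $(a,b) \in G.\mot$ then there are writes/allocations $w_1 \in \tx a$, $w_2 \in \tx b$ with $(w_1, w_2) \in G.\mo$ and $\tx{w_1} \ne \tx{w_2}$; since the first clause of $\mo_\x$ only relates same-transaction events, $(w_1, w_2)$ must arise from the second clause, which directly records $(\tx{w_1}, \tx{w_2}) \in \teo(\ipath)$.

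For \eqref{goal:hb_rft}, suppose $(a,b) \in G.\rft$, witnessed by $w \in \tx a$, $r \in \tx b$ with $(w,r) \in G.\rf$ and $\tx w \ne \tx r$. By \cref{lem:rf} this is an external read, $(w,r) \in \mathsf{ERF}_{\tx r}$, so $r$ is resolved by the last read-map $\rmap{\tx r}{\rs}{n}$ of $\tx r$ with $\rs(\loc r) {=} w$. Applying \eqref{wf:external_read} gives $\wfread{\tx r}{\rs}{n}{\ipath_1}$, whose second clause yields $s \in \Succs$ and $m \le n$ with $\tx s {=} \tx w$ and $w \plt\ipath \wsuc s m$, i.e.\ $(\tx w, m) \in \swriters$. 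It remains to classify $\tx r$ at index $n$: because $\rmap{\tx r}{\rs}{n}$ is its last read-map, $\tx r$ is either a successful writer (then \eqref{wf:ext_read_before_succ} forces its commit index to $n{+}1$, so $(\tx r, n{+}1) \in \swriters$ and case~1 of $\peo$ applies, as $m \le n < n{+}1$), an external successful read-only reader ($(\tx r, n) \in \esreaders$, case~2 of $\peo$), or an external reader ($(\tx r, n) \in \ereaders$, case~4 of $\peo$). In all three cases $m \le n$ suffices to conclude $(\tx w, \tx r) \in \peo(\ipath) \suq \teo(\ipath)$.

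The main work is \eqref{goal:hb_rbt}. Suppose $(a,b) \in G.(\rbt; [\VSet])$. Unfolding $\rbt {=} \st;(\rb \setminus \st);\st$ and $\rb {=} \inv\rf;\mo$, there are a read $r \in \tx a$, an event $w \in \tx b$, and $w'$ with $(w',r) \in \rf$, $(w',w) \in \mo$, $\tx r \ne \tx w$ and $b \in \VSet$; write $\loc{w'} {=} \loc w {=} \x$. Since $\tx w {=} \tx b$ and, by \cref{lem:vis_equals_succ}, $\VSet {=} \SSet$ is transaction-closed, $w \in \SSet$; as $w \in \Writes \cup \Allocs$, arguing as in \cref{lem:aux} (extended to allocations via \eqref{wf:end_read_only}) it is a successful writer, $(\tx w, n_w) \in \swriters$. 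I would then split on how $r$ reads $w'$. If $r$ reads $w'$ internally then $\tx{w'} {=} \tx r \ne \tx w$, so the cross-transaction edge $(w',w) \in \mo$ gives $(\tx r, \tx w) {=} (\tx{w'}, \tx w) \in \teo(\ipath)$ directly. If $r$ reads $w'$ externally, then as in \eqref{goal:hb_rft} I obtain the last read-map $\rmap{\tx r}{\rs}{n}$ with $\rs(\x) {=} w'$ and, from $\wfread$, a commit index $m \le n$ for $w'$. The crux is to show $n < n_w$: first rule out $\tx{w'} {=} \tx w$ (if they coincided, $w'$ and $w$ would be two writes to $\x$ in the same committing transaction with $w' \plt\ipath w$, so the committed memory at $\x$ would reflect $w$, not the externally-read $w'$); then $(w',w) \in \mo$ is cross-transactional and, by antisymmetry of the total order $\teo(\ipath)$ together with case~1 of $\peo$ on $(\tx{w'},m),(\tx w,n_w) \in \swriters$, gives $m < n_w$; finally the ``latest visible write'' clause of $\wfread$ (no committed write to $\x$ with index in the window $(m,n]$) forbids $n_w \le n$, yielding $n < n_w$. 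Classifying $\tx r$ at index $n$ exactly as in \eqref{goal:hb_rft} and applying cases~1, 3 or 5 of $\peo$ (all of which need only $n < n_w$, using $n{+}1 < n_w$ in the writer case since equality would collapse the two transactions) gives $(\tx r, \tx w) \in \peo(\ipath) \suq \teo(\ipath)$.

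Finally, \eqref{goal:hb_trans} follows by induction on the length of a $\transC{G.(\clo \cup \rft \cup \mot \cup \rbt;[\VSet])}$-path: the base case is \eqref{goal:hb_clo}--\eqref{goal:hb_rbt}, and the inductive step uses transitivity of $\teo(\ipath)$, which is a strict total order by \cref{lem:teo_strict_total_order}. I expect the delicate step to be establishing $n < n_w$ in the external sub-case of \eqref{goal:hb_rbt}: it is precisely here that visibility ($b \in \VSet {=} \SSet$) is indispensable, and it requires combining the ``latest visible write'' guarantee of $\wfread$ with the alignment between $\mo$, the $\swriters$ commit indices, and $\teo(\ipath)$, together with the side argument ruling out an intra-transaction read source.
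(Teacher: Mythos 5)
Your proposal is correct and follows essentially the same route as the paper's proof: the same five-part decomposition, with \eqref{goal:hb_clo} and \eqref{goal:hb_mot} read directly off the definitions of $G.\clo$, $\mo_\x$ and $\teo(\ipath)$; \eqref{goal:hb_rft} and \eqref{goal:hb_rbt} reduced via \cref{lem:rf}, \eqref{wf:external_read} and the clauses of $\wfreadname$ to index comparisons among $\swriters$, $\esreaders$ and $\ereaders$; visibility handled through \cref{lem:vis_equals_succ}, \cref{lem:rft_from_succ} and \cref{lem:aux}; and \eqref{goal:hb_trans} obtained from transitivity of $\teo(\ipath)$ (\cref{lem:teo_strict_total_order}). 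Your minor variations are sound: in \eqref{goal:hb_rft} you pin the committing reader's index at $n{+}1$ using \eqref{wf:ext_read_before_succ}, where the paper instead derives the strict inequality from \eqref{wf:read_before_end2} and \eqref{wf:end_write_alloc_in_order1}, and your three-way classification of $\tx r$ (writer / $\esreaders$ / $\ereaders$) corresponds to the paper's two cases.

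One difference is worth recording. In \eqref{goal:hb_rbt} you explicitly isolate the sub-case $\tx{w'}{=}\tx w$, i.e.\ the $\rf$-source of $r$ and the $\mo$-later event $w$ lying in the same committed transaction (your $w'$ is the paper's $w_r$). The paper never mentions this case: it asserts that the two commit indices satisfy $n_1 < n_2$ ``from the definitions of $\mo$, $\teo(\ipath)$ and $\peo(\ipath)$'', an inference that is valid only when $(w',w)$ arises from the second, cross-transactional clause of $\mo_\x$; had it arisen from the first clause, the two indices would coincide. So you spotted a case split the paper silently assumes away. Be aware, though, that your discharge of it --- ``the committed memory at $\x$ would reflect $w$, not the externally-read $w'$'' --- appeals to how the instrumented system populates $\imems$, which is not available in this lemma: it quantifies over all paths satisfying $\wfipath \ipath$, and the last conjunct of $\wfreadname$ only forbids committed writes to the location whose commit index is \emph{strictly} greater than $m$, so it does not by itself exclude a later write of the very same transaction (same index). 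Neither your argument nor the paper's closes this sub-case purely from the stated well-formedness axioms; your version at least makes the assumption visible.
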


\begin{proof}
Pick arbitrary $\ipath$ and $G$ such that $\wfipath \ipath$ holds and $G {=} \makeG{\ipath}$.\\

\noindent\textbf{RTS. \eqref{goal:hb_clo}}\\*
Follows immediately from the definitions of $G.\clo$ and $\teo(\ipath)$.\\

\noindent\textbf{RTS. \eqref{goal:hb_rft}}\\*
Pick an arbitrary $a, b \in G.\Events$ such that $(a, b) \in G.\rft$. 
That is, there exists $w, r \in G.\Events, \txid_w, \txid_r$ such that $\tx b = \tx r = \txid_r$, $\tx a {=} \tx w = \txid_w$, $\txid_w \ne \txid_r$ and $(w, r) \in G.\rf$.
As $\makeG{\ipath} {=} G$ and $\wfipath \ipath$, $\tx r = \txid_r$, $\tx w = \txid_w$ and $\txid_w \ne \txid_r$, from \cref{lem:rf} we know $(w, r) \in \mathsf{ERF}_{\txid_r}$. 
Let $\loc r = \loc w = \x$. From the definition of $\mathsf{ERF}_{\txid_r}$ we know there exist $n, \ipath_1, \ipath_2$ such that $r \!\in\! \ipath \cap \Reads$, $\ipath {=} \ipath_1 \cat [\rmap {\txid_r} \rs n] \cat \ipath_2$, $\rs(\x) {=} w$ and $\for{\rs'\!, m} \rmap {\txid_r} {\rs'\!} m \nin \ipath_2$.
As such, from $\wfipath \ipath$ and \ref{wf:external_read} we have $\wfread {\txid_r} {\rs} n {\ipath_1}$ and thus from the definition of $\wfread {\txid_r} {\rs} n {\ipath_1}$ 
we know there exists $s', s, m$ such that $\wsuc {s'} n \in \ipath_1$, $m \leq n$, $\tx w {=} \tx{s} {=} \txid_w$ and $w \plt{\ipath_1} \wsuc{s} m$ 
and thus by definition we have $(\txid_w, m) \in \swriters$. 
There are now two cases to consider: 
i) there exists $k$ such that $(\txid_r, k) \in \swriters$; or 
ii) $\for k (\txid_r, k) \nin \swriters$.

In case (i) since $(\txid_r, k) \in \swriters$ we know there exists $s_r$ such that $\tx{s_r} {=} \tx r {=} \txid_r$ and $\wsuc {s_r} k \in \ipath$. As such, since $\ipath {=} \ipath_1 \cat [\rmap {\txid_r} \rs n] \cat \ipath_2$, from $\wfipath \ipath$ and \eqref{wf:read_before_end2} we know $\wsuc {s_r} k \in \ipath_2$. Consequently, since $\wsuc{s} m \in \ipath_1$ (from $w \plt{\ipath_1} \wsuc{s} m$), we have $\wsuc s m \plt{\ipath} \wsuc{s_r} k$. 
As such, from $\wfipath \ipath$ and \eqref{wf:end_write_alloc_in_order1} we have $m < k$. 
Finally, since $(\txid_w, m) \in \swriters$, $(\txid_r, k) \in \swriters$ and $m < n$, from the definition of $\peo(\ipath)$ we have $(\txid_w, \txid_r) \in \peo(\ipath)$, as required.

In case (ii), since $(w, r) \in G.\rf$, we know there exists $s_r \in \Succs \cap G.\Events$ such that $\tx {s_r} {=} \txid_r$. As such, from the definition of $G.\Events$ we have $s_r \in \Succs \cap \ipath$.
Consequently, from the assumption of the case and since $\ipath {=} \ipath_1 \cat [\rmap {\txid_r} \rs n] \cat \ipath_2$, $\rs(\x) {=} w$ and $\for{\rs'\!, m} \rmap {\txid_r} {\rs'\!} m \nin \ipath_2$ we have $(\txid_r, n) \in \esreaders$. 
As such, since $(\txid_w, m) \in \swriters$, $(\txid_r, n) \in \esreaders$ and $m \leq n$, from the definition of $\peo(\ipath)$ we have $(\txid_w, \txid_r) \in \peo(\ipath)$ and by definition $(\txid_w, \txid_r) \in \teo(\ipath)$, as required.\\

\noindent\textbf{RTS. \eqref{goal:hb_mot}}\\*
Follows immediately from the definition of $G.\mot$.\\

\noindent\textbf{RTS. \eqref{goal:hb_rbt}}\\*
Pick an arbitrary $a, b \!\in\! G.\Events$ such that $(a, b) \!\in\! G.\rbt; [\VSet]$. 
That is, there exists $w, r, w_r \!\in\! G.\Events, \txid, \txid', \x$ such that $(w_r, r) \!\in\! G.\rf$, $(w_r, w) \in \mo$, $w \in G.\Events \cap \VSet$, $\tx r {=} \tx{a} {=} \txid_r$, $\tx w {=} \tx b {=} \txid_w$, $\txid_r {\ne} \txid_w$ and $\loc{w_r} = \loc w {=} \loc r {=} \x$. 
As  $w \in G.\Events \cap \VSet$, from \cref{lem:vis_equals_succ} we know  $w \in G.\Events \cap \SSet$.
There are now two cases to consider: 
i) $\tx{w_r} {=} \txid_r$; or
ii) $\tx{w_r} {\ne} \txid_r$.
In case (i), by definition we have $(w_r, w) \!\in\! \mot$ and thus from part \eqref{goal:hb_mot} we have $(\txid_r, \txid_w) \!\in\! \teo(\ipath)$, as required. 

In case (ii) since $(w_r, r) \!\in\! G.\rf$, $\tx r {=} \tx{a} {=} \txid_r$  $\tx{w_r} {=} \txid$ and $\tx{w_r} {\ne} \txid_r$, by definition we have $(w_r, r) \!\in\! G.\rft$.
As such, from \cref{lem:rft_from_succ} we know $w_r \!\in\! G.\Events \cap \SSet$. 
Therefore, since  $w, w_r \!\in\! G.\Events \cap \SSet$, from \cref{lem:aux} and the definition of $\swriters$ we know there exist $s_1, s_2 \in \Succs, n_1, n_2$ such that $(\txid, n_1), (\txid_w, n_2) \in \swriters$, $\wsuc {s_1} {n_1}, \wsuc {s_2} {n_2} \in \ipath$, $\tx {s_1} {=} \txid$ and $\tx{s_2} {=} \txid_w$. 
As such, since $(w_r, w) \in \mo$ and $(\txid, n_1), (\txid_w, n_2) \in \swriters$, from the definitions of $\mo$, $\teo(\ipath)$ and $\peo(\ipath)$ we have $n_1 < n_2$ and thus from $\wfipath \ipath$ and \eqref{wf:end_write_alloc_in_order1} (and since $\plt \ipath$ is a total order), we have $\wsuc {s_1} {n_1} \plt{\ipath} \wsuc {s_2} {n_2}$.
Moreover, from $\wfipath \ipath$ and \eqref{wf:event_before_end2} we know $w_r \plt{\ipath} \wsuc {s_1} {n_1}$ and $w \plt{\ipath} \wsuc {s_2} {n_2}$. 

On the other hand, as $\wfipath \ipath$, $\makeG \ipath {=} G$, $\tx r {=} \txid_r$, $\tx{w_r} {\ne} \txid_r$ and $(w_r, r) \in G.\rf$, from \cref{lem:rf} we know $(w_r, r) \in \mathsf{ERF}_{\txid_r}$. 
That is, there exists $\ipath_1, \ipath_2, \rs, n$ such that 
$\ipath {=} \ipath_1 \cat [\rmap {\txid_r} \rs n] \cat \ipath_2$, $\rs(\x) {=} w_r$ and $\for{\rs', m} \rmap {\txid_r} {\rs'} m \nin \ipath_2$. 
As such, from $\wfipath \ipath$ and \eqref{wf:external_read} we have $\wfread {\txid_r} {\rs} n {\ipath_1}$. 
Consequently, since $w \plt{\ipath} \wsuc {s_1} {n_1}$, from \eqref{wf:nodups}, \eqref{wf:unique_succ3} and $\wfread {\txid_r} {\rs} n {\ipath_1}$ we know $w_r \plt{\ipath_1} \wsuc {s_1}{n_1}$
and $n_1 \leq n$. 

From the definitions of $\swriters$, $\esreaders$ and $\ereaders$ we know there exist $s \in \Succs, m$ such that either: 
a) $(\txid_r, m) \!\in\! \swriters$, $\tx s {=} \txid_r$, and $\wsuc s m \!\in\! \ipath$; or
b) $(\txid_r, n) \in \esreaders \cup \ereaders$ and $\for k (\txid_r, k) \!\nin\! \swriters$.
%
Since $\tx r {=} \txid_r$, from $\wfipath \ipath$, \eqref{wf:read_before_end2}, and \eqref{wf:ext_read_in_order1} we know either
a) $(\txid_r, m) \in \swriters$, $\wsuc s m \in \ipath$, $\rmap {\txid_r} \rs n \plt{\ipath_2} \wsuc s m$ and $n < m$; or
b) $(\txid_r, n) \in \esreaders \cup \ereaders$, $\for k (\txid_r, k) \nin \swriters$ and $\rmap {\txid_r} \rs n \plt{\ipath_2} s$.

In case (a) there are two cases to consider: 
A) $m < n_2$; or
B) $n_2 \leq m$. 
In case (A), as $(\txid_r, m) \in \swriters$, $(\txid_w, n_2) \in \swriters$ and $m < n_2$, by definition we have $(\txid_r, \txid_w) \in \peo(\ipath) \suq \teo(\ipath)$, as required.
In case (B), we arrive at a contradiction as follows.

As $\wsuc s m, \wsuc {s_2} {n_2} \in \ipath$ and $s \ne s_2$ (because $\tx s {=} \txid_r$ while $\tx{s_2} {=} \txid_w$ and $\txid_r {\ne} \txid_w$), from $\wfipath \ipath$ and \eqref{wf:unique_succ3} we have $n_2 {\ne} m$ and thus from the assumption of case (B) we have $n_2 < m$.   
As $\wsuc s m \!\in\! \ipath$, $\ipath {=} \ipath_1 \cat [\rmap {\txid_r} \rs n] \cat \ipath_2$, $\for{\rs', m} \rmap {\txid_r} {\rs'} m \nin \ipath_2$, $\rmap {\txid_r} \rs n \plt{\ipath_2} \wsuc s m$, from $\wfipath \ipath$, \eqref{wf:nodups}, \eqref{wf:unique_succ3}, \eqref{wf:ext_read_before_succ} we know $\ipath {=} \ipath_1 \cat [\rmap {\txid_r} \rs n] \cat [\wsuc s m] \cat -$ and $n {=}  m {-} 1$. 
As such, since $n_2 < m$ and $n {=}  m {-} 1$ we have $n_2 \leq n$. 
Consequently, as $\ipath {=} \ipath_1 \cat [\rmap {\txid_r} \rs n] \cat \ipath_2$, $n_2 \leq n$ and $\wsuc {s_2} {n_2} \in \ipath$ (from $w \plt{\ipath} \wsuc {s_2} {n_2}$), from totality of $\plt \ipath$, $\wfipath \ipath$ and \eqref{wf:ext_read_in_order1} we have $\wsuc {s_2} {n_2} \plt{\ipath} \rmap {\txid_r} \rs n$, \ie since $\ipath {=} \ipath_1 \cat [\rmap {\txid_r} \rs n] \cat \ipath_2$ and $w \plt{\ipath} \wsuc {s_2} {n_2}$ we have $w \plt{\ipath_1} \wsuc {s_2} {n_2}$. 
That is, we have $\rs(\x) {=} w_r$, $\tx{s_1} {=} \tx {w_r} {=} \txid$, $w_r \plt{\ipath_1} \wsuc {s_1} {n_1} $, $n_1 \leq n$, $w \in \Writes \cup \Allocs$, $\loc w {=} \x$, $\tx w {=} \tx{s_2} {=} \txid_w$, $w \plt{\ipath_1} \wsuc {s_2} {n_2}$ and $n_1 < n_2$, contradicting the last conjunct of $\wfread {\txid_r} \rs n {\ipath_1}$.

In case (b) there are similarly two cases to consider: 
A) $n < n_2$; or
B) $n_2 \leq n$. 
In case (A), as $(\txid_r, n) \in \esreaders \cup \ereaders$, $(\txid_w, n_2) \in \swriters$ and $n < n_2$, by definition we have $(\txid_r, \txid_w) \in \peo(\ipath) \suq \teo(\ipath)$, as required.
In case (B), we arrive at a contradiction as follows. 
As $\ipath {=} \ipath_1 \cat [\rmap {\txid_r} \rs n] \cat \ipath_2$, $n_2 \leq n$ and $\wsuc {s_2} {n_2} \in \ipath$ (from $w \plt{\ipath} \wsuc {s_2} {n_2}$), from totality of $\plt \ipath$, $\wfipath \ipath$ and \eqref{wf:ext_read_in_order1} we have $\wsuc {s_2} {n_2} \plt{\ipath} \rmap {\txid_r} \rs n$, \ie since $\ipath {=} \ipath_1 \cat [\rmap {\txid_r} \rs n] \cat \ipath_2$ and $w \plt{\ipath} \wsuc {s_2} {n_2}$ we have $w \plt{\ipath_1} \wsuc {s_2} {n_2}$. 
That is, we have $\rs(\x) {=} w_r$, $\tx{s_1} {=} \tx {w_r} {=} \txid$, $w_r \plt{\ipath_1} \wsuc {s_1} {n_1} $, $n_1 \leq n$, $w \in \Writes \cup \Allocs$, $\loc w {=} \x$, $\tx w {=} \tx{s_2} {=} \txid_w$, $w \plt{\ipath_1} \wsuc {s_2} {n_2}$ and $n_1 < n_2$, contradicting the last conjunct of $\wfread {\txid_r} \rs n {\ipath_1}$. 
\\

\noindent\textbf{RTS. \eqref{goal:hb_trans}}\\*
Follows from \eqref{goal:hb_clo}--\eqref{goal:hb_rbt} and the transitivity of $\teo(\ipath)$ (\cref{lem:teo_strict_total_order}).
\end{proof}

\begin{lemma}
\label{lem:dynamic_opacity}
For all $\ipath$ and $G$, if $\wfipath \ipath$ holds and $G {=} \makeG{\ipath}$, then $G$ is dynamically opaque. 
\end{lemma}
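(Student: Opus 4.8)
The plan is to verify \cref{def:dynamic_opacity} directly for $G = \makeG{\ipath}$: that $G$ is opaque (\cref{def:opacity}) and that $G.(\Writes \cap \VSet) \suq \rng([\Allocs \cap \VSet]; G.\mo)$. I would first invoke \cref{lem:wf_execution} to record that $G$ is a well-formed execution, so that all the notation of \cref{def:opacity} is available. The final conjunct of dynamic opacity is then exactly \eqref{goal:dai_mo} of \cref{lem:dynamic_abort_invisibility}, and the \ref{ax:vis_rf} axiom $\dom(G.\rft) \suq G.\VSet$ is exactly \eqref{goal:dai_rf} of the same lemma, so both are immediate.

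For the \ref{ax:ext} axiom I would argue by contradiction: if $\clo \cup \rft \cup \mot \cup (\rbt;[\VSet])$ had a cycle, its transitive closure would contain some pair $(a, a)$, whence \eqref{goal:hb_trans} of \cref{lem:hb_acyclic} yields $(\tx a, \tx a) \in \teo(\ipath)$, contradicting the irreflexivity of the strict total order $\teo(\ipath)$ (\cref{lem:teo_strict_total_order}). This reduces the acyclicity requirement entirely to the transitive bound already established.

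The remaining obligation is the \ref{ax:int1} axiom $\rfI \cup \moI \cup \rbI \suq \po$, which I would prove componentwise. For $\rfI$: a same-transaction $\rf$ edge is, by \cref{lem:rf}, an $\mathsf{IRF}$ edge, i.e. $\intr r w \in \ipath$, and \eqref{wf:internal_read} gives $w \plt{\ipath} \intr r w$ with $\tx w {=} \tx r$, so $(w,r) \in \po$ by the construction of $\makeG{}$. For $\moI$: a same-transaction $\mo_\x$ edge cannot arise from the $\teo(\ipath)$-disjunct (which forces distinct transactions), so it comes from the $\plt{\ipath}$-disjunct, again giving a $\po$ edge.

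The hard case is $\rbI \suq \po$, where $(r, w) \in \rbI$ unfolds to $\tx r {=} \tx w {=} \txid$ together with some $w'$ with $(w', r) \in \rf$ and $(w', w) \in \mo$ on a common location $\x$. I would split on whether $r$ reads internally or externally (via \cref{lem:rf}). In the internal case ($\tx{w'} {=} \txid$, $\intr r {w'} \in \ipath$) the $\mo$ edge must use the $\plt{\ipath}$-disjunct, so $w' \plt{\ipath} w$; then the ``no intervening same-transaction write to $\x$'' clause of \eqref{wf:internal_read}, with totality of $\plt{\ipath}$, forces $\intr r {w'} \plt{\ipath} w$ and hence $(r,w) \in \po$. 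In the external case ($\tx{w'} \ne \txid$) the $\mo$ edge must use the $\teo(\ipath)$-disjunct; here I would use the last-read-map characterisation of $\mathsf{ERF}_\txid$ together with \eqref{wf:external_read} and the definition of $\wfreadname$: its ``no $\txid$-write/alloc to $\x$ before the read map'' clause shows the $\txid$-write $w$ cannot lie in $\ipath_1$, so the read map precedes $w$, while \eqref{wf:total_external_read} places $r$ before that same (last) read map; chaining these gives $r \plt{\ipath} w$ and so $(r,w) \in \po$. I expect this external $\rbI$ subcase to be the \emph{main obstacle}, since it is where the deferred-update read-map bookkeeping of \DDTMS must be reconciled with the $\teo$-derived modification order and the path order of the read, requiring the most delicate use of the well-formedness axioms.
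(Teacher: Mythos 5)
Your proposal is correct and follows essentially the same route as the paper's own proof: it discharges \ref{ax:vis_rf} and the allocation-validity condition immediately from \cref{lem:dynamic_abort_invisibility}, obtains \ref{ax:ext} by contradiction from \cref{lem:hb_acyclic} and the irreflexivity of $\teo(\ipath)$ (\cref{lem:teo_strict_total_order}), and proves \ref{ax:int1} componentwise via \cref{lem:rf} and the well-formedness axioms, including the same internal/external split for $\rbI$. The only cosmetic difference is that the paper phrases the $\rbI$ subcases as contradictions (using \eqref{wf:internal_read} and $\wfreadname$ to rule out the wrong path ordering) where you derive the required ordering directly, and the paper additionally discharges an extra intra-transactional condition that is not part of the stated \cref{def:opacity}.
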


\begin{proof}
Pick arbitrary $\ipath$ and $G$ such that $\wfipath \ipath$ and $G {=} \makeG{\ipath}$. We are required to show: 
\begin{align}
	& G.\rfI \suq G.\po  \label{goal:opacity_rfI} \\
	& G.\moI \suq G.\po  \label{goal:opacity_moI} \\
	& G.\rbI \suq G.\po  \label{goal:opacity_rbI} \\
	& G.\tin{(\tex \mo; \tex \rf)\,} = \emptyset \label{goal:opacity_int2} \\
	& G.(\clo \cup \rft \cup \mot \cup \rbt; [\VSet]) \text{ is acyclic} \label{goal:opacity_hb} \\
	&\dom(G.\rft) \suq G.\VSet  \label{goal:opacity_rf} \\
	& G.(\Writes \cap \VSet) \suq \rng\big([G.(\Allocs \cap \VSet)]; G.\mo\big) \label{goal:opacity_mo} 
\end{align}

\noindent\textbf{RTS. \eqref{goal:opacity_rfI}}\\*
Pick arbitrary $w, r$ such that $(w, r) \in G.\rfI$. 
That is, there exists $\txid$ such that $\tx w {=} \tx r {=} \txid$ and $(w, r) \in G.\rf$. 
As such, since $\wfipath \ipath$, $G {=} \makeG{\ipath}$ and $(w, r) \in G.\rf$, from \cref{lem:rf} we know $\intr r w \in \ipath$. 
Therefore, from $\wfipath \ipath$ and \eqref{wf:internal_read} we know $w \plt{\ipath} \intr r w$. 
Consequently, since $\tx w {=} \tx r$, from the definition of $G.\po$ we have $(w, r) \in G.\po$, as required.\\

\noindent\textbf{RTS. \eqref{goal:opacity_moI}}\\*
Pick an arbitrary $w, w'$ such that $(w, w') \in G.\moI$. 
That is, there exists $\txid$ such that $\tx w {=} \tx {w'} {=} \txid$ and $(w, w') \in G.\mo$. 
Let $\loc w {=} \loc{w'} {=} \x$. 
From the definitions of $G.\mo$, $G.\mo_\x$, the totality of $\plt \ipath$ and since $\tx w {=} \tx {w'} {=} \txid$ we then know 
$w \plt{\ipath} w'$. 
Consequently, since $\tx w {=} \tx{w'}$, from the definition of $G.\po$ we have $(w, w') \in G.\po$, as required.\\

\noindent\textbf{RTS. \eqref{goal:opacity_rbI}}\\*
Pick arbitrary $r, w$ such that $(r, w) \in G.\rbI$. 
That is, there exists $w_r \in G.\Events, \txid$ such that $\tx r {=} \tx {w} {=} \txid$, $(w_r, r) \in G.\rf$ and $(w_r, w) \in  G.\mo$.
Let $\loc r {=} \loc w {=} \x$. 
From the definition of $G.\Events$ we know $w \in \ipath$. 
Moreover, from \cref{lem:rf} and the definitions of $\mathsf{IRF}$ and $\mathsf{ERF}_\txid$ we know either 
i) $\tx {w_r} {=} \tx r$ and $\intr r {w_r} \in \ipath$; or
ii) $\tx {w_r} {\ne} \tx r$, $r \!\in\! \Reads \cap \ipath$ and there exists $\exsts{\rs, \ipath_1, \ipath_2, n}$ such that 
$\ipath {=} \ipath_1 \cat [\rmap \txid \rs n] \cat \ipath_2$, 
$\land \rs(\x) {=} w_r$ and
$\for{\rs'\!, m} \rmap \txid {\rs'\!} m \nin \ipath_2$.

In case (i), since $w, \intr r {w_r} \in \ipath$, from the totality of $\plt \ipath$ we know either 
a) $\intr r {w_r} \plt{\ipath} w$; or 
b) $w \plt{\ipath} \intr r {w_r}$. 
In case (i.a), as $\tx r {=} \tx {w}$, from the definition of $G.\po$ we have $(r, w) \in G.\po$, as required.
Case (i.b) cannot arise as we arrive at a contradiction as follows. As $\tx {w_r} {=} \tx r$ (the assumption of case i) and $\tx r {=} \tx {w}$ we have $\tx {w_r} {=} \tx w$. 
Consequently, since $(w_r, w) \in G.\mo$, from the definition of $G.\mo$ we know $w_r \plt{\ipath} w$.
We then have $w \in \Writes \cup \Allocs$, $\loc r {=} \loc w {=} \x$,  $\tx r {=} \tx {w}$ and $w_r \plt{\ipath} w \plt{\ipath} \intr r {w_r}$, contradicting the \eqref{wf:internal_read} property of $\wfipath \ipath$.

In case (ii), as $w, r \in \ipath$, from the totality of $\plt \ipath$ we know either 
a) $r \plt{\ipath} w$; or 
b) $w \plt{\ipath} r$. 
In case (ii.a), as $\tx r {=} \tx {w}$, from the definition of $G.\po$ we have $(r, w) \in G.\po$, as required.
Case (ii.b) cannot arise as we arrive at a contradiction as follows. 
As $\tx r {=} \tx {w} {=} \txid$ and $\ipath {=} \ipath_1 \cat [\rmap \txid \rs n] \cat \ipath_2$, from $\wfipath \ipath$ and \eqref{wf:external_read} we have $\wfread \txid \rs n {\ipath_1}$. 
Moreover, as $\ipath {=} \ipath_1 \cat [\rmap \txid \rs n] \cat \ipath_2$, $\for{\rs'\!, m} \rmap \txid {\rs'\!} m \nin \ipath_2$ and $r \in \Reads \cap \ipath$, from $\wfipath \ipath$ and \eqref{wf:total_external_read} we know $r \plt \ipath \rmap \txid \rs n$. 
As such, from $w \plt{\ipath} r$ we also have $w \plt \ipath \rmap \txid \rs n$. 
Consequently, since $\ipath {=} \ipath_1 \cat [\rmap \txid \rs n] \cat \ipath_2$, we have $w \!\in\! \ipath_1$.
We then have $w \in \Writes \cup \Allocs$, $\loc w {=} \loc{w_r} {=} \x$, $\tx w {=} \tx r {=} \txid$ and $w \in \ipath_1$, contradicting $\wfread \txid \rs n {\ipath_1}$.\\ 


\noindent\textbf{RTS. \eqref{goal:opacity_int2}}\\*
Let us proceed by contradiction ans assume there exit $w, r$ such that $(w, r) \in G.\tin{(\tex \mo; \tex \rf)\,}$. 
That is, there exists $\txid, \txid'\!, w'\!$ such that $w, w'\! \!\in\! \Writes \cup \Allocs$, $\tx w {=} \tx r {=} \txid$, $\tx{w'} {=} \txid'\!$, $\txid {\ne} \txid'\!$, $(w, w') \in G.\mo$, $(w', r) \in G.\rf$ and $\loc w {=} \loc r {=} \loc{w'}$. 
By definition we Then have $(w', r) \in G.\rft$ and $(w, w') \in G.\mot$. 
As such, from \cref{lem:hb_acyclic} we have $(t', t) \in \teo(\ipath)$ and $(t, t') \in \teo(\ipath)$. Consequently, from the transitivity of $\teo(\ipath)$ (since $\teo(\ipath)$ is an order \cref{lem:teo_strict_total_order}) we have $(t, t) \in \teo(\ipath)$, leading to a contradiction as $\teo(\ipath)$ is a strict total order (\cref{lem:teo_strict_total_order}) and is thus irreflexive. 
\\

\noindent\textbf{RTS. \eqref{goal:opacity_hb}}\\*
We proceed by contradiction. 
Let us assume there exists $a$ such that $(a, a) \in \transC{G.(\clo \cup \rft \cup \mot \cup \rbt)}$. 
From \cref{lem:hb_acyclic} (part \eqref{goal:hb_trans}) we then know $(\tx a, \tx a) \in \teo(\ipath)$.
This, however, leads to a contradiction as from \cref{lem:teo_strict_total_order} we know $\teo(\ipath)$ is irreflexive and thus $(\tx a, \tx a) \nin \teo(\ipath)$.\\

\noindent\textbf{RTS. \eqref{goal:opacity_rf} and \eqref{goal:opacity_mo}}\\*
Follows immediately from \cref{lem:dynamic_abort_invisibility}.
%
%
%
\end{proof}

%

%

\begin{theorem}
For all $\pcmap, \smap, \mems, \sigma$, 
if $\sigma$ denotes the execution $\pcmap_0, \smap_0, \mems_0 \redConf{}^* \pcmap, \smap, \mems$, 
then the ordered history ${\it OH}_\sigma$ satisfies \DDO.
\end{theorem}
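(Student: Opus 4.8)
The plan is to reduce the statement to the instrumented \DDTMS machinery developed above and then to invoke \cref{lem:dynamic_opacity}. First I would apply \cref{lem:ddtms_refines_instrumented_ddtms} to the given \DDTMS execution $\sigma$: this yields a reachable instrumented configuration together with a path $\ipath$ such that ${\it OH}_\sigma = \getHist \ipath$. By \cref{lem:wf_path} this path is well-formed, i.e.\ $\wfipath \ipath$ holds. Thus it suffices to show that for \emph{every} well-formed path the induced ordered history $\getHist \ipath$ is dynamically and durably opaque in the sense of \cref{def:history_dur_opacity}.

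Next I would unfold the history-level definition of \DDO. By \cref{def:history_dur_opacity}, $\getHist \ipath$ is dynamically durably opaque iff the history obtained by deleting its crash markers is dynamically opaque, and by \cref{def:history_dyn_opacity} the latter holds iff every prefix of the crash-free history admits witnesses $\rf,\mo$ making it dynamically opaque (\cref{def:dynamic_opacity}). I would discharge the crash-deletion step using \cref{lem:wf_path_crashes_removed}: writing $\ipath'' \eqdef \ipath \setminus \CrashMarkers$, this proposition gives $\wfipath{\ipath''}$, and $\getHist{\ipath''}$ is exactly the crash-free history. For the prefix-closure step, given any prefix $\hist_p$ of $\getHist{\ipath''}$ I would select the corresponding prefix $\ipath_p$ of $\ipath''$ (the least initial segment whose events are precisely those of $\hist_p$, extended by the trailing auxiliary path entries so that $\rf$ and $\mo$ are fully witnessed); by \cref{lem:wf_path_prefix}, $\wfipath{\ipath_p}$ holds.

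For each such $\ipath_p$ I would take $G_p \eqdef \makeG{\ipath_p}$ as the witnessing execution. \cref{lem:execution} and \cref{lem:wf_execution} ensure $G_p$ is a well-formed execution, and \cref{lem:dynamic_opacity} ensures it is dynamically opaque. The remaining work is to check that $G_p$ is of the exact shape required by \cref{def:history_dyn_opacity}, namely $(\Events_p, \internal{\pteo}, \clo(\hist_p), \rf_p, \mo_p)$: that the program order of $\makeG{\ipath_p}$ coincides with the intra-thread restriction $\internal{\pteo}$ of the history's total order (using that events of one transaction are by one thread, so same-transaction and same-thread agree), and that the client order of $\makeG{\ipath_p}$ coincides with $\clo(\hist_p)$ as induced by the history --- both following from the definitions of $\po$ and $\clo$ in $\makeG$ together with \cref{lem:clo_external}.

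I expect the main obstacle to be this last gluing step: establishing the precise correspondence between a prefix of the ordered history and a well-formed prefix \emph{path}, and verifying that the declaratively constructed $\po$ and $\clo$ of $\makeG{\ipath_p}$ recover exactly the relations $\internal{\pteo}$ and $\clo(\hist_p)$ that the history-level definition demands. Care is needed in choosing $\ipath_p$ so that it both generates the right event set and order \emph{and} retains enough auxiliary entries to reproduce the intended $\rf$ and $\mo$ witnesses; once this bookkeeping is settled, dynamic opacity of each prefix follows immediately from \cref{lem:dynamic_opacity}, completing the proof.
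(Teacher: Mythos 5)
Your proposal is correct and takes essentially the same approach as the paper's proof: reduce to the instrumented system via \cref{lem:ddtms_refines_instrumented_ddtms}, strip crash markers with \cref{lem:wf_path_crashes_removed}, pass to a well-formed prefix path via \cref{lem:wf_path_prefix}, and conclude by applying \cref{lem:dynamic_opacity} to $\makeG{\ipath_p}$. The ``gluing'' step you flag as the main obstacle is precisely what the paper discharges directly from the definition of $\makeG{\ipath_p}$ (which yields $\Events_p = G.\Events$, $\internal{(\pteo_p)} = G.\po$ and $\clo(\hist_p) = G.\clo$), so your outline matches the paper's argument step for step.
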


\begin{proof}
Pick arbitrary $\pcmap, \smap, \mems, \sigma$ such that $\sigma$ denotes the execution $\pcmap_0, \smap_0, \mems_0 \redConf{}^* \pcmap, \smap, \mems$.
From \cref{lem:ddtms_refines_instrumented_ddtms} we know there exist
$\ismap, \imems, \ipath, \hist$ such that $\pcmap_0, \ismap_0, \imems_0, [] \redConf{}^* \pcmap, \ismap, \allowbreak \imems, \ipath$
and ${\it OH}_\sigma = \getHist \ipath = (\Events, \pteo)$.
We must then show that the history $\hist$ is dynamically and durably opaque. 
Let $(\Events', \pteo') = (\Events \setminus \CrashMarkers, \coerce\pteo{\Events \setminus\CrashMarkers})$. 
From the definition of dynamic, durable opacity we must show $(\Events', \pteo')$ is dynamically opaque. 

Let $\ipath' {=} \ipath \setminus \CrashMarkers$. From \cref{lem:wf_path_crashes_removed} we know $\wfipath{\ipath'}$ holds. 
As $(\Events, \pteo) {=} \getHist \ipath$ and $(\Events', \pteo') {=} (\Events \setminus \CrashMarkers, \coerce\pteo{\Events \setminus\CrashMarkers})$, by definition of $\getHist .$ and $\ipath'$ we then know $(\Events', \pteo') = \getHist{\ipath'}$. 

Pick an arbitrary prefix $(\Events_p, \teo_p)$ of $(\Events', \pteo')$. 
As $(\Events', \pteo') = \getHist{\ipath'}$, from the definition of history prefixes we know there exists $\ipath_p$ such that $(\Events_p, \pteo_p) = \getHist{\ipath_p}$ and $\ipath_p$ is a prefix of $\ipath'$. 
As such, since $\wfipath{\ipath'}$, from \cref{lem:wf_path_prefix} we know $\wfipath{\ipath_p}$ holds. 
Let $G = \makeG{\ipath_p}$.
From the definition of $\makeG{\ipath_p}$ we then know that $\Events_p = G.\Events$, $\internal{(\pteo_p)} = G.\po$ and $\clo(\hist_p) = G.\clo$. 
That is, there exists $\rf, \mo$ such that $G = (\Events_p, \internal{(\pteo_p)}, \clo(\hist_p), \rf, \mo)$. 
Consequently, since $\wfipath{\ipath_p}$ and $G \eqdef \makeG{\ipath_p}$, from \cref{lem:dynamic_opacity} we know $G$ is dynamically opaque, as required.
\end{proof}
\section{Additional Definitions for Section \ref{sec:volatile_framework} and Relation to Original Opacity Definition}
\label{app:add_graphs}

\subsection{Well-formednes of Execution Graphs}
\label{app:wf}

\begin{definition}[Well-formedness]
\label{def:wf_execution}
An execution $(\Events, \po, \clo, \rf, \mo)$ is \emph{well-formed} iff $(\Events, \po)$ is well-formed. 
A pair $(\Events, \po)$ is well-formed iff: 
\begin{enumerate}[label=(\arabic*)]
	\item Events of the same transaction are by the same thread ($\st \suq \po \cup \inv{\po}$), and 
	are not interleaved by events of other transactions in that they form a contiguous block in $\po$ ($\st \cap (\tex{\po}; \tex{\po}) = \emptyset$).
	\label{wf:same_tx_same_tid}
%
	\item Each $\txid$ contains exactly one begin event (\ie $\size{\Events_{\txid} \cap \Begins} {=} 1$) denoted by $\be$, such that $\be$ is the $\po$-minimal event in $\txid$ (\ie $\be \not\in \rng(\poI)$).
	\label{wf:begin}
	\item Each $\txid$ contains at most a single abort event ($\size{\Events_{\txid} \cap \Aborts} \leq 1$), a single commit event ($\size{\Events_{\txid} \cap \Commits} \leq 1$), and a single success event ($\size{\Events_{\txid} \cap \Succs} \leq 1$), denoted by $\abe$, $\ce$ and $\se$, respectively.
	Moreover, $\abe$ and $\se$ (if they exists) are $\po$-maximal in $\txid$ ($\abe \not\in \dom(\poI)$ and $\se \not\in \dom(\poI)$). 
	\label{wf:single_abort_commit_success}
%
	\item A commit event may only be followed by an abort or success event (\ie $\rng([C]; \tin\po) \suq \Aborts \cup \Succs$), 
	and $\se$ (if it exists) is the immediate $\po$ successor of $\ce$ (\ie $\se \in \Events \Rightarrow (\ce, \se) \in \imm\po$).
	\label{wf:only_success_or_abort_after_commit}
%
	 \item 
	Each thread contains at most one pending or commit-pending transaction which is also the last transaction of the thread: 
	$\rng([\PSet \cup \CPSet]; \tex{\po}) = \emptyset$.
	 \label{wf:at_most_one_live}
	 \item Finally, each location $\x$ may be allocated at most once within a successfully committed transaction:
	$\for{\x} \size{\Events \cap \Allocs_\x \cap \SSet} \leq 1$.
	\label{wf:single_alloc}
\end{enumerate}
\end{definition}

\subsection{Relation to Original Opacity Definition}
\label{app:org_opacity}

The different sets of events induced by an execution $G$
are applied for a history $\hist = (\Events, \teo)$ as expected:
\begin{enumerate}
\item $\hist.\CSet=\set{e\in\Events \mid \exists a\in\Events\ldotp \lLAB(a)=\commitL \land \tup{a,e}\in\st}$.
\item $\hist.\ASet=\set{e\in\Events \mid \exists a\in\Events\ldotp \lLAB(a)=\abortL \land \tup{a,e}\in\st}$.
\item $\hist.\SSet=\set{e\in\Events \mid \exists a\in\Events\ldotp \lLAB(a)=\succL \land \tup{a,e}\in\st}$.
\item $\hist.\CPSet \eqdef \hist.\CSet \setminus (\hist.\ASet \cup \hist.\SSet)$.
\item $\hist.\PSet \eqdef \Events \setminus (\hist.\CPSet \cup \hist.\ASet \cup \hist.\SSet)$.
\end{enumerate}

\begin{definition}
A \emph{completion} of a history $\hist= (\Events, \teo)$ is any history $\hist' =(\Events', \teo')$
such that the following hold:
\begin{itemize}
\item $\hist$ is a prefix of $\hist'$.
\item For every $e\in \Events' \setminus \Events$,
we have $\lLAB(e)\in \set{\abortL,\succL}$.
\item $\hist'.\CPSet \cup \hist'.\PSet = \emptyset$.
\end{itemize}
\end{definition}

\begin{definition}
A history $\hist= (\Events, \teo)$ is \emph{sequential} if 
for every $a,b,c\in \Events$, 
if $\tup{a,c}\in \st$, $\tup{a,b}\in \teo$, and $\tup{b,c}\in \teo$,
then $\tup{a,b}\in \st$.
\end{definition}

\begin{definition}
A transaction $\txid \in \TXIDs$ is \emph{legal} in a sequential history 
$\hist= (\Events, \teo)$ if every read in $\hist$ reads either internally or from the value of the latest
successful write, that is:
For every $r\in \Events$ with $\lLAB(r)=\readL x v$ and $\lTX(r)=\txid$
there exists $w \in \hist.\SSet  \cup \set{e \mid \lTX(e) = \txid}$ with $\lLAB(w)=\writeL x v$
such that the following hold:
\begin{itemize}
\item $\tup{w,r}\in \teo$.
\item No $w'\in \hist.\SSet \cup \set{e \in \Events \mid \lTX(e) = \txid}$ with $\lLAB(w)=\writeL x \_$
has $\tup{w,w'}\in \teo$ and $\tup{w',r}\in \teo$.
\end{itemize}
\end{definition}

\begin{definition}
Two histories $\hist= (\Events, \teo)$ and $\hist'= (\Events', \teo')$
are \emph{equivalent} if $\Events=\Events'$ and
$\teo_i = \teo'_i$.
\end{definition}

\begin{definition}
A history $\hist= (\Events, \teo)$ is \emph{crashless} if 
$\Events \cap \CrashMarkers = \emptyset$.
\end{definition}

\begin{definition}
A crashless history $\hist$ is \emph{final-state opaque} if 
there exists a sequential history $\hist_S$ such that
$\hist_S$ is equivalent to some completion $\hist'$ of $\hist$,
$\clo(\hist) \suq \clo(\hist_S)$,
and every transaction is legal in $\hist_S$.
\end{definition}

\begin{definition}
A crashless history $\hist$ is \emph{``original'' opaque} if 
every prefix $\hist'$ of $\hist$ is final-state opaque.
\end{definition}

\begin{definition}
A history $\hist$ is \emph{``original'' durably opaque} if 
$(\Events \setminus \CrashMarkers, \coerce \teo {\Events \setminus \CrashMarkers})$ is opaque.
\end{definition}

\begin{theorem}
\label{thm:org_opacity}
A crashless history $\hist= (\Events, \teo)$ is final-state opaque iff
there exist $\rf$ and $\mo$ such that the execution $(\Events, \teo_i, \clo(\hist), \rf, \mo)$ is opaque (\cref{def:opacity}).
\end{theorem}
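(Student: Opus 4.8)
The plan is to use the total order in which transactions are serialised as the bridge between the two characterisations: a sequential completion orders the transaction blocks linearly, whereas graph opacity produces such an order by topologically sorting the acyclic relation of axiom \ref{ax:ext}. I would prove the two implications separately, in each case translating a linear transaction order $\prec$ back and forth between the operational and declarative worlds.

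For the forward direction ($\Rightarrow$), suppose $\hist$ is final-state opaque, witnessed by a sequential completion $\hist_S$ with $\clo(\hist) \suq \clo(\hist_S)$ and every transaction legal. Since $\hist_S$ is sequential, its transactions occupy contiguous $\teo_S$-blocks, yielding a strict total order $\prec$ on $\TXIDs$. From $\prec$ I would define $\mo$ by ordering, for each location, the writes on it first by $\prec$ on their transactions and then by $\po$ within a transaction, and define $\rf$ by sending each read $r$ to the write supplied by legality --- the internal write, or the $\teo_S$-latest preceding successful write. I would then discharge the three axioms of \cref{def:opacity}. Axiom \ref{ax:vis_rf} follows because legality routes every external read to a successful write, so $\dom(\rft) \suq \SSet \suq \VSet$; axiom \ref{ax:int1} follows because $\rf$, $\mo$, and hence $\rb$, restricted within a transaction respect $\po$ by construction; and axiom \ref{ax:ext} follows by showing that each of $\clo$, $\rft$, $\mot$, and $\rbt;[\VSet]$ lifts to an inclusion in $\prec$ on transactions, whence their union is acyclic because $\prec$ is.

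For the backward direction ($\Leftarrow$), suppose there are $\rf, \mo$ making $G = (\Events, \teo_i, \clo(\hist), \rf, \mo)$ opaque. By axiom \ref{ax:ext} the relation $\mathsf{hb} \eqdef \clo \cup \rft \cup \mot \cup (\rbt;[\VSet])$ is acyclic, so its lift to transactions admits a linear extension $\prec$. I would build $\hist_S$ by listing the transactions in $\prec$-order with each transaction's events kept in $\po$-order, and form a completion $\hist'$ by appending a success event to every visible (successful or commit-pending-read-from) transaction and an abort event to every remaining pending or commit-pending transaction. By construction $\hist_S$ is sequential and equivalent to $\hist'$, and $\clo(\hist) \suq \clo(\hist_S)$ since $\prec$ extends $\clo$ lifted to transactions. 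Legality of each read $r$ with $(w,r) \in \rf$ then reduces to two claims: that $w$ precedes $r$ in $\teo_S$ (immediate from \ref{ax:int1} internally, and externally from $(w,r)\in\rft\subseteq\mathsf{hb}$ giving $\tx w \prec \tx r$, where \ref{ax:vis_rf} guarantees $w$'s transaction becomes successful in $\hist'$), and that no successful write $w'$ on the same location lies strictly between $w$ and $r$ in $\teo_S$.

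The main obstacle, common to both directions, is this last \emph{latest-writer} argument, which is exactly where $\rb$ does its work. If such an intervening visible write $w'$ existed, then $(w,w') \in \mo$ (as $\mo$ agrees with $\prec$ on distinct transactions) gives $(r,w') \in \rb$, and since $w'$ is visible we get $(r,w') \in \rbt;[\VSet] \suq \mathsf{hb}$, forcing $\tx r \prec \tx{w'}$; but $w' \prec_{\teo_S} r$ forces $\tx{w'} \prec \tx r$, contradicting irreflexivity of $\prec$. I expect the delicate part to be the careful bookkeeping around this argument --- handling internal versus external reads uniformly, ensuring the completion assigns success exactly to the $\VSet$ transactions, and checking that legality's ``latest successful write'' coincides with the $\mo$-maximal visible write below $r$ --- while the axiom verifications and the construction of $\hist_S$ are otherwise routine.
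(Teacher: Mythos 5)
Your proposal is correct and follows essentially the same route as the paper's proof: in the forward direction you build $\rf$ from legality's latest-successful/internal write and $\mo$ from the serialisation order, then show the whole relation of \ref{ax:ext} embeds in that order; in the backward direction you complete the history by committing exactly the read-from commit-pending transactions and aborting the rest, linearly extend the acyclic relation, and discharge legality via the same $\rb$-into-$\VSet$ contradiction the paper uses. The bookkeeping you flag (internal vs.\ external reads, intervening same-transaction writes) is precisely the case analysis the paper carries out, so no new ideas are needed.
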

\begin{proof}
$(\Rightarrow)$ Suppose that $\hist= (\Events, \teo)$ is final-state opaque.
Let $\hist_S= (\Events_S, \teo_S)$ be a a sequential history such that
$\hist_S$ is equivalent to some completion $\hist'$ of $\hist$,
$\clo(\hist) \suq \clo(\hist_S)$,
and every transaction is legal in $\hist_S$.
Let:
\begin{itemize}
\item $\rf=\set{\tup{w_r,r} \mid r\in\Reads}$, where 
for every read event $r\in\Events$, $w_r$ is the $\teo_S$-maximal
write to $\lLOC(r)$ in $\hist_S.\SSet \cup  \set{e \in \Events_S \mid \lTX(e) = \lTX(r)}$ that is $\teo_S$-before $r$;
and \item $\mo$ be the restriction of $\teo_S$ to writes on the same location.
\end{itemize} 

Let $\CPRFSet \eqdef \dom([\hist.\CPSet]; \rft)$
and  $\VSet \eqdef \hist.\SSet \cup \hist.\CPRFSet$.
We show that the conditions of \cref{def:opacity} hold
for $(\Events, \teo_i, \clo(\hist), \rf, \mo)$:

\begin{itemize}
\item $\rfI \cup \moI \cup \rbI \subseteq \teo_i $:
Directly follows from our construction.


\item 
$(\clo(\hist) \cup \rft \cup \mot \cup (\rbt;[\VSet]))$ is acyclic:
Observe that
$\clo(\hist) \cup \rft \cup \mot \cup (\rbt;[\VSet]))\suq \teo_S$.
Indeed, $\clo(\hist) \cup \rft \cup \mot \suq \teo_S$
directly follows from our construction.
For $\rbt;[\VSet]$, let $\tup{e_1,e_2}\in \rbt$ such that $e_2\in\VSet$.
Let $\tup{r,w}\in \rb \setminus \st$ such that $\tup{e_1,r}\in \st$ and $\tup{e_2,w}\in \st$.
Let $w'$ such that $\tup{w',r}\in \rf$ and $\tup{w',w}\in \mo$.
Then, our constriction ensures that $w'$ is $\teo_S$-before $r$ and $w$.
Since  $e_2\in\VSet$, we also have $w\in\VSet$, and having
$w$ $\teo_S$-before $r$ would contradict the 
the $\teo_S$-maximality of $w'$ (the $\rf$ source of $r$).
Thus, we have $\tup{r,w}\in \teo_S$,
and since $\hist_S$ is sequential, it implies $\tup{e_1,e_2}\in \teo_S$ as well.
\item $\dom(\rft) \suq \VSet$:
By definition we only have $\rf$ edge intra-transaction or from events in $\hist_S.\SSet$.
Since $\hist_S$ is equivalent to a completion of $\hist$, the latter events must be also in 
$\VSet = \hist.\SSet \cup \hist.\CPRFSet$.
\end{itemize}

\medskip

$(\Leftarrow)$ 
Let $\hist= (\Events, \teo)$.
Let $\rf$ and $\mo$ such that the execution $G=(\Events, \teo_i, \clo(\hist), \rf, \mo)$ is opaque.
We show that $\hist$ is final-state opaque.

Let $\hist'=(\Events', \teo')$ be the completion of $\hist$ obtained by
successfully committing every commit-pending transactions in $\hist$ with 
some event $e\in \dom(\rft)$,
and aborting all other pending transactions.

Let $\CPRFSet = \dom([\CPSet]; \rft)$ and
$\VSet = \SSet \cup \CPRFSet$.

Let $R$ be any strict total order on $\Begins \cap \Events$ (begin events in $\Events$) extending 
$(\clo(\hist) \cup \rft \cup \mot \cup (\rbt;[\VSet]))$.
We define $\teo_S \defeq \st\cap \teo_i \cup (R \setminus \st)$.
Let $\hist_S=(\Events', \teo_S)$.

It is easy to see that $\hist_S$ is a sequential history equivalent to $\hist'$ and that $\clo(\hist) \suq \clo(\hist_S)$.
We show that every transaction is legal in $\hist_S$.

Let $r\in \Events'$ with $\lLAB(r)=\readL x v$ and $\lTX(r)=\txid$.
Let $w$ be the $\rf$-source of $r$ in $G$.
Then, $\lLAB(w)=\writeL x v$.
We consider two possible cases:
\begin{itemize}
\item $\tup{w,r}\in \st$:
In this case, $\tup{w,r}\in \rfI$.
The opacity of $G$ ensures that $\tup{w,r}\in \teo_i$.
Since $\st \cap \teo_i\suq \teo_S$, we also have $\tup{w,r}\in \teo_S$.
Now, suppose for contradiction that there exists 
$w'\in \hist.\SSet \cup \set{e \in \Events \mid \lTX(e) = \txid}$ with $\lLAB(w)=\writeL x \_$,
$\tup{w,w'}\in \teo_S$, and $\tup{w',r}\in \teo_S$.
Since $\hist_S$ is sequential, we must have $\tup{w,w'}\in\st$.
Since $G$ is opaque, we have $\rfI \cup \moI \cup \rbI \subseteq \teo_i$,
from which it follows that $\tup{w,w'}\in\moI$.
Thus, we have $\tup{r,w'}\in\rbI$, and so $\tup{r,w'}\in\teo_i$,
which contradicts the fact that $\tup{w',r}\in \teo_S$.
\item $\tup{w,r}\nin \st$:
In this case, $\tup{w,r}\in \rft$,
and since $\dom(\rft) \suq \VSet$, our construction ensures
that $w \in \hist_S.\SSet$.
Moreover, since $\rft\suq R$, we have $\tup{w,r}\in \teo_S$.
Now, suppose for contradiction that there exists 
$w'\in \hist.\SSet \cup \set{e \in \Events \mid \lTX(e) = \txid}$ with $\lLAB(w)=\writeL x \_$,
$\tup{w,w'}\in \teo_S$, and $\tup{w',r}\in \teo_S$.
Consider three cases:
\begin{itemize}
\item $\tup{w,w'}\in \st$: Since $\tup{w,w'}\in \teo_S$, we must have $\tup{w,w'}\in (\teo_S)_i$,
and since $\moI\suq (\teo_S)_i$, we have $\tup{w,w'}\in\moI$.
It follows that $\tup{w',r}\in \rbt ; [\VSet]$ and $\tup{w',r}\in \rft$.
This contradicts the acyclicity of $(\clo(\hist) \cup \rft \cup \mot \cup (\rbt;[\VSet]))$.
\item $\tup{w,w'}\nin \st$ and $\tup{w',r}\in \st$:
If $\tup{w',w}\in \mo$, we obtain an $\rft \cup \mot$ cycle.
Thus, we have $\tup{w,w'}\in \mo$, which implies $\tup{r,w'}\in\rbI$.
The opacity of $G$ ensures that  $\tup{r,w'}\in(\teo_S)_i$,
which again contradicts the fact that $\tup{w',r}\in \teo_S$.
\item $\tup{w,w'}\nin \st$ and $\tup{w',r}\nin \st$:
In this case we have $w'\in \hist.\SSet$,
and so $w' \in \VSet$. Since $\tup{w,w'}\in \teo_S$
and $R \setminus \st \suq \teo_S$,
it follows that $\tup{w,w'} \in \mo$.
Hence we have $\tup{r,w'}\in\rbt ;[\VSet]$,
which contradicts the fact $\tup{w',r}\in \teo_S$
(since $\rbt;[\VSet]) \suq R$ and $R \setminus \st \suq \teo_S$).
\end{itemize}
\end{itemize}
\end{proof}

The following follows directly from \cref{thm:org_opacity} since both the original definition
and \cref{def:history_dur_opacity} define their durable counterparts of opacity by simply ignoring crashes.

\begin{corollary}
  A history $\hist$ is ``original'' durably opaque iff it is durably
  opaque (\cref{def:history_dur_opacity}).
\end{corollary}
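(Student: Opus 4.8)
The plan is to reduce the durable statement to the crashless equivalence already established in \cref{thm:org_opacity}. Observe first that both notions in the corollary are obtained by the \emph{same} crash-erasure: ``original'' durable opacity requires the crashless history $\hist^- \eqdef (\Events \setminus \CrashMarkers, \coerce{\teo}{\Events \setminus \CrashMarkers})$ to be ``original'' opaque, while durable opacity (\cref{def:history_dur_opacity}) requires the very same $\hist^-$ to be opaque in the sense of \cref{def:history_opacity___old}. Hence it suffices to prove that these two crashless notions coincide on $\hist^-$.

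To do this I would unfold both sides prefix-wise. By definition, $\hist^-$ is ``original'' opaque iff every prefix of $\hist^-$ is final-state opaque, and $\hist^-$ is opaque in the sense of \cref{def:history_opacity___old} iff every prefix of $\hist^-$ admits some $\rf, \mo$ making the induced execution opaque (\cref{def:opacity}). Since a prefix of a crashless history is again crashless, \cref{thm:org_opacity} applies to each prefix $\hist_p$ of $\hist^-$ and supplies exactly the equivalence ``$\hist_p$ is final-state opaque iff $\hist_p$ induces an opaque execution''. Universally quantifying this equivalence over all prefixes yields that $\hist^-$ is ``original'' opaque iff it is opaque (\cref{def:history_opacity___old}), and the corollary follows.

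There is essentially no obstacle here: the one substantive ingredient — the equivalence between final-state opacity and the execution-graph characterisation of opacity — is precisely \cref{thm:org_opacity}. The only care needed is bookkeeping: confirming that the two durable definitions erase crash markers identically, and that the per-prefix equivalence lifts cleanly under the universal quantifier over prefixes (using that prefixes of a crashless history remain crashless, so that \cref{thm:org_opacity} is applicable to each of them).
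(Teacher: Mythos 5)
Your proposal is correct and follows essentially the same route as the paper: the paper's proof likewise observes that both durable notions arise from the identical crash-erasure and then invokes \cref{thm:org_opacity}, which it describes as following ``directly'' since that theorem supplies the per-prefix equivalence between final-state opacity and the execution-graph characterisation. Your explicit bookkeeping (prefixes of crashless histories remain crashless, and the equivalence lifts under the universal quantifier over prefixes) is exactly the detail the paper leaves implicit.
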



\end{document}